\global\long\def\R{\mathbb{R}}
\providecommand{\tabularnewline}{\\}
\newtheorem{thm}{Theorem}
\theoremstyle{definition}
 \newtheorem{example}{\protect\examplename}
\theoremstyle{definition}
\newtheorem{defn}{\protect\definitionname}
\theoremstyle{remark}
\newtheorem{rem}{\protect\remarkname}
\theoremstyle{remark}
\theoremstyle{remark}
\DeclareMathOperator{\argmax}{argmax}
\providecommand{\claimname}{Claim}
\providecommand{\definitionname}{Definition}
\providecommand{\examplename}{Example}
\providecommand{\remarkname}{Remark}
\begin{document}
\global\long\def\alp{\alpha}%
\global\long\def\gam{\gamma}%
\global\long\def\Gam{\Gamma}%
\global\long\def\del{\delta}%
\global\long\def\Del{\Delta}%
\global\long\def\eps{\varepsilon}%
\global\long\def\lam{\lambda}%
\global\long\def\kap{\kappa}%
\global\long\def\sig{\sigma}%
\global\long\def\Sig{\Sigma}%
\global\long\def\vrp{\varphi}%
\global\long\def\tht{\theta}%
\global\long\def\omg{\omega}%
\global\long\def\Omg{\Omega}%

\global\long\def\til#1{\tilde{#1}}%
 
\global\long\def\wtil#1{\widetilde{#1}}%
 
\global\long\def\what#1{\widehat{#1}}%

\global\long\def\C{\mathbb{C}}%
 
\global\long\def\N{\mathbb{N}}%
 
\global\long\def\Q{\mathbb{Q}}%
 
\global\long\def\R{\mathbb{R}}%
 
\global\long\def\Z{\mathbb{Z}}%
\global\long\def\HC{\mathbb{\mathcal{H}}}%
\global\long\def\M{\mathbb{\mathcal{M}}}%
\global\long\def\W{\mathbb{\mathcal{W}}}%

\global\long\def\bb{\boldsymbol{b}}%
\global\long\def\fb{\boldsymbol{f}}%
\global\long\def\ii{\boldsymbol{i}}%
\global\long\def\mm{\boldsymbol{m}}%
\global\long\def\nn{\boldsymbol{n}}%
\global\long\def\NN{\boldsymbol{N}}%
\global\long\def\pp{\boldsymbol{p}}%
\global\long\def\ss{\boldsymbol{s}}%
\global\long\def\tb{\boldsymbol{t}}%
\global\long\def\uu{\boldsymbol{u}}%
\global\long\def\vv{\boldsymbol{v}}%
\global\long\def\VV{\boldsymbol{V}}%
\global\long\def\ww{\boldsymbol{w}}%
\global\long\def\WW{\boldsymbol{W}}%
\global\long\def\xx{\boldsymbol{x}}%
\global\long\def\XX{\boldsymbol{X}}%
\global\long\def\yy{\boldsymbol{y}}%
\global\long\def\YY{\boldsymbol{Y}}%
\global\long\def\zz{\boldsymbol{z}}%
\global\long\def\ZZ{\boldsymbol{Z}}%
\global\long\def\alpp{\boldsymbol{\alpha}}%
\global\long\def\betaa{\boldsymbol{\beta}}%
\global\long\def\epss{\boldsymbol{\eps}}%
\global\long\def\pii{\boldsymbol{\pi}}%

\global\long\def\ra{\rightarrow}%
 
\global\long\def\la{\leftarrow}%
 
\global\long\def\lra{\longrightarrow}%
 
\global\long\def\Lra{\Longrightarrow}%
\global\long\def\rla{\leftrightarrow}%
\global\long\def\ds{\text{{diag(\ensuremath{\ss})}}}%
\global\long\def\BC{Barber and Cand\'es}%
\global\long\def\sign{\text{{sign}}}%
\global\long\def\diag{\text{{diag}}}%
\global\long\def\supp{\text{{support}}}%

\global\long\def\supsec{Supplementary Section}%
\global\long\def\supfig{Supplementary Figure}%

\title{Multiple competition-based FDR control for peptide detection}

\author{Kristen Emery$^1$, Syamand Hasam$^1$, William Stafford Noble$^2$, Uri Keich$^1$\\
$^1$School of Mathematics and Statistics F07\\
University of Sydney\\
$^2$Departments of Genome Sciences and of Computer Science and Engineering\\
University of Washington\\
}

\maketitle

\begin{abstract}
	Competition-based FDR control has been commonly used for over a decade
	in the computational mass spectrometry community~\citep{elias:target}.
	Recently, the approach has gained significant popularity in other fields after \cite{barber:controlling} 
	laid its theoretical foundation in a more general setting that included the feature selection problem.
	In both cases, the competition is based on a head-to-head comparison between an observed
	score and a corresponding decoy / knockoff.
	\cite{keich:progressive} recently demonstrated some advantages of using multiple rather than
	a single decoy when addressing the problem of assigning peptide sequences to observed mass spectra.
	In this work, we consider a related problem --- detecting peptides based on a collection of mass spectra --- and we
	develop a new framework for competition-based FDR control using multiple null scores. Within this framework, we offer several
	methods, all of which are based on a novel procedure that rigorously controls the
	FDR in the finite sample setting. Using real data to study the peptide detection problem we show that,
	relative to existing single-decoy methods, our approach can increase the number of discovered peptides
	by up to 50\% at small FDR thresholds.
\end{abstract}

\noindent
{\sc Keywords:} multiple hypothesis testing, peptide detection, tandem mass spectrometry, false discovery rate

\section{Introduction}

Proteins are the primary functional molecules in living cells, and
tandem mass spectrometry (MS/MS) currently provides the most efficient means of studying proteins in a high-throughput fashion. 
Knowledge of the protein complement in a cellular population provides insight into the functional state of the cells.
Thus, MS/MS can be used to functionally characterize cell types, differentiation stages, disease states, or species-specific differences.
For this reason, MS/MS is the driving technology for much of the rapidly growing field of proteomics.

Paradoxically, MS/MS does not measure proteins directly.
Because proteins themselves are difficult to separate and manipulate biochemically, an MS/MS experiment involves first digesting proteins into smaller pieces, called ``peptides.''  The peptides are then measured directly.  A typical MS/MS experiment generates $\sim$10 observations (``spectra'') per second, so a single 30-minute MS/MS experiment will generate approximately 18,000 spectra.
Canonically, each observed spectrum is generated by a single peptide.
Thus, the first goal of the downstream analysis is to infer which peptide was responsible for generating each observed spectrum.
The resulting set of detected peptides can then be used, in a second analysis stage, to infer what proteins are present in the sample.

In this work, we are interested in the first problem --- peptide detection. 
Specifically, we focus on the task of assigning confidence estimates to peptides that have been identified by MS/MS.
As is common in many molecular biology contexts, these confidence estimates are typically reported in terms of the false discovery rate (FDR),
i.e., the expected value of the proportion of false discoveries among a set of detected peptides.
For reasons that will be explained below, rather than relying on standard methods for control of the FDR such as the \cite{benjamini:controlling} (BH) procedure the proteomics field employs a strategy known as ``target-decoy competition'' (TDC) to control the FDR in
the reported list of detected peptides~\citep{elias:target}.
TDC works by comparing the list of peptides detected with a list of artificial peptides, called ``decoys,'' detected using the same spectra set.
The decoys are created by reversing or randomly shuffling the letters of the real (``target'') peptides.
The TDC protocol, which is described in detail in Section~\ref{sec:TDC}, estimates the FDR by counting the number of detected decoy peptides and using this count as an estimate for the number of incorrectly detected target peptides.

One clear deficiency of TDC is its reliance on a single set of decoy peptides to estimate the FDR.
Thus, with ever increasing computational resources one can ask whether we can gain something by exploiting multiple randomly drawn decoys for each target peptide.
\cite{keich:progressive} and \cite*{keich:averaging} recently described such a procedure, called ``average target-decoy competition'' (aTDC), that, in the context of the related spectrum identification problem (described in Section~\ref{sec:shotgun}), reduces the variability associated with TDC and can provide a modest boost in power.

In this paper we propose a new approach to using multiple decoy scores.
The proposed procedure relies on a direct competition between
the target and its corresponding decoy scores, rather than on averaging single competitions. We formulate our approach
in the following more general setting.
Suppose that we can compute a test statistic $Z_{i}$ for each null hypothesis $H_{i}$,
so that the larger $Z_{i}$ is, the less likely is the null. However,
departing from the standard multiple hypotheses setup, we further assume that we cannot
compute p-values for the observed scores. Instead, we can only generate
a \emph{small} sample of independent decoys or competing null scores for each
hypothesis $H_{i}$: $\til Z_{i}^{j}$ $j=1,\dots,d$ (Definition \ref{iidDecoys}).
Note that the case $d=1$ corresponds to the TDC setup described above.
We will show using both simulated and real data that the novel method we propose yields more power (more discoveries)
than our aforementioned averaging procedure.

In addition to the peptide detection problem, our proposed procedure is applicable in several other bioinformatics applications.  For example, the procedure could be used when analyzing a large number
of motifs reported by a motif finder, e.g., \cite{harbison:transcriptional},
where creating competing null scores can require the time consuming
task of running the finder on randomized versions of the input sets,
e.g., \cite{ng:gimsan}.  In addition, our procedure is applicable to controlling the FDR in selecting
differentially expressed genes in microarray experiments where a small
number of permutations is used to generate competing null scores~\citep*{Tusher:Significance}.

Our proposed method can also be viewed as a generalization of the ``knockoff'' procedure of \cite{barber:controlling}.
The knockoff procedure is a competition-based FDR control method that was initially developed for 
feature selection in a classical linear regression model.
The procedure has gained a lot of interest
in the statistical and machine learning communities, where it has 
has been applied to various applications in biomedical research~\citep{xiao:mapping, gao:model, read:predicting} and has been extended to work in conjunction with deep neural networks~\citep{lu:deeppink} and with time series data~\citep{fan:ipad}.
Despite the different terminology, both knockoffs and decoys serve the same purpose in competition-based FDR control; hence, for the ideas presented in this paper the two are interchangeable. 

A significant part of \BC' work is the sophisticated construction of their knockoff scores;
controlling the FDR then follows exactly the same competition
that TDC uses.  Indeed, their Selective SeqStep+ (SSS+) procedure rigorously formalizes in a much more general setting
the same procedure described above in the context of TDC.

Note that \BC\ suggested that using multiple knockoffs could improve the power of their procedure so the methods
we propose here could provide a stepping stone toward that. However, we would still need to figure out
how to generalize their construction from one to multiple knockoffs.

\section{Background}

By way of background, we begin with a brief description of the mass spectrometry protocol and outline how the resulting spectra are assigned to peptides. We then discuss existing methods for statistical confidence estimation, and we introduce the peptide detection problem which is the focus of our work.

\subsection{Shotgun proteomics and spectrum identification}\label{sec:shotgun}

In a ``shotgun proteomics'' MS/MS experiment, proteins in a complex biological sample are extracted and digested into peptides, each with an associated charge.
These charged peptides, called ``precursors,'' are measured by the mass spectrometer, and a subset of the precursors
are then selected for further fragmentation into charged ions, which are detected and recorded by a second round of mass spectrometry
\citep{hernandez:automated,noble:computational}.  The recorded tandem fragmentation spectra, or spectra for short, are
then subjected to computational analysis.

This analysis typically begins with the spectrum identification problem, which involves inferring which peptide was responsible for
generating each observed fragmentation spectrum. 
The most common solution to this problem is peptide database search.
Pioneered by SEQUEST~\citep{eng:approach}, the search engine extracts from the peptide database all ``candidate peptides,''
defined by having their mass lie within a pre-specified tolerance of the measured precursor mass. The quality of the match between each one of these candidate peptides and the observed fragmentation
spectrum is then evaluated using a score function. Finally, the optimal peptide-spectrum match (PSM) for the given spectrum
is reported, along with its score~\citep{nesvizhskii:survey}.

In practice, many expected fragment ions will fail to be observed for any given spectrum, and the spectrum is also likely to contain a
variety of additional, unexplained peaks~\citep{noble:computational}.
Hence, sometimes the reported PSM is correct --- the peptide assigned to the spectrum was
present in the mass spectrometer when the spectrum was generated --- and sometimes the PSM is incorrect.
Ideally, we would report only the correct PSMs, but obviously we are not privy to this information: all we have is the score of the PSM,
indicating its quality.
Therefore, we report a thresholded list of top-scoring PSMs, together with the critical estimate of the fraction of incorrect
PSMs in our reported list.

\subsection{False discovery rate control in spectrum identification}\label{sec:TDC}

The general problem of controlling the proportion of false discoveries has been studied extensively
in the context of multiple hypotheses testing (MHT). Briefly reviewed in Section~\ref{sec:MHT} below,
this setup does not apply directly to the spectrum identification problem.
A major reason for that is
the presence in any shotgun proteomics dataset of both ``native spectra''
(those for which their generating peptide is in the target database) and ``foreign spectra''
(those for which it is not). These create different types of false positives, implying
that we typically cannot apply FDR controlling procedures that were designed
for the general MHT context to the spectrum identification problem~\citep{keich:controlling}.
% This is because the presence in any shotgun proteomics dataset of both ``native spectra''
% (those for which their generating peptide is in the target database) and ``foreign spectra''
% (those that do not) implies that we typically cannot use FDR controlling procedures that were designed
% for a general multiple hypothesis testing context to control the FDR in the spectrum identification problem \citep{keich:controlling}.

Instead, the mass spectrometry community uses TDC to control the FDR in the
reported list of PSMs~\citep{elias:target,cerqueira:mude,jeong:false,elias:target2}.
TDC works by comparing searches against a target peptide database with searches against a decoy
database of peptides obtained from the original database by randomly
shuffling (or reversing) each peptide in the target database.

More precisely, let $Z_i$ be the score of the optimal match (PSM) to the $i$th spectrum in the target database,
and let $\til Z_{i}$ be the corresponding optimal match in the decoy database.
Each decoy score $\til Z_{i}$ directly competes with its corresponding
target score $Z_{i}$ for determining the reported list of discoveries.
Specifically, for each score threshold $T$ we only report target
PSMs that won their competition: $Z_{i}>\max\{T,\til Z_{i}\}$.
Subsequently, the number of decoy wins ($\til Z_{i}>\max\{T,Z_{i}\}$) is used to
estimate the number of false discoveries in the list of target wins.
Thus, the ratio between that estimate and the number of target wins
yields an estimate of the FDR among the target wins. To control the
FDR at level $\alp$ we choose the smallest threshold $T=T(\alpha)$
for which the estimated FDR is still $\le\alpha$.

It was recently
shown that, assuming that incorrect PSMs are independently equally likely to come
from a target or a decoy match and provided we add 1 to
the number of decoy wins before dividing by the number of target wins, this procedure
rigorously controls the FDR~\citep{he:theoretical,levitsky:unbiased}.

\subsection{The peptide detection problem}

The spectrum identification is largely used as the first step in addressing 
the peptide identification problem that motivates the research presented here. %\footnote{
% Unfortunately, our new multiple-competition approach to control the FDR cannot be applied to the spectrum identification
% problem because of the distinction between foreign and native spectra that we alluded to above:
% the null exchangeability property of Definition \ref{def:cond_exch} below does not hold.}
Indeed, to identify the peptides we begin, just like we do in spectrum identification,
by assigning each spectrum to the unique target/decoy peptide
which offers the best match to this spectrum in the corresponding database. We then assign to each target peptide
a score $Z_j$ which is the maximum of all PSM scores of spectra that were assigned to this peptide in the first phase.
Similarly, we assign to the corresponding decoy peptide a score $\til Z_j$, which again is the maximum of all
PSM scores involving spectra that were assigned to that decoy peptide. The rest continues using the same TDC protocol we
outlined above for the spectrum identification problem~\citep{granholm:determining,savitski:scalable}.
%, and a similar approach can be used to address
%the protein detection problem

\section{Controlling the FDR using multiple decoys\label{sec:Multiple-decoys-FDR}}

% Our procedures roughly fall into three categories, with the first category including
% procedures that depend on two tuning parameters that are selected
% without looking at the data, thus guaranteeing rigorous control the
% FDR in the finite sample setting. The proof of this finite sample
% FDR control heavily relies on the original work of \cite{barber:controlling}
% and subsequent work by \cite{lei:power}. The second
% category includes methods that are in some sense discrete versions
% of the methods proposed by \cite*{storey:strong}, and
% as we show, in the limit of $d\ra\infty$ (when you can actually
% compute p-values) our procedures indeed converge to those p-value
% based methods. Finally, the third category, and ultimately our overall recommended procedure, introduces a novel data-driven
% bootstrap approach that tries to select the optimal procedure for a given data set and FDR threshold.

\subsection{Why do we need a new approach?}
\label{sec:MHT}

In the multiple hypotheses testing (MHT) set up we simultaneously test $m$ (null)
hypotheses $H_{1},\dots,H_{m}$, looking to reject as many as possible
subject to some statistical control of our error rate.
Pioneered by Benjamini and Hochberg, the common approach to controlling
the error rate in the MHT context is through bounding the expected proportion of false
discoveries at any desired level $\alpha\in(0,1)$.

More precisely,
assume we have a selection procedure that produces a list of $R$
discoveries of which, unknown to us, $V$ are false, and let $Q=V/\max\left\{ R,1\right\}$
be the unobserved false discovery proportion (FDP).
\cite{benjamini:controlling} showed that applying their selection procedure (BH) at level
$\alpha$ controls $E(Q)$, the false discovery rate (FDR): $E(Q)\le\alpha$.

Other, more powerful selection procedures that rely on estimating
$\pi_{0}$, the fraction of true null hypotheses, are available. Generally
referred to as ``adaptive BH'' procedures, with one particularly
popular variant by Storey,
these procedures are also predicated on our ability to assign a p-value
to each of our tested hypotheses \citep*[e.g.,][]{benjamini:adaptive,benjamini:adaptiveLinear,storey:direct,storey:strong}.
Hence, in particular they cannot be directly applied in our competition-based setup.

A key feature of our problem is that due to computational costs the number of decoys, $d$, is small.
Indeed, if we are able to generate a large number of independent decoys for each hypothesis, then we
can simply apply the above standard FDR controlling procedures to
the empirical p-values. These p-values are estimated from the
empirical null distributions, which are constructed for each hypothesis
$H_{i}$ using its corresponding decoys.
Specifically, these empirical p-values take values of the
form $(d_1-r_i+1)/d_{1}$, where $d_{1}=d+1$, and $r_i\in\left\{ 1,\dots,d_{1}\right\}$
is the rank of the originally observed score
(``original score'' for short) $Z_{i}$ in the combined
list of $d_{1}$ scores: $\left(\til Z_{i}^{0}=Z_{i},\til Z_{i}^{1},\dots,\til Z_{i}^{d}\right)$ ($r_i=1$ is the lowest rank).
Using these p-values the BH procedure rigorously controls
the FDR, and Storey's method will asymptotically control the FDR as the number of hypotheses
$m\ra\infty$.

Unfortunately, because $d$ is small, applying those standard FDR control procedures to
the rather coarse empirical p-values may yield very low power. For example,
if $d=1$, each empirical p-value is either 1/2 or 1, and therefore
for many practical examples both methods will not be able to make
any discoveries at usable FDR thresholds.

Alternatively, one might consider pooling all the decoys regardless
of which hypothesis generated them. The pooled empirical p-values attain
values of the form $i/\left(m\cdot d+1\right)$ for $i=1,\dots,md+1$; hence,
particularly when $m$ is large, the p-values generally no longer
suffer from being too coarse. However, other significant problems arise when pooling the decoys.
These issues --- discussed in \supsec~\ref{sec:Failures-of-established} ---
imply that in general, applying BH or Storey's procedure to p-values that are
estimated by pooling the competing null scores can be problematic
both in terms of power and control of the FDR.

\subsection{A novel meta-procedure for FDR control using multiple decoys}
\label{sec:metaProc}

The main technical contribution of this paper is the introduction of several procedures
that effectively control the FDR in our multiple competition-based setup and that rely on the following meta-procedure.\\

\noindent
\emph{Input:} an original/target score $Z_i$ and $d$ competing null scores $\til Z_{i}^{j}$ for each null hypothesis $H_i$.\\

\noindent
\emph{Parameters:} an FDR threshold $\alp\in(0,1)$, two tuning parameters $c=i_c/d_1$ ($d_1=d+1$), the ``original/target win'' threshold,
and $\lam=i_\lam/d_1$, the ``decoy win'' threshold where $i_\lam,i_c\in\{1,\dots,d\}$ and $c \le \lam$, as well as
a (possibly randomized) mapping function $\vrp:\{1,\dots,d_{1}-i_{\lam}\} \mapsto\{d_{1}-i_{c}+1,\dots,d_{1}\}$.\\

\noindent
\emph{Procedure:}
\begin{enumerate}	
\item Each hypothesis $H_i$ is assigned an original/decoy win label:
\begin{equation}
L_{i}=\begin{cases}
1 & r_{i}\ge d_{1}-i_{c}+1 \qquad\text{(original win)}\\
0 & r_{i}\in\left(d_{1}-i_{\lam},d_{1}-i_{c}+1\right) \qquad\text{(ignored hypothesis)}\\
-1 & r_{i}\le d_{1}-i_{\lam} \qquad\text{(decoy win)}
\end{cases},\label{eq:def_Li_general_c_lam}
\end{equation}
where $r_{i}\in\{1,\dots,d_1\}$ is the rank of the original score when added to the list of its $d$ decoy scores.
% $i_c$ is the ``original/target win'' ($L_i=1$) threshold and $i_\lam\ge i_c$ is the ``decoy win'' ($L_i=-1$) threshold.
\item Each hypothesis $H_i$ is assigned a score $W_{i}=\til Z_{i}^{\left(s_{i}\right)}$, where $\til Z_i^{(j)}$ is the $j$th order statistic or the $j$th
largest score among $\left(\til Z_{i}^{0}=Z_{i},\til Z_{i}^{1},\dots,\til Z_{i}^{d}\right)$, and the ``selected rank'', $s_{i}$, is defined as
\begin{equation}\label{def_s_i}
s_i=\begin{cases}
r_i & L_i=1 \text{ (so $W_i=Z_i$ in an original win)} \\
u_i & L_i = 0 \text{ (where $u_i$ is randomly chosen uniformly in $\left\{ d_{1}-i_{c}+1,\dots,d_{1}\right\}$)}\\
\vrp(r_i) & L_i=-1 \text{ (so $W_i$ coincides with a decoy score in a decoy win)} 
\end{cases}
\end{equation}
% \begin{equation}
% s_i=\Bigg\{
% \begin{aligned}
% r_i &\qquad L_i=1  & & \qquad\text{($W_i=Z_i$ in an original win)} \\
% s_i\in\{d_{1}-i_{c}+1,\dots,d_{1}\} &\qquad L_i=-1 & & \qquad\text{($W_i$ coincides with a decoy score in a decoy win)}
% \end{aligned}
% \end{equation}

\item The hypotheses are reordered so that $W_{i}$ are decreasing, and the list of discoveries is defined as
the subset of original wins $D(\alp,c,\lam)\coloneqq\left\{ i\,:\,i\le i_{\alp c\lam},L_{i}=1\right\}$, where
\begin{equation}
i_{\alp c\lam}\coloneqq\max\left\{ i\,:\,\frac{1+\#\left\{ j\le i\,:\,L_{j}=-1\right\} }{\#\left\{ j\le i\,:\,L_{j}=1\right\} \vee1}\cdot\frac{c}{1-\lam}\le\alp\right\} .\label{eq:reject_criterion-general-c-lam}
\end{equation}
\end{enumerate}
We assume above that all ties in determining the ranks $r_i$, as well as the order of $W_i$, are broken randomly, although other ways to handle ties are possible (e.g., Section 8.3 in our technical report \citep{emery:multiple}).

Note that the hypotheses for which $L_i=0$ can effectively be ignored as
they cannot be considered discoveries nor do they factor in the numerator of \eqref{eq:reject_criterion-general-c-lam}.

Our procedures vary in how they define the (generally randomized) mapping function $\vrp$ (and hence $s_i$ in \eqref{def_s_i}), as
well as in how they
set the tuning parameters $c,\lam$. For example, in the case $d=1$ setting $c=\lam=1/2$ and $\vrp(1)\coloneqq2$ our meta-procedure
coincides with TDC.
For $d>1$ we have increasing flexibility with $d$, but one obvious generalization of TDC is to set $c=\lam=1/d_1$. In this case
$L_i=1$ if the original score is larger than all its competing decoys and otherwise $L_i=-1$. Thus, by definition,
$\vrp$ is constrained to the constant value $d_1$ so $s_i\equiv d_1$ and $W_i$ is always set to 
$Z_{i}^{(d_1)}=\max\left\{ \til Z_{i}^{0},\dots,\til Z_{i}^{d}\right\}$. Hence we refer to this as the ``max method.''
As we will see, the max method controls the FDR, but this does not hold for any choice of $c,\lam$ and $\vrp$.
The following section specifies a sufficient condition on $c,\lam$ and $\vrp$ that guarantees FDR control.

\subsection{Null labels conditional probabilities property}
\begin{defn}
	Let $N$ be the indices of all true null hypotheses. We say the null labels conditional probabilities property (NLCP) is
	satisfied if
	% the conditional distribution of the random labels $L_i$ for $i\in N$ is iid
	conditional on all the scores $\W=(W_1,\dots W_m)$ the random labels $\{L_i\,:\,i\in N\}$ are (i)
	independent and identically distributed (iid) with $P(L_i=1\mid\W) = c$ and $P(L_i=-1\mid\W) = 1-\lam$, and (ii)
	independent of the false null labels $\{L_i\,:\,i\notin N\}$.
\end{defn}
Note that in claiming that TDC controls the FDR we implicitly assume that a false match is equally likely to arise from
a target win as it is from a decoy win independently of all other scores \citep{he:theoretical}. This property coincides with the NLCP
with $d=1$ and $c=\lam=1/2$.
Our next theorem shows that the NLCP generally guarantees the FDR control of our meta-procedure. Specifically, we argue that
with NLCP established step 3 of our meta-procedure can be viewed as a special case of \cite{barber:controlling}'s SSS+
procedure  and its extension by \cite{lei:power}'s Adaptive SeqStep (AS).
Both procedures are designed for sequential hypothesis testing  where the order of the hypotheses is pre-determined -- by the scores $W_i$ in our case.
\begin{thm}
	\label{metaProcThm}
If the NLCP holds then our meta-procedure controls the FDR in a finite-sample setting, that is,
$E\left(|D(\alp,c,\lam)\cap N| / |D(\alp,c,\lam)|\right) \le \alp$, where the expectation
is taken with respect to \emph{all} the decoy draws.
\end{thm}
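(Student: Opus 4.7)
The plan is to recognize step 3 of the meta-procedure as an instance of Lei and Fithian's Adaptive SeqStep (AS) procedure --- which itself extends \BC' SSS+ to the three-label setting --- and then reduce the theorem to a conditional application of AS's FDR-control guarantee followed by the tower property of conditional expectation.

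First I would condition on the entire score vector $\W = (W_1,\dots,W_m)$. Conditioning on $\W$ fixes the ordering used in step 3 (hypotheses sorted so that $W_i$ is decreasing) and renders the threshold $i_{\alp c\lam}$ in \eqref{eq:reject_criterion-general-c-lam} a function of the label sequence $(L_1,\dots,L_m)$ alone. The decision rule depends only on running counts of $+1$ and $-1$ labels along this ordering, and the multiplicative factor $c/(1-\lam)$ is precisely the calibration used by AS when null labels take the value $+1$ with probability $c$ and $-1$ with probability $1-\lam$.

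Next I would invoke NLCP to assert that, conditional on $\W$, the null labels $\{L_i : i\in N\}$ are iid with $P(L_i=1\mid\W) = c$, $P(L_i=-1\mid\W) = 1-\lam$, $P(L_i=0\mid\W) = \lam-c$, and independent of the non-null labels. This is exactly the distributional hypothesis under which AS controls the FDR. Applying AS's theorem conditional on $\W$ yields
\[
E\left[\frac{|D(\alp,c,\lam)\cap N|}{|D(\alp,c,\lam)|\vee 1}\,\Big|\,\W\right]\le\alp,
\]
and taking expectations over $\W$ delivers the unconditional bound claimed.

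The main obstacle --- indeed the conceptual content of the theorem --- lies in setting up the conditioning correctly. The labels $L_i$ are not determined by $\W$ alone: $W_i$ is itself constructed from $L_i$ through the (possibly randomized) mapping $\vrp$ and the auxiliary uniform $u_i$, so there is nontrivial coupling between $\W$ and the $L_i$. NLCP is precisely the distributional property that breaks this coupling in the right way for nulls: after conditioning on $\W$, the null labels behave like a fresh iid sample with the tabulated probabilities and are independent of the non-null labels, which is all AS asks for. Once this identification is made the theorem is essentially a direct reduction to a known result, so the substantive work is absorbed into NLCP itself --- verifying NLCP for specific choices of $\vrp,c,\lam$ is the burden of the subsequent sections rather than of this theorem.
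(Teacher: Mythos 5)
Your proposal is correct and follows essentially the same route as the paper: it reduces step 3 to Lei--Fithian's Adaptive SeqStep by using the NLCP to show that, conditional on $\W$ (hence on the ordering), the true null labels are iid with the stated probabilities and independent of the false nulls, then integrates out $\W$. The paper phrases the reduction by assigning each hypothesis the coarse p-value $c$, $\lam$, or $1$ according to its label and citing SSS+ for $c=\lam$ and AS for $c\le\lam$, but this is only a cosmetic difference from your label-based formulation.
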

\noindent
Why does Theorem \ref{metaProcThm} make sense?
If the NLCP holds then a true null $H_i$ is an original win ($L_i=1$) with probability $c$ and is a decoy win with probability $1-\lam$. Hence, 
the factor $\frac{c}{1-\lam}$ that appears in \eqref{eq:reject_criterion-general-c-lam}
adjusts the observed number of decoy wins, $\#\left\{ j\le i\,:\,L_{j}=-1\right\}$,
to estimate the number of (unobserved) false original wins (those for which the corresponding $H_i$ is a true null).
Ignoring the +1 correction, the adjusted ratio of \eqref{eq:reject_criterion-general-c-lam} therefore estimates the FDR
in the list of the first $i$ original wins. The procedure simply takes the largest such list for which the estimated FDR
is $\le\alp$.
\begin{proof}
	To see the connection with SSS+ and AS we assign each hypothesis $H_i$ a p-value $p_i\coloneqq P\left(L_{i}\ge l \right)$. Clearly,
	if the NLCP holds then
	\begin{equation}
		p_{i} = \begin{cases}
		c & l=1\\
		\lam & l=0\\
		1 & l=-1
	\end{cases}.\label{eq:pi_general_c-1}
	\end{equation}
	Moreover, the NLCP further implies that for any $u\in\left(0,1\right)$ and $i\in N$, $P\left(p_{i}\le u \mid \W \right)\le u$,
	and that the true null labels $L_i$, and hence the true null p-values, $p_i$, are independent conditionally on $\W$.

	It follows that, even after sorting the hypotheses by the decreasing order of the scores $W_i$, the p-values
	of the true null hypotheses are still iid valid p-values that are independent from the false nulls.
	Hence our result follows from Theorem 3 (SSS+) of \cite{barber:controlling} for $c=\lam$, and more generally for $c\le\lam$
	from Theorem 1 (AS) of \cite{lei:power} (with $s=c$). 
\end{proof}
\begin{rem}
	With the risk of stating the obvious we note that one cannot simply apply SSS+ or AS
	by selecting $W_{i}=Z_{i}$ for all $i$ with the corresponding empirical p-values $(d_1-r_i+1)/d_1$. Indeed, in this
	case the order of the hypotheses (by $W_{i}$) is not independent of the true null p-values.
\end{rem}

\subsection{When does the NLCP hold for our meta-procedure?}

To further analyze the NLCP we make the following assumption on our decoys.
\begin{defn}[formalizing the multiple-decoy problem]
	\label{iidDecoys}
If the $d_1$ (original and decoy) scores corresponding to each true null hypothesis are iid independently of all other
scores then we say we have ``iid decoys''.
\end{defn}

It is clear that if we have iid decoys then for each fixed $i\in N$ the rank $r_i$ is uniformly %of the original score 
distributed on $1,\dots,d_1$, and hence $P(L_i=1) = c$ and $P(L_i=-1) = 1-\lam$.
However, to determine whether or not $r_i$ is still uniformly distributed when conditioning on $\W$ we need to look at
the mapping function $\vrp$ as well.

More specifically, in the iid decoys case the conditional distribution of $\{L_i\,:\,i\in N\}$ given $\W$
clearly factors into the product of the conditional distribution of each true null $L_i$ given $W_i$: a true null's $L_i$ is
independent of all $\{L_j,W_j\,:\,j\ne i\}$. Thus, it suffices to show that $L_i$ is independent of $W_i$ for each $i\in N$.
Moreover, because $W_i$ is determined in terms of $s_i$ and the \emph{set} of scores $\left\{ \til Z_{i}^{0},\dots,\til Z_{i}^{d}\right\}$,
and because a true null's label $L_i$ and $s_i$ are independent of the last set (a set is unordered), it suffices to show that $L_i$ is independent of $s_i$.
Of course, $s_i$ is determined by $\vrp$ as specified in \eqref{def_s_i}.

For example, consider the max method where $s_i\equiv d_1$ (equivalently $\vrp\equiv d_1$):
in this case, $L_i$ is trivially independent of $s_i$ and hence by the above discussion the method controls the FDR.
In contrast, assuming $d_1$ is even and choosing $\vrp\equiv d_1$ with $c=\lam=1/2$ we see that the scores $\{W_i\,:\,i\in N, L_i=-1\}$
will generally be larger than the corresponding $\{W_i\,:\,i\in N, L_i=1\}$. Indeed, when $L_i=-1$ we always choose the maximal score
$W_i=Z_{i}^{(d_1)}$, whereas $W_i$ is one of the top half scores when $L_i=1$. Hence, $P(L_i=-1 \mid \text{ higher }W_i)>1/2$.

So how can we guarantee that the NLCP holds for pre-determined values of $c=i_c/d_1$ and $\lam=i_\lam/d_1$?
The next theorem provides a sufficient condition on $\vrp$ (equivalently on $s_i$) to ensure the property holds.
\begin{thm}
	\label{NLCPthm}
If the iid decoys assumption holds, and if for any $i\in N$ and $j\in\left\{ d_{1}-i_{c}+1,\dots,d_{1}\right\}$
\begin{equation}
P\left(s_{i}=j,r_{i}\le d_{1}-i_{\lam}\right)=P\left(s_{i}=j,L_{i}=-1\right)=\frac{d_{1}-i_{\lam}}{d_{1}\cdot i_{c}},\label{eq:decoy_win_pos_c_lam1}
\end{equation}
then the NLCP holds and hence our meta-procedure with those values of $c,\lam$ and $\vrp$ controls the FDR.
\end{thm}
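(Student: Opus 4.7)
The plan is to derive NLCP from the stated hypothesis and then invoke Theorem \ref{metaProcThm}. The NLCP asks for a conditional iid structure of $\{L_i : i\in N\}$ given the entire vector $\W$, together with independence from the false null labels, but under iid decoys this reduces to a single-hypothesis check. Under that assumption the scores (and any auxiliary randomness) attached to different hypotheses are mutually independent, so the joint distribution of $\{(L_i,W_i)\}_{i=1}^m$ factors across $i$; conditional on $\W$ the true null labels are therefore automatically independent of each other and of the false null labels. All that remains is to show that for each $i\in N$, $L_i$ is independent of $W_i$ with the correct marginals $c$ and $1-\lam$.

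Next I would carry out the further reduction flagged in the paragraph preceding the theorem. Writing $S_i=\{\til Z_i^0,\dots,\til Z_i^d\}$ for the unordered multiset of scores at hypothesis $i$, we have $W_i=\til Z_i^{(s_i)}$, a deterministic function of $(s_i,S_i)$. Under iid decoys the rank $r_i$ is uniform on $\{1,\dots,d_1\}$ and independent of $S_i$, and the extra randomness feeding into $s_i$ (the draw $u_i$ when $L_i=0$, and any internal randomization of $\vrp$) is also independent of $S_i$. Thus $(L_i,s_i)\perp S_i$, and a short conditioning argument then shows that $L_i\perp W_i$ whenever $L_i\perp s_i$. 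The problem is thereby reduced to verifying the pairwise independence $L_i\perp s_i$ for a single true null.

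To close the argument I would tabulate the joint law $P(L_i=l,s_i=j)$ for $j\in\{d_1-i_c+1,\dots,d_1\}$. Uniformity of $r_i$ gives marginals $P(L_i=1)=c$, $P(L_i=0)=\lam-c$, $P(L_i=-1)=1-\lam$. When $L_i=1$, $s_i=r_i$, so $P(s_i=j,L_i=1)=1/d_1$; when $L_i=0$, $s_i=u_i$ is an independent uniform draw on a set of size $i_c$, giving $(\lam-c)/i_c$; and when $L_i=-1$ the theorem's hypothesis directly supplies $(d_1-i_\lam)/(d_1\, i_c)=(1-\lam)/i_c$. Summing across $l$ yields $P(s_i=j)=1/i_c$, so in each case $P(s_i=j,L_i=l)=P(s_i=j)P(L_i=l)$, which is the required independence.

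The main (really only) obstacle is the bookkeeping in the first two paragraphs rather than any hard calculation: one has to carefully separate the randomness shared across hypotheses from the randomness private to $i$, and be explicit that any internal randomization of $\vrp$ and each $u_i$ are drawn independently per hypothesis and independently of $S_i$. Once that is settled the computation collapses into the three-line check above, and Theorem \ref{metaProcThm} closes the argument.
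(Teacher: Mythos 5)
Your proposal is correct and follows essentially the same route as the paper: reduce NLCP under iid decoys to the single-hypothesis independence $L_i\perp s_i$ (the reduction the paper carries out in the discussion immediately preceding the theorem), then verify it from the same three joint probabilities $1/d_1$, $(i_\lam-i_c)/(d_1 i_c)$, and $(d_1-i_\lam)/(d_1 i_c)$, the only cosmetic difference being that you check joint $=$ product of marginals while the paper checks that the conditional probabilities $P(L_i=l\mid s_i=j)$ equal $c$ and $1-\lam$. Both then conclude via Theorem \ref{metaProcThm}.
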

\begin{proof}
By (\ref{eq:decoy_win_pos_c_lam1}), for any $i\in N$ and $j\in\left\{ d_{1}-i_{c}+1,\dots,d_{1}\right\}$,
\begin{align*}
P\left(L_{i}=1\left|\right.s_{i}=j\right) &= 
	\frac{P\left(s_{i}=j,L_{i}=1\right)}{\sum_{l\in\left\{ -1,0,1\right\}}P\left(s_{i}=j,L_{i}=l\right)} \\
	&=\frac{1/d_{1}}{(d_{1}-i_{\lam})/(d_{1}\cdot i_{c})+\left(i_{\lam}-i_{c}\right)/d_{1}\cdot1/i_{c}+1/d_{1}}=\frac{i_{c}}{d_{1}} = c,\\
P\left(L_{i}=-1\left|\right.s_{i}=j\right) &= \frac{\left(d_1-i_{\lam}\right)/\left(d_{1}\cdot i_{c}\right)}{(d_{1}-i_{\lam})/(d_{i}\cdot i_{c})+\left(i_{\lam}-i_{c}\right)/d_{1}\cdot1/i_{c}+1/d_{1}}=\frac{d_1-i_{\lam}}{d_{1}} = 1 - \lam.
\end{align*}
At the same time $P\left(L_i=1\left|\right.s_i=j\right) = 1$ for $j\in\left\{1,\dots,i_c\right\}$ always holds;
therefore, $L_i$ is independent of $s_i$ and by the above discussion the NLCP holds. Theorem \ref{metaProcThm} completes the proof.
\end{proof}

For any fixed values of $c,\lam$ we can readily define a randomized $\vrp=\vrp_u$ so that the NLCP holds: randomly and uniformly map
$\{1,\dots,d_{1}-i_{\lam}\}$ onto $\{d_{1}-i_{c}+1,\dots,d_{1}\}$. Indeed, in this case \eqref{eq:decoy_win_pos_c_lam1} holds:
\begin{equation}
	\label{randomCond}
P\left(s_{i}=j,s_{i}\ne r_{i}\right) = P\left(r_{i}\in\left\{ 1,\dots,d_{1}-i_{\lam}\right\} \right)\cdot
	 P\left(s_{i}=j\left|\right.r_{i}\in\left\{ 1,\dots,d_{1}-i_{\lam}\right\} \right) 
	=\frac{d_{1}-i_{\lam}}{d_{1}}\cdot\frac{1}{i_{c}} .
\end{equation}
% \begin{multline}
% P\left(s_{i}=j,s_{i}\ne r_{i}\right)=P\left(r_{i}\in\left\{ 1,\dots,d_{1}-i_{c}\right\} \right)\cdot P\left(s_{i}=j\left|\right.r_{i}\in\left\{ 1,\dots,d_{1}-i_{c}\right\} \right) \\
% 	=\frac{d_{1}-i_{c}}{d_{1}}\cdot\frac{1}{i_{c}}.\label{eq:random_decoy_win}
% \end{multline}

\subsection{Mirroring and Mirandom}
\label{sec:mirandom}

Using the above randomized uniform map $\vrp_u$ we have a way to define an FDR-controlling variant of our meta-procedure for
any pre-determined $c,\lam$. However, we can design more powerful procedures using alternative definitions of $\vrp$
(for the same values of $c,\lam$).

For example, with $c=\lam=1/2$ and an even $d_1$ we can consider, in addition to $\vrp_u$,
the mirror map: $\vrp_m(j)\coloneqq d_1-j+1$.
It is easy to see that under the conditions of Theorem \ref{NLCPthm},
$P\left(s_{i}=j,r_{i}\le d_{1}-i_{\lam}\right)=1/d_1$ hence \eqref{eq:decoy_win_pos_c_lam1} holds and the resulting method,
which we refer to as the ``mirror method''' (because when $L_i=-1$, $s_i$ is the rank symmetrically across the median to $r_i$),
controls the FDR. Similarly, we can choose to use a shift map $\vrp_s$: $\vrp_s(j)=j+d_1/2$, which will
result in a third FDR-controlling variant of our meta-procedure for $c=\lam=1/2$.

Comparing the shift and the mirror maps we note that when $L_i=-1$,
$\vrp_s$ replaces middling target scores with high decoy scores, whereas $\vrp_m$
replaces low target scores with high decoy scores.
Of course, the high decoy scores are the ones more likely to appear
in the numerator of \eqref{eq:reject_criterion-general-c-lam}, and generally
we expect the density of the target scores to monotonically decrease
with the quality of the score. Taken together, it follows that the
estimated FDR will generally be higher when using $\vrp_s$ than when using $\vrp_m$,
and hence the variant that uses $\vrp_s$ will be weaker than
the mirror. By extension the randomized $\vrp_u$ will fall somewhere between the other two maps,
as can be partly verified by the comparison of the power using $\vrp_m$ and $\vrp_u$ in panel A of \supfig~\ref{fig:power_initial}.
% This is partly verified by the comparison of the power of the mirror and random maps in panel A of \supfig~\ref{fig:power_initial}.

We can readily extend the mirroring principle to other values of $c$ and $\lam$ where $i_c$ divides $d_1-i_\lam$,
however when $i_c \nmid d_1-i_\lam$ we need to introduce some randomization into the map. Basically, we accomplish this
by respecting the mirror principle as much as we can while using the randomization to ensure that \eqref{eq:decoy_win_pos_c_lam1} holds ---
hence the name \emph{mirandom} for this map/procedure. It is best described by an example.

Suppose $d=7$. Then for  $i_{c}=3$ ($c=3/8$) and $i_{\lam}=4$ ($\lam=1/2$) the mirandom map, $\vrp_{md}$, is defined as
\[
\vrp_{md}(j)=\begin{cases}
8 & j=1\\
8 \text{ (with probability $1/3$), or $7$ (with probability $2/3$)} & j=2\\
7 \text{ (with probability $2/3$), or $6$ (with probability $1/3$)} & j=3\\
6  & j=4
\end{cases}
\]
Note the uniform coverage ($4/3$) of each value in the range, implying that if $j$ is randomly and uniformly chosen in the domain
then $\vrp_{md}(j)$ is uniformly distributed over $\{6,7,8\}$.
% rank $r_{i}=1$ to $s_{i}=8$, and it maps $r_{i}=2$ to $s_{i}=8$
% with probability $1/3$ and to $s_{i}=7$ with probability $2/3$,
% $r_{i}=3$ is mapped to $s_{i}=7$ with probability $2/3$ and to
% $s_{i}=6$ with probability $1/3$, and finally  losing rank $r_{i}=4$
% is mapped to $s_{i}=6$. Thus, the coverage of each rank $j\in\left\{ 6,7,8\right\} $
% is the same $4/3$.

More generally the mirandom map $\vrp_{md}$ for a given $c\le\lam$ is defined in two steps.
In the first step it defines a sequence of $d_{1}-i_{\lam}$ distributions
$F_{1},\dots,F_{d_{1}-i_{\lam}}$ on the range $\left\{ d_{1}-i_{c}+1,\dots,d_{1}\right\} $
so that
\begin{itemize}
\item each $F_{l}$ is defined on a contiguous sequence of natural numbers, and
\item if $j<l$ then $F_{j}$ stochastically dominates $F_{l}$ and $\min\supp\left\{ F_{j}\right\} \ge\max\supp\left\{ F_{l}\right\} $.
\end{itemize}
In practice, it is straightforward to construct this sequence of distributions
and to see that, when combined, they necessarily satisfy the following equal coverage property:
for each $j\in\left\{ d_{1}-i_{c}+1,\dots,d_{1}\right\}$, 
$\sum_{l=1}^{d_{1}-i_{\lam}}F_{l}\left(j\right)=\frac{d_{1}-i_{\lam}}{i_{c}}$.
In the second step, mirandom defines $s_{i}$ for any $i$ with $r_{i}\in\left\{ 1,\dots,d_{1}-i_{c}\right\}$
by randomly drawing a number from $F_{r_{i}}$ (independently of everything
else).

It follows from the equal coverage property that for any $i\in N$ and $j\in\left\{ d_{1}-i_{c}+1,\dots,d_{1}\right\}$
\eqref{eq:decoy_win_pos_c_lam1} holds for $\vrp_{md}$ for essentially the same reason it held for $\vrp_u$ in \eqref{randomCond}.
Hence, the mirandom map allows us to controls the FDR for any pre-determined values of $c,\lam$.

\subsection{Data-driven setting of the tuning parameters $c,\lam$: finite-decoy Storey (FDS)\label{subsec:FDS}}

All the procedures we consider henceforth are based on the mirandom map. Where they differ is in how they set $c,\lam$.

For example, choosing $c=\lam=1/2$ gives us the mirror (assuming $d_1$ is even),
$c=\lam=1/d_1$ yields the max, while choosing $\lam=1/2$ and $c=\alp\le1/2$ coincides
with Lei and Fithian's recommendation in the related context of sequential hypothesis testing (technically we set $c=\lfloor\alp\cdot d_1\rfloor/d_1$ and refer to this method as ``LF'').
All of these seem plausible; however, our extensive simulations (\supsec~\ref{sec:Simulation-setup}) show that none dominates the others with substantial power
to be gained/lost for any particular problem (\supfig~\ref{fig:power_initial}, panels B-D).

As the optimal values of $c,\lam$ seem to vary in a non-trivial way with the nature of the data, as well as with
$d$ and $\alp$, we turned to data-driven approaches to setting $c,\lam$.

Lei and Fithian pointed out the connection between the $\left(c,\lam\right)$
parameters of their AS procedure (they refer to $c$ as $s$) and the corresponding parameters
in Storey's procedure. Specifically, AS's $\lam$ is analogous to
the parameter $\lam$ of \cite*{storey:strong} that determines
the interval $\left(\lam,1\right]$ from which $\pi_{0}$, the fraction
of true null hypotheses is estimated. Specifically, in the finite
sample case, $\pi_{0}$ is estimated as:
\begin{equation}
\hat{\pi}_{0}^{*}(\lam)=\frac{m-R(\lam)+1}{(1-\lam)m},\label{eq:pi0-est}
\end{equation}
 where $m$ is again the number of hypotheses, and $R(\lam)$ is the
number of discoveries at threshold $\lam$ (number of hypotheses whose
p-value is $\le\lam$). The $c$ parameter is analogous to the threshold
\begin{equation}
t_{\alpha}\left(\widehat{\text{FDR}}_{\lam}^{*}\right)=\sup\left\{ t\in\left[0,1\right]\,:\,\widehat{\text{FDR}}_{\lam}^{*}\le\alpha\right\} \label{eq:STS-c}
\end{equation}
 of \cite{storey:strong}, where 
\begin{equation}
\widehat{\text{FDR}}_{\lam}^{*}=\begin{cases}
\frac{m\cdot\hat{\pi}_{0}^{*}(\lam)\cdot t}{R(t)\vee1} & t\le\lam\\
1 & t>\lam
\end{cases}.\label{eq:STS-est-FDR}
\end{equation}

We take this analogy one step further and essentially use the 
procedure of \cite{storey:strong} to determine $c$ by applying
it to the empirical p-values, $\til p_{i}\coloneqq (d_1-r_{i}+1)/d_{1}$.
However, to do that, we first need to determine $\lam$.
% In order to do that, however, we first need to determine $\lam$.

We could have determined $\lam$ by applying the bootstrap approach of \cite{storey:strong} to $\til p_{i}$.
However, in practice we found that using the bootstrap option of the qvalue package \citep{storey:qvalueR} in our
setup can significantly compromise our FDR control.
Therefore, instead we devised an alternative approach inspired by the spline-based method of \cite{storey:statistical}
for estimating $\pi_{0}$, where we look for the flattening of the tail of the p-value
histogram as we approach 1. Because our p-values, $\til p_{i}$, lie on the lattice $i/d_{1}$ for $i=1,\dots,d_{1}$,
instead of threading a spline as in \cite{storey:statistical}, we repeatedly
test whether the number of p-values in the first half of the considered
tail interval $\left(\lam,1\right]$ is significantly larger than their number
in the second half of this interval (\supsec~\ref{subsec:Determining-lambda}).

Our finite-decoy Storey (FDS) procedure starts with determining $\lam$ as above
then essentially applies the methodology of \cite{storey:strong} to $\til p_{i}$ to set $c = t_{\alp}$
before applying mirandom with the chosen $c,\lam$.
Specifically, given the FDR threshold $\alp\in\left(0,1\right)$ and after determining $\lam$
FDS proceeds along (\ref{eq:pi0-est})-(\ref{eq:STS-est-FDR}) using
$R(\lam)=\left|\left\{ \til p_{i}\,:\,\til p_{i}\le\lam\right\} \right|$,
to determine
\begin{equation}
t_{\alpha}\left(\widehat{\text{FDR}}_{\lam}^{*}\right)=\max\left\{ i\in\left\{ 0,1,\dots,d_{1}\cdot\lam\right\} \,:\,\frac{m\cdot\hat{\pi}_{0}^{*}(\lam)\cdot i/d_{1}}{R\left(i/d_{1}\right)\vee1}\le\alp\right\} .\label{eq:t-FDS}
\end{equation}
This in principle is our threshold $c$ except that, especially when
$d$ is small, $t_{\alpha}\left(\widehat{\text{FDR}}_{\lam}^{*}\right)$
can often be zero which is not a valid value for $c$ in our setup.
Hence FDS defines 
\begin{equation}
c\coloneqq\max\left\{ 1/d_{1},t_{\alpha}\left(\widehat{\text{FDR}}_{\lam}^{*}\right)\right\} .\label{eq:c-FDS}
\end{equation}
With $\left(c,\lam\right)$ determined, FDS continues by applying
the mirandom procedure with the chosen parameter values.

FDS was defined as close as possible to \cite{storey:strong}'s
recommended procedure for guaranteed FDR control in the finite setting (albeit with a pre-determined $\lam$).
Indeed, as we argue in \supsec~\ref{suppsec:The-limiting-behavior}, FDS converges to a variant
of Storey's procedure once we let $d\lra\infty$ (the mirror and mirandom maps in general have an interesting limit in that setup).
However, we found that the following variant of FDS that we denote as FDS$_{1}$, often
yields better power in our setting, so we considered both variants.
FDS$_{1}$ differs from FDS as follows:
\begin{itemize}
	\item
	When computing $t_{\alpha}\left(\widehat{\text{FDR}}_{\lam}^{*}\right)$ \eqref{eq:t-FDS}
	we use Storey's large sample formulation which does not include the $+1$ in the estimator $\hat{\pi}_{0}^{*}(\lam)$ \eqref{eq:pi0-est}, and maximizes over $i\in\left\{ 0,1,\dots,d_{1}\right\}$.
% \item In estimating $\hat{\pi}_{0}^{*}(\lam)$ we use Storey's asymptotic
% estimator which does not have the $+1$ term in the numerator of (\ref{eq:pi0-est}).
% \item When computing $t_{\alpha}\left(\widehat{\text{FDR}}_{\lam}^{*}\right)$
% in (\ref{eq:t-FDS}) we consider maximizing over $i\in\left\{ 0,1,\dots,d_{1}\right\} $.
\item Instead of defining $c$ as in (\ref{eq:c-FDS}), FDS$_{1}$ defines
$c\coloneqq\min\left\{ c_{\max},1/d_{1}+t_{\alpha}\left(\widehat{\text{FDR}}_{\lam}^{*}\right)\right\} $,
where $c_{\max}$ is some hard bound on $c$ (we used $c_{\max}=0.95$). 
\item With FDS$_1$'s tweaked definition of $c$ the case $c>\lam$ is now possible. However, mirandom
does not allow this so in that case FDS$_1$ sets $\lam\coloneqq c$ instead of the
value of $\lam$ defined above.
\end{itemize}

FDS and FDS$_1$ peek at the data to set $c,\lam$ hence they no longer fall
under mirandom's guaranteed FDR control and we resorted to empirically examining their
FDR control.
Specifically, we applied each method to 10K randomly drawn datasets for each of 1200 different
combinations of parameter values spanning a wide range of the parameter space (\supsec~\ref{sec:Simulation-setup}).
The 1200 combinations were evenly split between calibrated and non-calibrated scores
and the empirical FDR of a method at a selected FDR threshold $\alp$
is the 10K-sample average of the FDP of the method's discovery list at $\alp$.

As can be seen in panels A-C of \supfig~\ref{fig:FDR-control} the empirical violations of FDR control of FDS and FDS$_1$
are roughly in line with that of the max method.
Specifically, the overall maximal observed violation is 5.0\% for FDS, FDS$_1$ while it is 6.7\% for max.
Similarly, the number of curves (out of 1200) in which the maximal violation exceeds 2\% is 7 for FDS and FDS$_1$,
and 24 for the max. Given that the max provably controls the FDR these simulations suggest that
FDS and FDS$_1$ essentially control the FDR as well.

In terms of power FDS$_1$ seems to deliver overall more power than the mirror, max, LF, FDS and TDC, and often
substantially more (\supfig~\ref{fig:power_FDS1}, panels A-E).
We note, however, that at times FDS$_1$ has 10-20\% less power than the optimal method,
and we observed an even more extreme loss of power with the examples mentioned in \supsec~\ref{sec:Failures-of-established}
where BH has no power (Supplementary Section \ref{subsec:Failures-examples2}).
These issues motivate our next procedure.

\subsection{A bootstrap procedure for selecting an optimal method}

Our final, and ultimately our recommended multi-decoy procedure, uses a novel resampling approach
to choose the optimal procedure among several of the ones we introduced above.
Our optimization strategy is indirect: rather than using the resamples to choose the method
that maximizes the number of discoveries, we use the resamples to advise us whether or not such
a direct maximization approach is likely to control the FDR.

Clearly, a direct maximization would have been ideal had we been able to sample more instances
of the data. In reality, that is rarely possible all the more so with our underlying
assumption that the decoys are given and that it is forbiddingly expensive to generate additional ones.
Hence, when a hypothesis is resampled it comes with its original, as well as its decoy scores, thus
further limiting the variability of our resamples. In particular, direct maximization will occasionally
fail to control the FDR. Our Labeled Bootstrap monitored Maximization (LBM) procedure tries to identify those cases.

In order to gauge the rate of false discoveries we need labeled samples.
To this end, we propose a segmented resampling procedure that makes informed guesses (described below) about which
of the hypotheses are false nulls before resampling the indices.
The scores $\left\{ \tilde{Z}_{i}^{j}\right\} _{j=0}^{d}$ associated with each resampled conjectured
\emph{true null} index are then randomly permuted, which effectively
boils down to randomly sampling $j\in\left\{ 0,1,\dots,d\right\} $ and
swapping the corresponding original score $\tilde{Z}_{i}^{0}=Z_{i}$ with $\tilde{Z}_{i}^{j}$.

The effectiveness of our resampling scheme hinges on how informed are
our guesses of the false nulls.
To try and increase the overlap between our guesses and the true false nulls
we introduced two modifications to the naive approach of estimating the number
of false nulls in our sample and then uniformly drawing that many conjectured false nulls.
First, we consider increasing sets of hypotheses
$\text{\ensuremath{\mathcal{H}_{j}}}\subset\HC_{j+1}$ and verify
that the number of conjectured false nulls we draw from each $\HC_{j}$
agrees with our estimate of the number of false nulls in $\HC_{j}$.
Second, rather than being uniform,
our draws within each set $\HC_{j}$ are weighted according to the
empirical p-values so that hypotheses with more significant empirical
p-values are more likely to be drawn as conjectured false nulls.
% We next summarize our procedure for generating each resample.
Our segmented resampling procedure is described in detail in \supsec~\ref{subsec:resampling}.

In summary, LBM relies on the labeled resamples of our segmented resampling approach
to estimate whether we are likely to control the FDR when using direct maximization
(we chose FDS, mirror, and FDS$_1$ as the candidate methods). If so, then LBM uses the maximizing method;
otherwise, LBM chooses a pre-determined fall-back method (here we consistently use FDS$_{1}$).
Further details are provided in \supsec~\ref{supsec:LBM}.

Our simulations suggest that LBM's control of the FDR is on-par with that of the, provably FDR-controlling, max:
the overall maximal observed violation is 5.0\% for LBM while it is 6.7\% for max, and the number of curves (out of 1200)
in which the maximal violation exceeds 2\% is 21 for LBM, and 24 for the max (panels A and D, \supfig~\ref{fig:FDR-control}).
Power-wise LBM arguably offers the best balance among our proposed procedures, offering substantially
more power in many of the experiments while never giving up too much power when it is
not optimal (\supfig~\ref{fig:power_LBM}).
Finally, going back to the examples mentioned in \supsec~\ref{sec:Failures-of-established} we find that all our methods,
including LBM, essentially control the FDR where Storey's procedure substantially failed to do so,
and similarly that LBM delivers substantial power where BH had none (Supplementary Section \ref{subsec:Failures-examples2}).

\section{The peptide detection problem\label{sec:Peptide-ID-example}}

Our peptide detection procedure starts with a generalization of the WOTE procedure of \cite{granholm:determining}.
We use Tide~\citep{diament:faster} to find for each spectrum its best matching peptide
in the target database as well as in the $d$ decoy peptide databases.
We then assign to the $i$th target peptide the observed score, $Z_i$,
which is the maximum of all the PSM scores that were optimally matched to this peptide.
We similarly define the maximal scores of each of that peptide's $d$ randomly shuffled copies
as the corresponding decoy scores: $\til Z_{i}^{1},\dots,\til Z_{i}^{d}$.
If no spectrum was optimally matched to a peptide then that peptide's score is $-\infty$.

We then applied to the above scores TDC ($d=1$, with the $+1$ finite sample correction) --- representing a peptide-level
analogue of the picked target-decoy strategy of~\cite{savitski:scalable} --- as well as
the mirror, LBM and the averaging-based aTDC\footnote{We used the version named aTDC$_1^+$, which was empirically shown to control the FDR
even for small thresholds / datasets~\citep{keich:averaging}.} each using $d\in\left\{ 3,5,7,9\right\}$.

Clearly, the competition-based control of the FDR is subject to the
variability of the drawn competing scores. To ameliorate the effect of decoy-induced variability on our comparative analysis
we report the average of our analysis over 100 applications of each method 
using that many randomly drawn decoy sets (Supplementary Section \ref{subsec:real-data}).

We applied our analysis to three datasets: ``human'', ``yeast'' and ``ISB18''\footnote{
In the case of the ISB18 the data consists of 9 aliquots or replicates, so our analysis
was separately applied to each aliquot and then averaged over the aliquots.} (Supplementary Section \ref{subsec:real-data}),
and examined both the power of the considered methods (in all three datasets) and the FDR control (only in ISB18, as explained
in Supplementary Section \ref{subsec:real-data}).
Panel D of Figure \ref{fig:peptide-detect} suggests that when applied to the ISB18 dataset all our procedures seem to
control the FDR: the empirically estimated FDR is always below the selected threshold.
In terms of power, again we see that LBM is the overall winner: it typically delivers the largest
number of discoveries, and even in the couple of cases where it fails to do so  it is only marginally behind the top method (panels A--C).
In contrast, each of the other methods has some cases where it delivers noticeably fewer discoveries.
In practice, this means that scientists can extract more useful information (discoveries)
from the same wet lab experiment by leveraging more computational power coupled with LBM.

\begin{figure}
\centering %
\begin{tabular}{ll}
A: human & B: yeast\tabularnewline
\includegraphics[width=3in]{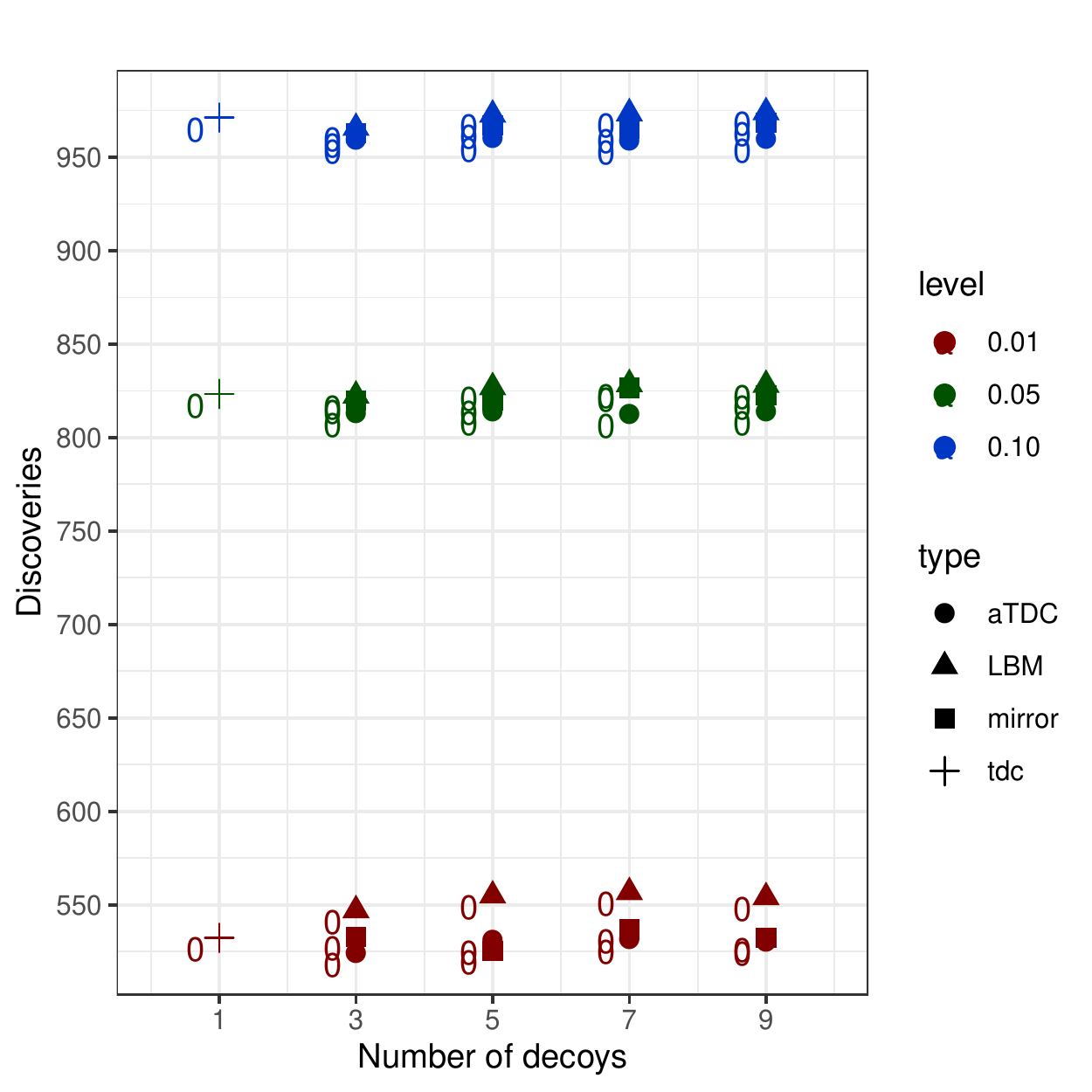} &
\includegraphics[width=3in]{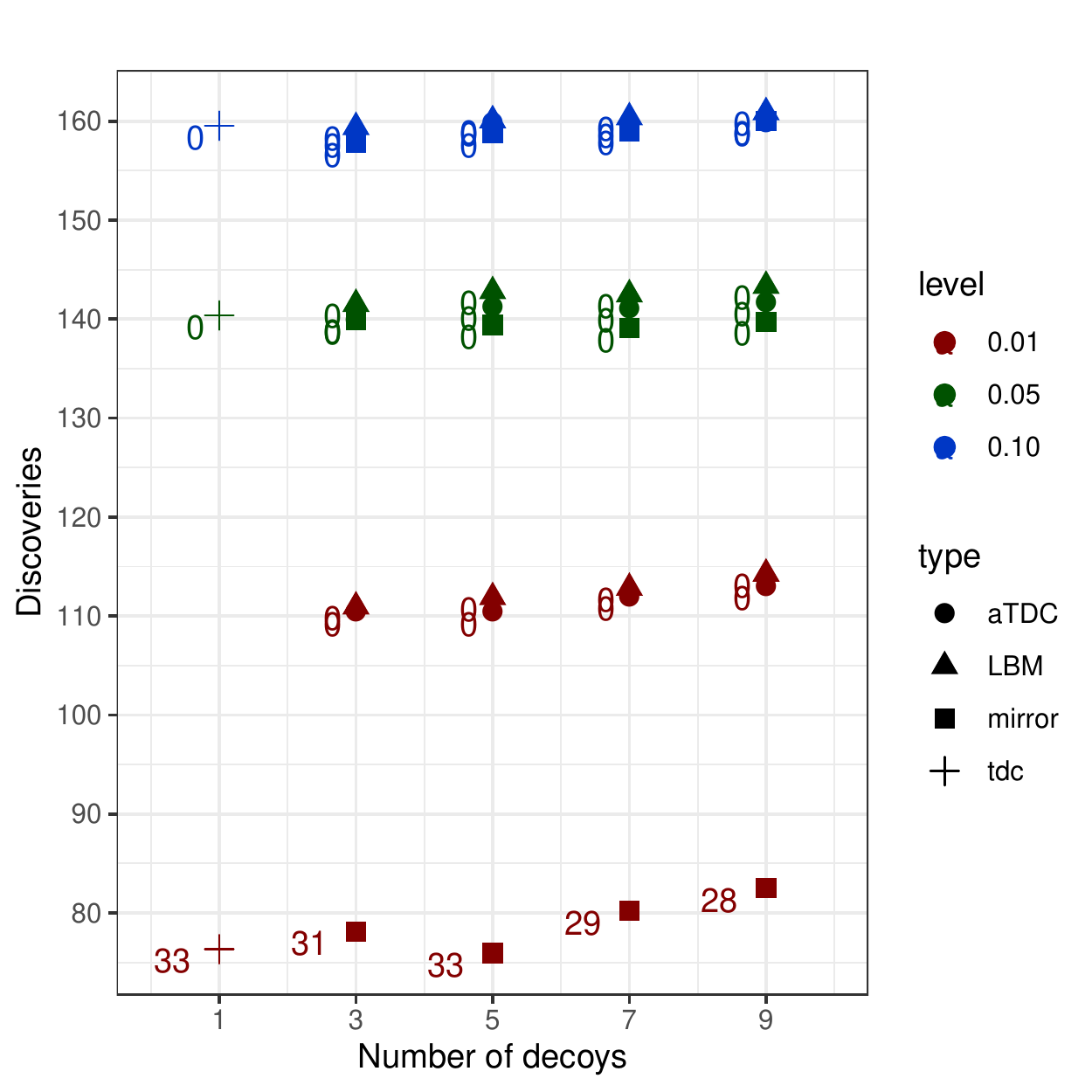}\tabularnewline
\tabularnewline
\tabularnewline
C: ISB18 (power) & D: ISB18 (FDR control)\tabularnewline
\includegraphics[width=3in]{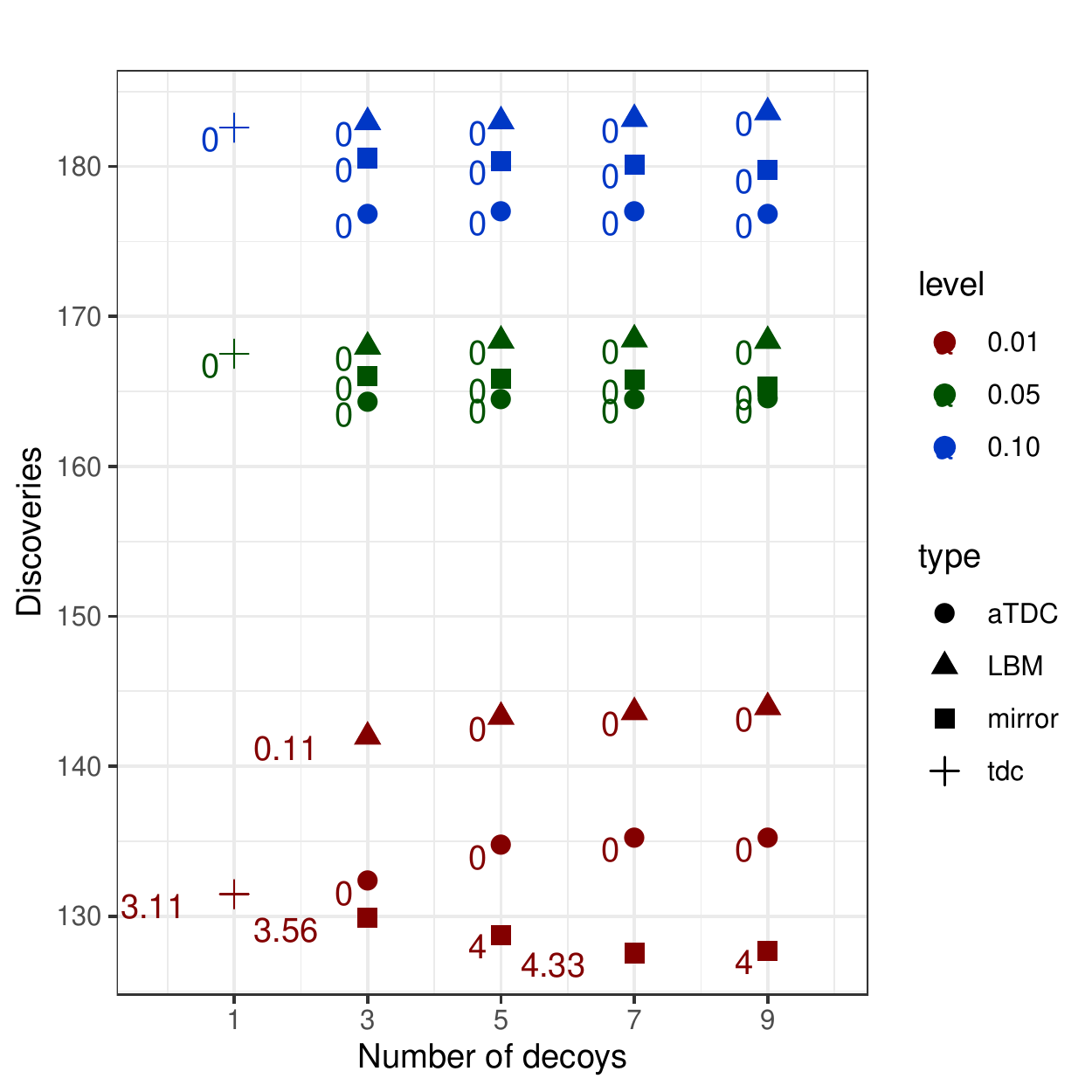} &
\includegraphics[width=3in]{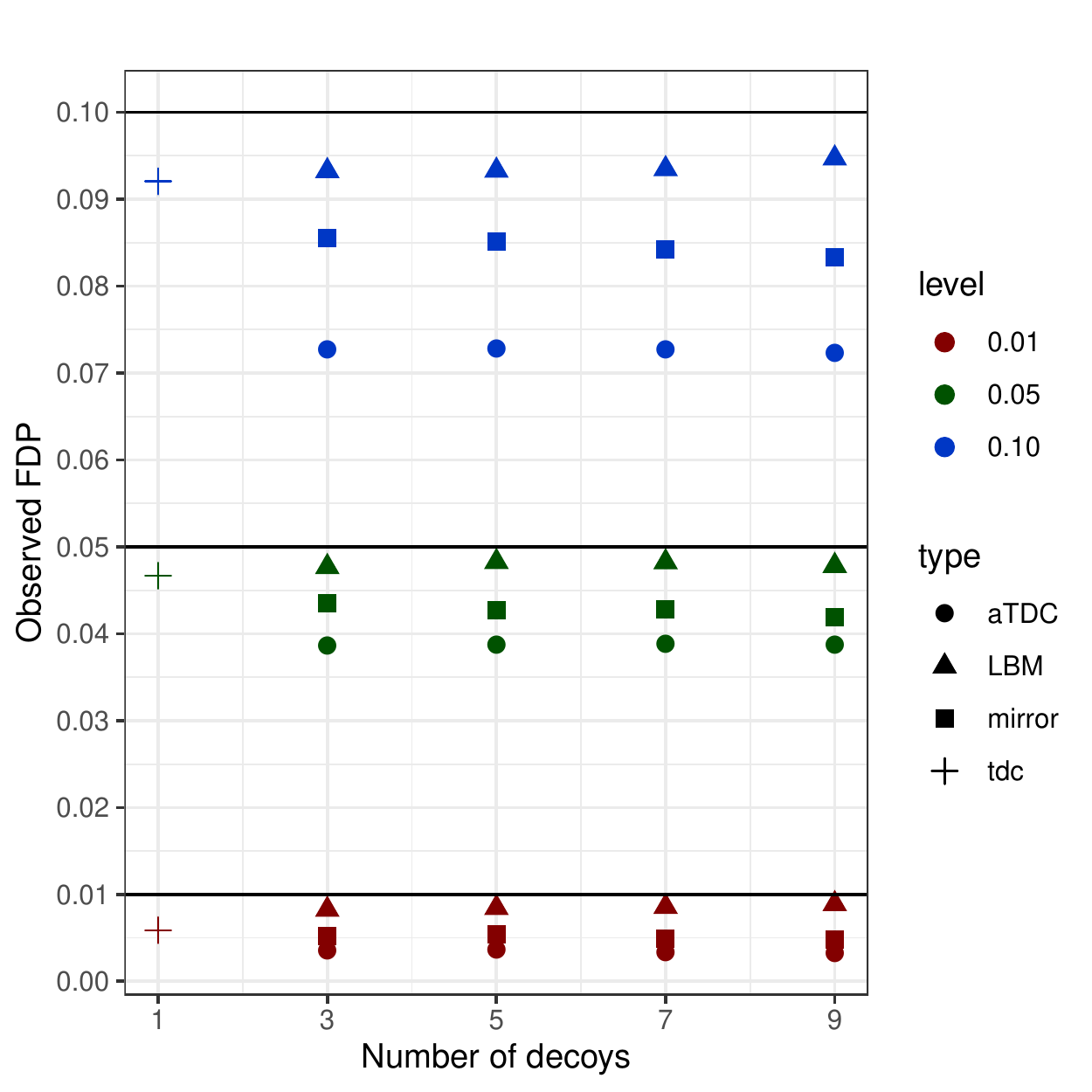}\tabularnewline
\end{tabular}\caption{\textbf{Peptide detection.}
\textbf{(A: human)} The number of discoveries at the given FDR threshold is the average over the 100 randomly drawn
decoys sets. Specifically, for each $d\in\left\{ 1,3,5,7,9\right\}$ we randomly drew 100 decoys
sets, each with exactly $d$ decoys, and applied
TDC ($d=1$), the mirror, LBM and aTDC ($d\in\left\{ 3,5,7,9\right\}$)
to the target and the drawn set of $d$ decoy scores. We then noted the number of target discoveries
at FDR thresholds of 1\%, 5\% and 10\%, and finally we averaged those numbers over the 100 drawn decoy sets.
The numbers to the left of the markers indicate the number of runs (out of 100) in which no discovery was reported.
\textbf{(B: yeast)} Same as (A) but for the yeast dataset.
\textbf{(C: ISB18, power)} Similar to (A) and (B) except for the ISB18 where the data consists of
9 aliquots or replicates. Therefore, the number of discoveries is averaged over the 9 aliquots,
where for each aliquot we averaged the number of discoveries over 100 randomly drawn sets of $d$ decoys as explained above.
The numbers to the left of the marker indicate the aliquots-average number of runs (out of 100) in which no discovery was reported.
\textbf{(D: ISB18, FDR)} Similar to (C) only here we noted the putative FDP of each run (as explained
in Supplementary Section \ref{subsec:real-data}),
then we averaged the FDP across the 100 runs to get an empirical FDR for each aliquot that we then averaged over the aliquots.
Notably, in all cases the empirical FDR was lower than the selected threshold.
\label{fig:peptide-detect}}
\end{figure}

More specifically, for $\alp=0.01$ LBM's average of 142.0 ISB18 discoveries ($d=3$) represents an 8.0\%
increase over TDC's average of 131.5 ISB18 discoveries, and we see a 9.4\% increase over TDC when using $d=5$ (143.3 discoveries).
In the human dataset and for the same $\alp=0.01$ we see a 2.8\% increase in power going from TDC to LBM with $d=3$ (532.4 vs.~547.1 discoveries),
and a 4.2\% increase when using LBM with $d=5$ (555.0 discoveries). LBM offers the biggest gains in the yeast dataset
where we see (again $\alp=0.01$) a 45.5\% increase in power going from TDC to LBM with $d=3$ (76.3 vs.~111.0 discoveries),
and a 46.7\% increase when using LBM with $d=5$ (111.9 discoveries).
Moreover, we note that for this $\alp=0.01$ TDC reported 0 yeast discoveries in 33 of the 100 runs (each using a different decoy
database), whereas LBM reported a positive number of discoveries in all 100 runs for each $d>1$ we considered.

At the higher FDR thresholds of $0.05$ and $0.1$ LBM offers a much smaller power advantage over TDC and is marginally behind
for $\alp=0.1$ and $d=3$ in the human and yeast datasets. Also, consistent with our simulations, we find that the mirror
lags behind LBM, and in fact in these real datasets it is roughly on par with TDC. 

To better understand the practical advantage offered by LBM, we
focused on a particular case: analysis of the yeast data set using an
FDR threshold of $\alpha = 0.01$.  The goal of many proteomics
experiments is to understand what biological pathways are active in a
given sample. We note that, as reported above, when using a single run of the yeast data
for 33 of the 100 decoy databases we drew, TDC found 0 discoveries at this 1\%
FDR threshold meaning that in 1/3 of similarly conducted experiments we would
not be able to draw any conclusion using this threshold. As noted above, LBM always reports some discoveries when
analyzing the same spectra set at 1\% FDR.

We next added two more spectra runs to the yeast dataset (Supplementary Section \ref{subsec:real-data}) representing
a higher budget experiment. In this case at 1\% FDR the average number of TDC discoveries was 275.9 and for LBM
using $d=5$ decoys it was 294. Accordingly, we focused on two sets of reported peptides, one of 294 peptides
detected by LBM (with $d=5$) and another of 276 peptides detected via TDC.
We eliminated from each group peptides that occur
in more than one protein, and then subjected the remaining 54 and 50
proteins, respectively, to analysis via the PANTHER Classification
System (\url{http://pantherdb.org}) \citep{mi:panther}.  Specifically,
we performed an overrepresentation test for Gene Ontology biological
process terms relative to the whole genome background.  We used the
``slim'' term set and Fisher's exact test, controlling FDR via BH at
5\%. This process yields 15 significantly overrepresented biological
process terms from the TDC list and 17 from the LBM list
(Table~\ref{table:panther}).  The two missing terms---''cellular
protein localization'' and ``cellular macromolecule
localization''---are closely related and imply that the sample under
investigation is enriched for proteins responsible in shuttling or
maintaining other proteins in their proper cellular compartments.
Critically, an analysis based solely on the traditional TDC approach would entirely miss this property of the sample being analyzed.

\begin{table}
  \begin{tabular}{lrr}
    \hline
    Gene Ontology term & TDC $q$-value & LBM $q$-value \\
    \hline
cellular response to unfolded protein (GO:0034620) & 9.35E-05 & 1.41E-04 \\
response to unfolded protein (GO:0006986) & 7.79E-05 & 1.18E-04 \\
chaperone-mediated protein folding (GO:0061077) & 1.37E-04 &2.07E-04 \\
cellular response to heat (GO:0034605) & 1.83E-04 & 2.77E-04 \\
response to heat (GO:0009408) & 1.99E-04 & 3.00E-04 \\
response to topologically incorrect protein (GO:0035966) & 3.11E-04 & 4.68E-04 \\
tRNA metabolic process (GO:0006399) & 2.42E-02 & 3.31E-02 \\
response to stress (GO:0006950) & 7.73E-03 & 1.15E-02 \\
cellular amino acid metabolic process (GO:0006520) & 1.14E-02 & 1.68E-02 \\
protein folding (GO:0006457) & 1.31E-02 & 1.93E-02 \\
translation (GO:0006412) & 9.74E-07 & 2.94E-06 \\
formation of translation initiation ternary complex (GO:0001677) & 1.37E-05 & 3.38E-05 \\
translational termination (GO:0006415) & 9.10E-06 & 2.26E-05 \\
translational elongation (GO:0006414) & 6.83E-06 & 1.69E-05 \\
cellular protein localization (GO:0034613) & --- & 2.93E-02 \\
cellular macromolecule localization (GO:0070727) & --- & 3.13E-02 \\
gene expression (GO:0010467) & 2.47E-02 & 4.77E-02 \\
    \hline
  \end{tabular}
  \caption{{\bf Statistical overrepresentation of Gene Ontology terms in the yeast data set.}  Each row is a Gene Ontology biological process term that is deemed significant at FDR $<0.05$.  Enrichments were tested twice, with respect to peptides identified using TDC and LBM.
    \label{table:panther}}
\end{table}

Finally, although aTDC was designed for the spectrum identification problem and in practice was never applied to the peptide detection
problem, it was instructive to add aTDC to this comparison.
LBM consistently delivered more detected peptides than aTDC did, although in some cases the
difference is marginal. Still, in the human dataset for $\alp=0.01$ with $d=3$ we see a 4.4\% increase in power going from
aTDC to LBM (524.2 vs.~547.1 discoveries), and with $d=5$ a 4.6\% increase when using LBM (530.8 vs.~555.0 discoveries).
Similarly, in the ISB18 dataset for $\alp=0.01$ with $d=3$ we see a 7.3\% increase in power going from
aTDC to LBM (132.3 vs.~142.0 discoveries), and with $d=5$ a 6.4\% increase when using LBM (134.7 vs.~143.3 discoveries).

\section{Discussion}

We consider a new perspective on the peptide detection problem, which can be framed
more broadly as multiple-competition based FDR control.
The problem we pose and the tools we offer can be viewed as bridging the gap between the canonical
FDR controlling procedures of BH and Storey and the single-decoy
approach of the popular TDC used in spectrum identification (ID).
Indeed, our proposed FDS converges to Storey's method as the number
of decoys $d\lra\infty$ (\supsec~\ref{suppsec:The-limiting-behavior}).

The methods we propose here rely on our novel mirandom procedure,
which guarantees FDR control in the finite sample case for any pre-determined values
of the tuning parameters $c,\lam$.
% We developed FDS to help the user get around the difficulty of
% choosing optimal values for $c,\lam$ in a similar way that Storey's method
% chooses analogous parameters when p-values are available ($d=\infty$).
% This required developing a discrete analogue of the spline
% method of \cite{storey:statistical} that seeks
% the point where the p-value histogram flattens out.
% Even with FDS and its sibling FDS$_1$
Our extensive simulations show that which of our methods delivers
the maximal power varies with the properties
of the experiment, as well as with the FDR threshold $\alp$.
This variation motivates our introduction of LBM. LBM relies on a novel labeled resampling technique,
which allows it to select its preferred method after testing whether a direct maximization
approach seems to control the FDR. Our simulations
suggest that LBM largely controls the FDR and seems to offer the best balance among our multi-decoy methods as well as a
significant power advantage over the single-decoy TDC.% (Supplementary Figures \ref{fig:power-LBM-non-calibrated} and \ref{fig:power-LBM-calibrated}).

Applying our methods to the peptide detection problem suggests
that, as with the simulated data, the FDR is controlled and LBM can
deliver significantly more power (up to almost 50\% more discoveries)
than the single decoy TDC.

We stress that the problem of peptide detection is important not only as a stepping stone toward the downstream goal of detecting proteins; in many proteomics studies, the peptides themselves are of primary interest.
For example, MS/MS is being increasingly applied to complex samples, ranging from the microbiome in the human gut \citep{lin:proteomics} to microbial communities in environmental samples such as soil or ocean water \citep{saito:progress}.
In these settings, the genome sequences of the species in the community are only partially characterized, so protein inference is problematic.
Nonetheless, observation of a particular peptide can often be used to infer the presence of a group of closely related species (a taxonomic clade) or closely related proteins (a homology group).
Peptide detection is also of primary interest in studies that aim to detect so-called ``proteoforms'' --- variants of the same protein that arise due to differential splicing of the mature RNA or due to post-translational modifications of the translated protein.
Identifying proteoforms can be critically important, for example, in the study of diseases like Alzheimer's or Parkinson's disease, in which the disease is hypothesized to arise in part due to the presence of deviant proteoforms \citep{morris:tau, ping:global, wildburger:diversity}.

Finally, as mentioned in the Introduction, our approach is applicable beyond peptide detection.
Moreover, while we stated our results assuming iid decoys, the results hold in a
more general setting of \emph{``conditional null exchangeability''} (\supsec~\ref{condExch}).
This exchangeability is particularly relevant for future work on generalizing the construction of \cite{barber:controlling}
to multiple knockoffs, where the iid decoys assumption is unlikely to hold.

\textbf{Related work.} We recently developed aTDC in the context of spectrum ID.
The goal of aTDC was to reduce the decoy-induced variability associated with TDC
by averaging a number of single-decoy competitions \citep{keich:progressive,keich:averaging}.
As such, aTDC fundamentally differs
from the methods of this paper which simultaneously use all the decoys in a single competition;
hence, the methods proposed here can deliver a significant power advantage over aTDC (panel F, \supfig~\ref{fig:power_LBM} and \supfig~\ref{fig:peptide-detect}).
Our new methods are designed for the iid (or exchangeable) decoys case, which is a reasonable assumption
for the peptide detection problem studied here but does not hold for the spectrum ID for which aTDC was devised.
Indeed, as pointed out in \cite{keich:controlling},
due to the different nature of native/foreign false discoveries, the
spectrum ID problem fundamentally differs from the setup of this paper
and even the above, weaker, null exchangeability property does not hold in this case.
Thus, LBM cannot replace aTDC entirely; indeed, LBM is too liberal in the context
of the spectrum ID problem. Note that in practice aTDC has not previously been applied to the peptide detection problem.

While working on this manuscript we became aware of a related Arxiv submission~\citep{he:direct}.
The initial version of that paper had just the mirror method, which as we show is quite limited
in power. A later version that essentially showed up simultaneously with the submission of our technical report \citep{emery:multiple}
extended their approach to a more general case; however,  the method still consists of a subset of our independently developed
research in that: (a) they do not consider
the $\lam$ tuning parameter, (b) they use the uniform random map $\vrp_u$ which, as we show, is inferior to mirandom, and (c) they
do not offer either a general deterministic (FDS) or bootstrap based (LBM) data-driven selection of the tuning parameter(s),
relying instead on a method that works only in the limited case-control scenario they consider.

{
\footnotesize  \bibliographystyle{plainnat} %\bibliographystyle{plain}
%\bibliography{/Users/keich/bio_papers/noble-lab-references/refs}
\bibliography{refs}
}

\clearpage

\section{Supplementary Material}

\subsection{The problem with pooling the decoys}\label{sec:Failures-of-established}

Two significant problems arise
when pooling the decoys to compute the p-values. First, these p-values
do not satisfy the assumption that the p-values
of the true null hypotheses are independent: because all p-values
are computed using the same batch of pooled decoy scores, it is clear
that they are dependent to some extent. While this dependency diminishes
as $m\ra\infty$, there is a second, more serious problem that in general
cannot be alleviated by considering a large enough $m$.

Specifically, in pooling the decoys we make the implicit assumption that the score
is calibrated, i.e., that all true null scores are generated according
to the same distribution. If this assumption is violated, as is typically
the case in the spectrum identification problem for one~\citep{keich:importance},
then the p-values of the true null hypotheses are not identically
distributed, and in particular they are also not (discrete) uniform
in general. This means that even the more conservative BH procedure
is no longer guaranteed to control the FDR, and the problem is much worse with Storey.
Indeed, Supplementary Section \ref{subsec:Failures-examples} below shows that there are
arbitrary large examples wherein Storey significantly fails to
control the FDR, and similar ones where BH is essentially powerless.
Those examples, demonstrate that,
in general, applying BH or Storey's procedure to p-values that are
estimated by pooling the competing null scores can be problematic
both in terms of power and control of the FDR.
Note that these issues have previously been discussed in the context of the spectrum identification problem, where the
effect of pooling on power and on FDR control were demonstrated using
simulated and real data~\citep*{keich:importance,keich:improved}.

\subsubsection{Examples of failings of the canonical procedures\label{subsec:Failures-examples}}

Consider BH applied to just $m=2$ hypotheses with $d=1$ decoy, and
suppose that $P\left(\til Z_{1}^{1}>\til Z_{2}^{1}\right)=1$ (i.e.,
the support of the null distribution corresponding to $H_{1}$ is
disjoint and to the right of the support of the null distribution
corresponding to $H_{2}$). Suppose further that both $H_{i}$ are
true nulls, so that the FDR coincides with the FWER (family-wise error
rate), which is the probability of at least one (false) discovery.
It is easy to see that in this case using the FDR threshold $\alp\coloneqq2/3$
the event $\left\{ Z_{1}>\til Z_{1}^{1},Z_{2}<\til Z_{2}^{1}\right\} $
will produce one discovery ($p_{1}\coloneqq\text{p-value}\left(Z_{1}\right)=1/3,p_{2}\coloneqq\text{p-value}\left(Z_{2}\right)=1$),
and the disjoint event $\left\{ Z_{2}>\til Z_{2}^{1}\right\} $ will
produce two discoveries ($p_{2}=2/3$). However, these events have
a total probability of $1/4+1/2=3/4$ so the FWER=FDR is $>\alp$
in this case.

This effect can be much more pronounced in the case of Storey's method.
Suppose that the null hypotheses split into two equal sized groups,
$A$ and $B$, where for every $i\in A$ and $j\in B$, $P\left(\til Z_{i}^{1}<\til Z_{j}^{1}\right)=1$.
Suppose further that all the hypotheses in $A$ are false nulls with
scores $Z_{i}$ satisfying $P\left(Z_{i}>\til Z_{i}^{1}\right)=1$,
and that all hypotheses in $B$ are true nulls. The decoy-pooled p-values
will be essentially no greater than 1/2. Hence, Storey's estimate
of $\pi_{0}$, $\hat{\pi}_{0}(\lam)=\frac{m-R(\lam)}{(1-\lam)m}$,
where $m$ is the number of hypotheses, and $R(\lam)$ is the number
of hypotheses whose p-value is $\le\lam$, will significantly underestimate
$\pi_{0}$. For example, if $\lambda\geq0.5$ then $\hat{\pi}_{0}=0$,
which in turn implies that essentially all null hypotheses will be
rejected at any FDR level $\alp$ and particularly for $\alp<1/2$,
while the actual FDP will clearly be $1/2$. Even if $\lambda$ is
chosen to better fit these p-values, e.g., $\lambda=0.25$, or the
set-up is changed slightly to allow some group $A$ p-values to be
null so $\hat{\pi}_{0}\neq0$, the procedure will still significantly
underestimate $\pi_{0}$ and thus underestimate the actual FDR.
\begin{example}
\label{ex:Storey-fail}As a specific example in the above vein we
constructed an experiment with $m=300$ and $d=5$ decoys where group
$A$'s true null distribution is \textbf{$N(0,1)$,} and group $B$'s
true null distribution is \textbf{$N(50,1)$}. We set all 150 hypotheses
in group $B$ and 50 of the 150 hypotheses in group $A$ to be true
null, and we generated observed scores by sampling from the appropriate
null distribution above. We next generated the observed scores for
the 100 false null hypotheses in group A by sampling from the same,
significantly shifted, $N(50,1)$ distribution that we used to generate
all observed scores of group $B$. All competing null (decoy) scores
were generated using the group's null distribution. In this setup
we chose to leave a third of group $A$ as true nulls so that approximately
$50$ of the p-values will exceed $1/2$ ensuring that $\hat{\pi}_{0}>0$.

We then computed the pooled p-values and applied Storey's FDR controlling
procedure, as presented in the package \texttt{qvalue} \citep{storey:qvalueR}
(with $\lambda$ chosen using the \texttt{bootstrap} option). This
experiment was repeated using 1,000 randomly drawn sets, noting each
time the real FDP at FDR thresholds of $\alp=0.1$ and $\alp=0.2$.
As expected in this setting, Storey's procedure clearly failed to
control the FDR: at $\alpha=0.1$, the empirical FDR (the FDP averaged
over the 1K samples) was $0.24$, or over $200\%$ of what it should
be, and for $\alpha=0.2$ the empirical FDR was larger than $0.5$
again indicating a significant violation.
\end{example}
We could not find such examples, with an essentially arbitrary large
$m$ and a substantial liberal bias, when using the BH procedure.
However, we found a class of arbitrary large examples, similar to
the above class (on which Storey fails to control the FDR), where
due to pooling the conservative nature of BH was amplified to the
point where it was essentially powerless. Consider four groups $A$,
$B$, $C$ and $D$ and suppose that for every $i\in A$, $j\in B$,
$k\in C$ and $l\in D$, $P\left(\til Z_{i}^{1}<\til Z_{j}^{1}<\til Z_{k}^{1}<\til Z_{l}^{1}\right)=1$.
Suppose further that all the hypotheses in groups $A$ and $B$ are
false null with scores that fall in the range of values of the subsequent
group, so in particular $P\left(Z_{i}>\til Z_{k}^{1}\right)=0$ and
similarly $P\left(Z_{j}>\til Z_{l}^{1}\right)=0$. It is easy to see
that using pooling in this case the p-values for the (false) null
hypotheses in groups $A$ and $B$ will be $\ge1/2$ and $1/4$ respectively,
and it follows that no discoveries can be made by BH with $\alpha<1/4$,
regardless of how large $m$ and $d$ are.
\begin{example}
\label{ex:BH-fail}Again, we construct a specific example according
to the above general outline. We set $m=300$, so that each of the
four groups has 75 hypotheses, and we use $d=5$ decoys. The null
distribution of each group is set as $N(\mu,1)$, where $\mu$ increases
from $\mu_{A}=0$, by 50, to $\mu_{D}=150$. The observed scores corresponding
to the 150 false null hypotheses of groups $A$ and $B$ were drawn
from the null distributions of group $B$ and $C$ respectively, whereas
the 150 observed scores of groups $C$ and $D$ were drawn from their
respective null distributions. Using pooled p-values BH does not yield
any discovery for any $\alpha\le0.65$ amongst any of our 1000 samples,
and it was not until using $\alpha=0.7$ that we finally started seeing
some samples on which BH had non-zero power. Incidentally, even using
non-pooled p-values is slightly better here: the first samples with
non-zero BH power appear for $\alpha=0.3$.
\end{example}

\subsection{Simulation setup\label{sec:Simulation-setup}}

In order to analyze and compare the performance of the FDR-controlling procedures we 
simulated datasets with both calibrated (all true null scores are generated according to the same distribution) and non-calibrated scores ---
a comparison that also allowed us to select our overall recommended procedure.

In the non-calibrated case we allow the distribution of the null scores to vary with
the hypotheses so we sample from hypothesis-specific distributions.
Specifically, for simulating using a non-calibrated score we associate
with the null hypothesis $H_{i}$ a normal $N(\mu_{i},\sig_{i}^{2})$
distribution from which its competing null (decoy/knockoff) scores
are sampled. If $H_{i}$ is labeled a true null, this is also the
distribution from which the observed score is sampled. Otherwise,
$H_{i}$ is a false null, so the observed score is sampled from a
$\gam_{i}$-shifted normal $N(\mu_{i}+\gam_{i},\sig_{i}^{2})$ distribution,
where $\gam_{i}>0$. The parameters $\mu_{i}$, $\sig_{i}^{2}$, and
$\gam_{i}$ are themselves sampled with each newly sampled set of
scores: 
\begin{itemize}
\item $\mu_{i}$ is sampled from a normal $N(\mu,\sigma^{2})$ distribution
with the hyper-parameters $\mu=0$ and $\sigma^{2}=1$. 
\item $\sigma_{i}^{2}$ is sampled from $1+\exp(\omega)$, where $\exp(\omega)$
is the exponential distribution with rate $\omega=1$. 
\item $\gamma_{i}$ is sampled from $1+\exp(\nu)$, where $\nu$ is a hyper-parameter
that determines the separation between the false and true null scores
\end{itemize}
When simulating using a calibrated score the parameters $\mu_{i},\sig_{i}$
and $\gam_{i}$ are kept constant.

% \subsection{Finite number of decoys}

In our non-calibrated score simulations we drew 10K random sets of
observed and competing null scores (each with its own randomly drawn
values of $\mu_{i},\sig_{i},\gam_{i}$) for each of the following
600 combinations of parameter (or hyper-parameter) values:
\begin{itemize}
\item The number of false null hypotheses, $k$, was set to each value in
$\left\{ 1,10,10^{2},10^{3},10^{4}\right\} $.
\item For each value of $k$, the total number of hypotheses, $m$, was
set to $\min\{c\cdot k, 2\cdot10^{4}\}$ where $c$ was set to each of the following factors
$\left\{ 1.25,2,4,10,20,100,1000\right\}$ subject to the constraint
that $m\ge100$.
\item For each values of $k$ and $m$ above, the hyper-parameter $\nu$
that determines the separation between the false and true null scores
was set to each of the values in \\$\{0.01,0.05,0.1,0.25,0.5,1.0\}$.
\item For each values of $k$, $m$, and $\nu$ above, the number of decoys
$d$ was set to each of the values in $\left\{ 3,5,9,19,39\right\} $.
\end{itemize}
We then used the 10K sampled sets from each of the 600 experiments
to find the empirical FDR as well as the power of each method for
each selected FDR threshold $\alp\in\Phi$. For a given threshold
$\alp$, the power of a method is the average percentage of false
nulls that are reported by the method at level $\alp$, and the empirical
FDR is the average of the FDP in the discovery list (both averages
are taken over the experiment's 10K runs). We used a fairly dense
set of FDR thresholds $\Phi$: from 0.001 to 0.009 by jumps of 0.001,
from 0.01 to 0.29 by jumps of 0.01, and from 0.3 to 0.95 by jumps
of 0.05.

Our calibrated score simulation also consisted of 600 experiments,
or combinations of parameter values. Specifically, we used the same
values of $k$, $m$, and $d$ as in the above non-calibrated simulations
and we let $\gam$ vary over the values in $\left\{ 0.8,1,1.4,2,3,4\right\} $.
In each experiment we again draw 10K random sets of observed and competing
scores using $\mu_{i}\equiv0,\sig_{i}\equiv1,\gam_{i}\equiv\gam$.

In both setups we examined the FDR control by looking at the ratio between the empirical FDR
(the observed FDP averaged over 10K runs) and the selected
threshold as well as at the power which is the average (over 10K runs) percentage of false nulls we discover.

\subsection{Determining $\protect\lam$ from the empirical p-values\label{subsec:Determining-lambda}}

Given an upper bound $\Lambda$ on $\lam$ (we used 0.95), and a binomial test significance cutoff $\beta$ (we used 0.1)
\begin{enumerate}
	\item Initialize: $i \coloneqq 1$.
	\item If $i \geq \Lambda \cdot d_1$ or $i = d$ then 
	\begin{itemize}
		\item set $i_\lambda \coloneqq i$ and stop.
	\end{itemize}
	\item If $i + d_1$ is even then
	\begin{itemize}
		\item set $i_s \coloneqq (i + d_1)/2$,
		\item $n_p^+ \coloneqq \#\left\{ \tilde{p}_i \in [(i_s + 1)/d_1, 1] \right\}$,
		\item $n_p^- \coloneqq \#\left\{\tilde{p}_i \in [ (i + 1)/d_1, i_s/d_1 ]\right\}$.
	\end{itemize}
	\item Otherwise,
	\begin{itemize}
		\item set $i_s \coloneqq (i + d_1 + 1)/2$,
		\item $n_p^+ \coloneqq \#\left\{\tilde{p}_i \in [(i_s + 1)/d_1, 1]\right\}$,
		\item $n_p^- \coloneqq \#\left\{\tilde{p}_i \in [ (i+1)/d_1, (i_s - 1)/d_1]\right\}$.
	\end{itemize}
	\item Calculate $p_b = P(B \ge n_p^-)$ where $B \sim Binomial(n_p^+ + n_p^-, 0.5).$
	\item If $p_b > \beta$ then (the remaining tail of the p-value histogram ``seems to have flattened'')
	\begin{itemize}
		\item set $i_\lambda \coloneqq i$ and stop.
	\end{itemize}
	\item Otherwise (we are yet to see the flattening of the tail of the p-value histogram),
	\begin{itemize}
		\item set $i \coloneqq i + 1$,
		\item return to step $2$.
	\end{itemize}
\end{enumerate}
Note that the interval $\left(\lam,1\right]$ from which $\pi_{0}$
is estimated in (\ref{eq:pi0-est}) coincides with $\left[(i_\lam+1)/d_{1},1\right]$.

\subsection{The limiting behavior of our FDR controlling methods\label{suppsec:The-limiting-behavior}}

In its selection of the parameter $c$, FDS essentially applies Storey's
procedure to the empirical p-values; however, there is a more intimate
connection between FDS, and more generally between some of the methods
described above and Storey's procedure that becomes clearer once we
let $d\ra\infty$. To elucidate that connection we further need to
assume here that the score is calibrated, that is, that the distribution
of the decoy scores is the same for all null hypotheses. In this case,
we might as well assume our observed scores are already expressed
as p-values: $Z_{i}=p_{i}$ (keeping in mind that this implies that
small scores are better, not worse as they are elsewhere in this paper).

It is not difficult to see that for a given $\left(c,\lam\right)$,
mirandom assigns, in the limit as $d\ra\infty$, $W_{i}\coloneqq Z_{i}=p_{i}$
if $p_{i}\le c$ ($L_{i}=1$, or original win), and $W_{i}\coloneqq\left(1-p_{i}\right)\cdot c/\left(1-\lam\right)\in\left[0,c\right)$
if $p_{i}>\lam$ ($L_{i}=-1$, or decoy win). Sorting the scores $W_{i}$
in increasing order (smaller scores are better here) $W_{(1)}<W_{(2)}<\dots<W_{(m)}$,
we note that for $i$ with $W_{\left(i\right)}=p_{\left(i\right)}\le c$
the denominator term $\#\left\{ j\le i\,:\,L_{(j)}=1\right\} $ in
(\ref{eq:reject_criterion-general-c-lam}) is the number of original
scores, or p-values $p_{j}\le p_{(i)}$. At the same time, for the
same $i$ and $j\le i$, $L_{(j)}=-1$ if and only if $p_{\left(j\right)}>\lam$
and $W_{\left(j\right)}<W_{\left(i\right)}=p_{\left(i\right)}\le c$
so we have for the numerator term
\[
\#\left\{ j\le i\,:\,L_{(j)}=-1\right\} =\#\left\{ j\,:\,p_{j}>\lam,W_{j}<p_{(i)}\right\} =\#\left\{ j\,:\,p_{j}>1-\frac{1-\lam}{c}p_{(i)}\right\} .
\]
Considering that $i_{\alp c\lam}<m$ in (\ref{eq:reject_criterion-general-c-lam})
must be attained at an $i$ for which $W_{i}=p_{i}\le c$ (original
win), we can essentially rewrite (\ref{eq:reject_criterion-general-c-lam})
as
\begin{equation}
i_{\alp c\lam}=\max\left\{ i\,:\,\frac{1+\#\left\{ j:\,p_{j}>1-\frac{1-\lam}{c}p_{(i)}\right\} }{\#\left\{ j:\,p_{j}\le p_{(i)}\right\} \vee1}\cdot\frac{c}{1-\lam}\le\alp\right\} .\label{eq:limit1}
\end{equation}

Consider now Storey's selection of the threshold $t_{\alp}$, which
when using the more rigorous estimate (\ref{eq:pi0-est}) essentially
amounts to
\[
t_{\alp}=\max\left\{ t\in\left[0,\lam^{*}\right]\,:\,\frac{1+\#\left\{ j:\,p_{j}>\lam^{*}\right\} }{\#\left\{ j:\,p_{j}\le t\right\} \vee1}\cdot\frac{t}{1-\lam^{*}}\le\alp\right\} ,
\]
 where $\lam^{*}$ is a tuning parameter. Considering the cases where
$t_{\alp}\le c$ and setting $\lam^{*}(t)\coloneqq1-\left(1-\lam\right)t/c$
Storey's threshold $t_{\alp}$ becomes
\[
t_{\alp}=\max\left\{ t\in\left[0,c\right]\,:\,\frac{1+\#\left\{ j:\,p_{j}>1-\frac{1-\lam}{c}t\right\} }{\#\left\{ j:\,p_{j}\le t\right\} \vee1}\cdot\frac{c}{1-\lam}\le\alp\right\} .
\]
Since in practice $t_{\alp}$ can be taken as equal to one of the
$p_{(i)}$ the equivalence with (\ref{eq:limit1}) becomes obvious
by identifying $t$ above with $p_{(i)}$ in (\ref{eq:limit1}).

Thus, for example, as $d\ra\infty$ the mirror method ($\lam=c=1/2$)
converges, in the context of a calibrated score, to Storey's procedure
using $\lam^{*}(t)\coloneqq1-t$, which coincides with the ``mirroring
method'' of~\cite{xia:neuralfdr}. It is worth noting that the general
setting $\lam^{*}(t)\coloneqq1-\left(1-\lam\right)t/c$ is not obviously
supported by the finite sample theory of~\cite{storey:strong}; however,
it can be justified by noting the above equivalence and our results
here.

An even more direct connection with Storey's procedure is established
by letting $d\ra\infty$ in the context of FDS. Indeed, using the
same $\lam$ determined by the progressive interval splitting procedure
described in Section \ref{subsec:Determining-lambda}, Storey's finite-sample
procedure (\ref{eq:STS-c}) would amount to setting the rejection
threshold to
\[
t_{\alp}=\max\left\{ t\in\left[0,\lam\right]\,:\,\frac{m\cdot\hat{\pi}_{0}^{*}(\lam)\cdot t}{R(t)\vee1}\le\alp\right\} .
\]
Recalling that $d_{1}\ra\infty$ we note that the latter $t_{\alp}$
coincides with the value FDS assigns to $c$ via (\ref{eq:t-FDS})
and (\ref{eq:c-FDS}). Let $i_{c}$ be such that the above $t_{\alp}=c\in\left[p_{\left(i_{c}\right)},p_{\left(i_{c}+1\right)}\right)$
(recall we assume no ties here), then we can assume without loss of
generality that $t_{\alp}=c=p_{\left(i_{c}\right)}$ and hence (compare
with (\ref{eq:limit1})) the mirandom part of FDS will find
\[
i_{\alp c\lam}=\max\left\{ i\le i_{c}\,:\,\frac{1+\#\left\{ j:\,p_{j}>1-\frac{1-\lam}{c}p_{(i)}\right\} }{\#\left\{ j:\,p_{j}\le p_{(i)}\right\} \vee1}\cdot\frac{c}{1-\lam}\le\alp\right\} .
\]
But
\[
\frac{1+\#\left\{ j:\,p_{j}>1-\frac{1-\lam}{c}p_{(i_{c})}\right\} }{\#\left\{ j:\,p_{j}\le p_{(i_{c})}\right\} \vee1}\cdot\frac{c}{1-\lam}=\frac{1+\#\left\{ j:\,p_{j}>\lam\right\} }{\#\left\{ j:\,p_{j}\le c\right\} \vee1}\cdot\frac{c}{1-\lam}=\frac{m\cdot\hat{\pi}_{0}^{*}(\lam)\cdot t_{\alp}}{R(t_{\alp})\vee1}\le\alp.
\]
Hence $i_{c}$ satisfies the required inequality and $i_{\alp c\lam}=i_{c}$.
It follows that the rejection lists of FDS and the above variant of
Storey's procedure with the same $\lam$ coincide in the $d\ra\infty$
limit.

In Supplementary Section 8.7.2 of our technical report \cite{emery:multiple} we compare
the limiting methods of our procedures as $d\ra\infty$
using the same kind of simulations we use in this paper for the finite decoys case.
These experiments showed trends similar to those found in the finite case, indicating that the relationships between the methods continue to the limit.

\subsection{Labeled resampling\label{subsec:resampling}}

For clarity
of the exposition we break the description of our labeled resampling procedure into two parts with the first describing
how we generate a sample of conjectured true/false null labels.
\begin{enumerate}
\item Determine $\lam$ as described in \supsec~\ref{subsec:Determining-lambda}
\item Using $c=\lam$ from step 1 above, apply steps 1-2 of mirandom (Section~\ref{sec:metaProc}
with $\vrp\equiv\vrp_{md}$ of Section~\ref{sec:mirandom}) to define the assigned scores $W_{i}$ and
labels $L_{i}$, and order the hypotheses in a decreasing order of $W_{i}$
\item Initialize by setting:
\begin{itemize}
\item $j\coloneqq1$ ($j$ is the index of the set of hypotheses we currently
consider)
\item $i_{1}\coloneqq1$, $i_{0}\coloneqq0$ ($i_{j}$ is the number of
hypotheses in $\HC_{j}$) 
\item $l\coloneqq0$ ($l$ denotes the index of last drawn conjectured false
null)
\item $\fb\coloneqq\left(0,0,\dots,0\right)$ ($\fb_{i}$ is the indicator
of whether or not we conjecture $H_{i}$ is a false null)
\end{itemize}
\item Estimate $a_{j}$, the number of false null hypotheses in $\HC_{j}=\left\{ H_{i}\,:\,i\le i_{j}\right\} $,
as
\[
a_{j}\coloneqq\left(\#\left\{ i\le i_{j}\,:\,L_{i}=1\right\} -\#\left\{ i\le i_{j}\,:\,L_{i}=-1\right\} \cdot\frac{\lam}{1-\lam}\right)\vee0.
\]
Note that the first term is the number of original wins among the
hypotheses in $\HC_{j}$ and the second is essentially the numerator
of (\ref{eq:reject_criterion-general-c-lam}) (with $c=\lam$), which uses the number of
decoy wins to estimate the number of false discoveries among those
original wins.
\item If $a_{j}>\left\Vert \fb\right\Vert _{1}$ (the number of conjectured
false nulls drawn so far) then draw $a_{j}-\left\Vert \fb\right\Vert _{1}$
additional conjectured false nulls as follows:
\begin{enumerate}
\item for each $i\in\left\{ l+1,l+2,\dots,i_{j}\right\} $ let $w_{i}\coloneqq1-\til p_{i}$,
where $\til p_{i}$ are the empirical p-values
\item while $a_{j}-\left\Vert \fb\right\Vert _{1}>0$:
\begin{enumerate}
\item draw an index $i\in\left\{ l+1,\dots,i_{j}\right\} $ according to
the categorical distribution with a probability mass function proportional
to $w_{i}$ 
\item set $\fb_{i}\coloneqq1$ and $w_{i}\coloneqq0$
\end{enumerate}
\end{enumerate}
\item If $i_{j}=m$ return the conjectured labels $\fb$, else continue
\item Set $\del_{j+1}\coloneqq i_{j}-i_{j-1}+1$ if no new conjectured false
null were drawn in step 5, otherwise set $\del_{j+1}\coloneqq i_{j}-i_{j-1}$
\item Set $i_{j+1}\coloneqq\left(i_{j}+\del_{j+1}\right)\wedge m$
\item Set $j\coloneqq j+1$ and go back to step 4
\end{enumerate}
Note that step 7 lets the data determine the number of hypotheses
in $\HC_{j+1}\setminus\HC_{j}$: this number grows if going from $\HC_{j-1}$
to $\HC_{j}$ we concluded we do not need to draw any additional conjectured
false nulls. This scheme is well adapted to handle a fairly common
scenario where most of the highest scoring hypotheses are false null,
making sure they will be labeled as such in our resamples.

The second phase of the algorithm simply resamples the indices in
the usual bootstrap manner and then randomly permutes the conjectured
true null scores:
\begin{enumerate}
\item independently sample $m$ indices $j_{1},\dots,j_{m}\in\left\{ 1,2,\dots,m\right\} $
\item for $i=1,\dots,m$: 
\begin{enumerate}
\item if $\fb_{j_{i}}=0$ draw a permutation $\pii_{i}\in\Pi_{d_{1}}$,
else, $\fb_{j_{i}}=1$ so define $\pi_{i}\coloneqq Id\in\Pi_{d_{1}}$
(the identity permutation)
\item apply the permutation $\pi_{i}$ to $\VV_{i}\coloneqq\left(\til Z_{j_{i}}^{0}\coloneqq Z_{j_{i}},\til Z_{j_{i}}^{1},\dots,\til Z_{j_{i}}^{d}\right)$:
$\VV_{i}\circ\pi_{i}\coloneqq\left(\til Z_{j_{i}}^{\pi_{i}(1)-1},\dots,\til Z_{j_{i}}^{\pi_{i}(d_{1})-1}\right)$
\end{enumerate}
\item return the resampled labeled data $\left\{ \left(\VV_{i}\circ\pi_{i},\fb_{j_{i}}\right)\,:\,i=1,\dots,m\right\} $
\end{enumerate}

\subsection{Labeled Bootstrap monitored Maximization (LBM)}
\label{supsec:LBM}

Given the list of original and decoy scores, an ordered list of candidate methods $\M$, a fall-back method $M_f$,
a set of considered FDR thresholds $\Phi$, and the number of bootstrap samples $n_b$, LBM executes the following steps:
\begin{enumerate}
	\item For each bootstrap/resample run $i = 1, \dots n_b$:
	\begin{enumerate}
		\item Generate a labeled resample as describe in \supsec~\ref{subsec:resampling} above.
		\item Apply each method $M\in \M$ to the resample noting the number of discoveries $D_M^i(\alp)$ for each $\alp\in\Phi$, as well as
		the corresponding FDP, $F_M^i(\alp)$ (computed based on the conjectured labels of the resample).
		\item For each $\alp\in\Phi$ sort the methods according to $D_M^i(\alp)$ with ties broken according to the rank of the methods in the list $\M$, and
		\begin{enumerate}
			\item
			record the rank $r_M^i(\alp)$ of each method,
			\item
			record $F_*^i(\alp)\coloneqq F_M^i(\alp)$ where $M$ is the highest rank method (with the largest number of discoveries).
		\end{enumerate}
	\end{enumerate}
	\item For each $\alp\in\Phi$:
	\begin{enumerate}
		\item
		Estimate the FDR of the direct maximization approach as the simple average 
			$\what{FDR}_*(\alp)\coloneqq \frac{1}{n_b}\sum_{i=1}^{n_b}F_*^i(\alp)$.
		\item If $\what{FDR}_*(\alp) > \alpha$ (see the comment below) then
		\begin{itemize}
			\item (the FDR of direct maximization seems too high so) set the selected method for this $\alp$ to the fall-back method: $S(\alp)\coloneqq M_f$.
		\end{itemize}
		Otherwise,
		\begin{itemize}
			\item (direct maximization seems to work fine so) set the selected method to the one with the highest average rank: $S(\alp)\coloneqq \argmax_M \sum_{i=1}^{n_b}r_M^i(\alp)$ (ties are broken according to the rank of the methods in the list $\M$).
		\end{itemize}
	\end{enumerate}
\end{enumerate}

We added to LBM two options that in practice were used throughout our reported simulations.
The first is that we allowed some slack when comparing $\what{FDR}_*(\alp)$ with $\alpha$ to
check whether the FDR of direct maximization seems too high (step 2b above).
Specifically, particularly because the empirical mean is taken over
a relatively small number of resamples (we used $n_{b}=50$ in our
applications), we instead checked whether
\[
\what{FDR}_*(\alp) > \alp + 4\sig(\alp)\cdot\left(1-\hat{\pi}_{0}^{*}(\lam)\right) ,
\]
where $\hat{\pi}_{0}^{*}(\lam)$ is the $\pi_{0}$ estimate
used by FDS$_{1}$ described in \supsec s~\ref{subsec:Determining-lambda}
and \ref{subsec:FDS}, and $\sig(\alp)$ is the estimated standard error of $\what{FDR}_*(\alp)$.
In practice, this relaxation lead to some increase in power with no visible impact on the FDR control.

The second option is a post-processing step that aims to produce
a monotone list of discoveries as a function of the FDR threshold $\alp$.
Specifically, we check if the number of discoveries at $\alp_{j+1}$ is smaller than the
number we have when using $\alp_{j}<\alp_{j+1}$, and if that is the
case, then we override our resampling-based selection of the optimal
method for $\alp_{j+1}$ and instead we use the same method that was
previously selected for $\alp_{j}$, i.e., $S(\alp_{j+1}) \coloneqq S(\alp_j)$.

% %Monotonicity
% \begin{enumerate}
% 	\item Order the FDR thresholds of interest so that $\alpha_0 = 0 < \alpha_1 < \alpha_2,\dots$.
% 	\item Set $S_{0} = 0$
% 	\item For each $i$ in $1, \dots n_{\alp}$
% 	\begin{enumerate}
% 		\item Perform the above method for each $\alpha$ of interest.
% 		\item Use the selected procedure to acquire the list of discoveries $S_{\alpha_i}$
% 		\item If $|S_{\alpha_i}| > |S_{\alpha_{i-1}}|$ then set $M_i = M_{i-1}$ and repeat, else continue.
% 	\end{enumerate}
% \end{enumerate}

In terms of the list of candidate methods we consider, $\M$, we need to strike a
balance between considering more methods, equivalently more choices
of $\left(c,\lam\right)$, and the increasing likelihood that the
fall-back would be triggered. In practice, we found that considering
the methods of FDS$_{1}$, mirror, and FDS (and in that the tie-breaking order, so FDS has the highest
priority) works well so the reported version of LBM uses this particular list of methods.

\subsection{Revisiting the failings of the canonical procedures\label{subsec:Failures-examples2}}

Going back to the two examples of Supplementary Section \ref{subsec:Failures-examples}
we note that all our methods essentially control the FDR with the
empirical FDR (FDP averaged over the 1K sets samples sets) below the
selected FDR threshold for all $\alp\in\Phi$ with a single exception
in Example \ref{ex:Storey-fail} at $\alp=0.2$, where FDS, FDS$_{1}$,
and LBM have an empirical FDR of 0.208: 4\% over the threshold, compared
with the $>250\%$ violation of Storey with pooled p-values.

Interestingly, when comparing the power of our methods in Example
\ref{ex:BH-fail}, where BH applied to the pooled p-values made no
true discoveries even at $\alp=0.65$, we find that both the mirror
and FDS$_{1}$ are significantly weaker than FDS, LF and LBM, again
demonstrating the utility of LBM. Specifically, at $\alp=0.15$ both
FDS's and LBM's power stand at 78.5\% and LF's at 62.8\% compared
with 0\% power for both the mirror and FDS$_{1}$. At $\alp=0.2$
FDS, LBM, and LF boast 100\% power while the mirror power stands at
0.1\% and FDS$_{1}$'s power is 0.9\%.

\subsection{Analysis of real data\label{subsec:real-data}}

We applied our analysis to three datasets.

The human data set consists of a single control run ({\footnotesize \texttt{CTL\_R1\_1}} from the data set with MassIVE identifier
{\footnotesize \texttt{MSV000079437}}~\citep{zhong:quantitative}.
The data was generated on an LTQ-Orbitrap Velos Pro on proteins extracted from human SH-SY5Y cells treated with 200 $\mu$M H$_2$O$_2$.
The human reference proteome was downloaded from Uniprot on 28 Nov 2016.

The yeast data set was analyzed twice. The first, using 
a single run ({\footnotesize \texttt{Yeast\_In-gel\_digest\_2}}) selected at random
from the data set with PRIDE identifier {\footnotesize \texttt{PXD002726}}~\citep{schittmayer:cleaning}.
This is the yeast analysis that is reported in Figure \ref{fig:peptide-detect}.
Because TDC reported 0 discoveries for 33 of the 100 decoys we drew, we performed a second
analysis of the yeast data, this time using all three runs
from the same PRIDE dataset.
The data was generated on an LTQ Orbitrap Velos on proteins extracted
from an in-gel digest of \emph{S. cerevisiae} lysate. The yeast reference
proteome was downloaded from Uniprot on 28 Nov 2016.

The ISB18 data set is derived from a series of experiments using an
18-protein standard protein mixture (\href{https://regis-web.systemsbiology.net/PublicDatasets} {https://regis-web.systemsbiology.net/PublicDatasets},~\citep{klimek:standard}).
We use 10 runs carried out on an Orbitrap (\texttt{Mix\_7}). The database
consists of the 18 proteins from the standard mixture, augmented with
the full \emph{H. influenzae} proteome, as provided by Klimek et al.

Searches were carried out using the Tide search engine \citep{diament:faster}
as implemented in Crux~\citep{park:rapid}. The peptide database included
fully tryptic peptides, with a static modification for cysteine carbamidomethylation
(\texttt{C+57.0214}) and a variable modification allowing up to six
oxidized methionines (\texttt{6M+15.9949}). Precursor window size
was selected automatically with Param-Medic~\citep{may:param-medic}.
The XCorr score function was employed for uncalibrated searches, using
a fragment bin size selected by Param-Medic.

Clearly, the competition-based control of the FDR is subject to the
variability of the drawn competing scores. To ameliorate this variability
here, we initially searched the spectra against 100 randomly shuffled
decoy databases, and then for each $d\in\left\{ 1,3,5,7,9\right\} $
we repeated our analysis drawing 100 sets, each with $d$ of those decoy databases, while
making sure that the 100 drawn sets are distinct. We can then compare
the number of discoveries reported by each considered method at the
selected FDR threshold $\alp$ (here $\alp\in\left\{ 0.01,0.05,0.1\right\} $).
More precisely, for each number of decoys $d$ we average the number
of discoveries over the 100 randomly drawn sets of $d$ decoys.

The ISB18 is a fairly unusual dataset in that it was generated using
a controlled experiment, so the peptides that generated the spectra
could have essentially only come from the 18 purified proteins used
in the experiment. We used this dataset to get some feedback on how well our
methods control the FDR, as explained next.

The spectra set was scanned against a target database that included,
in addition to the 463 peptides of the 18 purified proteins, 29,379
peptides of 1,709 \emph{H.~influenzae} proteins (with ID's beginning
with \texttt{gi|}). The latter foreign peptides were added in order
to help us identify false positives: any foreign peptide reported
is clearly a false discovery. Moreover, because the foreign peptides
represent the overwhelming majority of the peptides in the target
database (a ratio of 63.5 : 1), a native ISB18 peptide reported is
most likely a true discovery (a randomly discovered peptide is much
more likely to belong to the foreign majority). Taken together, this setup
allows us to gauge the actual FDP for each set of $d$ drawn decoys,
FDR threshold $\alp$, and the FDR controlling procedure that generated
the discovery list. More precisely, we average the FDP over
the 100 drawn sets of $d$ decoys.

The 87,549 spectra of the ISB18 dataset were assembled from 10 different
aliquots, so in practice we essentially have 10 independent
replicates of the experiment. However, the last aliquot had only 325
spectra that registered any match against the combined target database,
compared with an average of over 3,800 spectra for the other 9 aliquots,
so we left it out when we independently applied our analysis to each
of the replicates. By averaging the above decoy-drawn averaged FDP
over the 9 aliquots we obtain an estimate of the FDR that we can
compare to the selected FDR threshold.

Similarly, when gauging the power of a method on the ISB18 dataset
our analysis was separately applied to each aliquot and then averaged over the aliquots.

\subsection{Conditional null exchangeability}
\label{condExch}

The following condition which is a generalization of the conditional exchangeability property of \cite{barber:controlling}
is weaker than the iid decoys condition. Nevertheless, it can be shown that it is sufficient for our statements to hold.
\begin{defn}
\label{def:cond_exch} Let $\VV_{i}\coloneqq\left(\til Z_{i}^{0},\til Z_{i}^{1},\dots,\til Z_{i}^{d}\right)$,
where $\til Z_{i}^{0}\coloneqq Z_{i}$ is the $i$th original score and $\til Z_i^1,\dots,\til Z_i^d$ are the corresponding $d$ decoy scores,
and let $\Pi_{d_{1}}$ denote the set of all permutations on $\left\{ 1,\dots,d,d+1\eqqcolon d_{1}\right\} $.
With $\pi\in\Pi_{d_{1}}$ we define $\VV_{i}\circ\pi\coloneqq\left(\til Z_{i}^{\pi(1)-1},\dots,\til Z_{i}^{\pi(d_{1})-1}\right)$,
i.e., the permutation $\pi$ is applied to the indices of the vector
$\VV_{i}$ rearranging the order of its entries. Let $N\subset\left\{ 1,2,\dots,m\right\} $
be the indices of the true null hypotheses and call a sequence of
permutations $\pi_{1},\dots,\pi_{m}$ with $\pi_{i}\in\Pi_{d_{1}}$
a \emph{null-only sequence} if $\pi_{i}=Id$ (the identity permutation)
for all $i\notin N$. We say the data satisfies the \emph{conditional
null exchangeability property} if for any null-only sequence of permutations
$\pi_{1},\dots,\pi_{m}$, the joint distribution of $\VV_{1}\circ\pi_{1},\dots,\VV_{m}\circ\pi_{m}$
is invariant of $\pi_{1},\dots,\pi_{m}$. 
\end{defn}

% Indeed, even under the weaker assumption that these $d_1$ scores exchangeable conditional on all other scores
% (\textbf{``conditionally exchangeable decoys''}) it is clear that each rank $r_i$ is uniformly distributed on $1,\dots,d_1$.

\subsection{Figures}

\begin{figure}
\centering %
\begin{tabular}{ll}
A: randomized ($\vrp_u$) vs.~mirror ($\vrp_m$); $\lam=c=1/2$  & B: max vs.~mirror \tabularnewline
\includegraphics[width=3in]{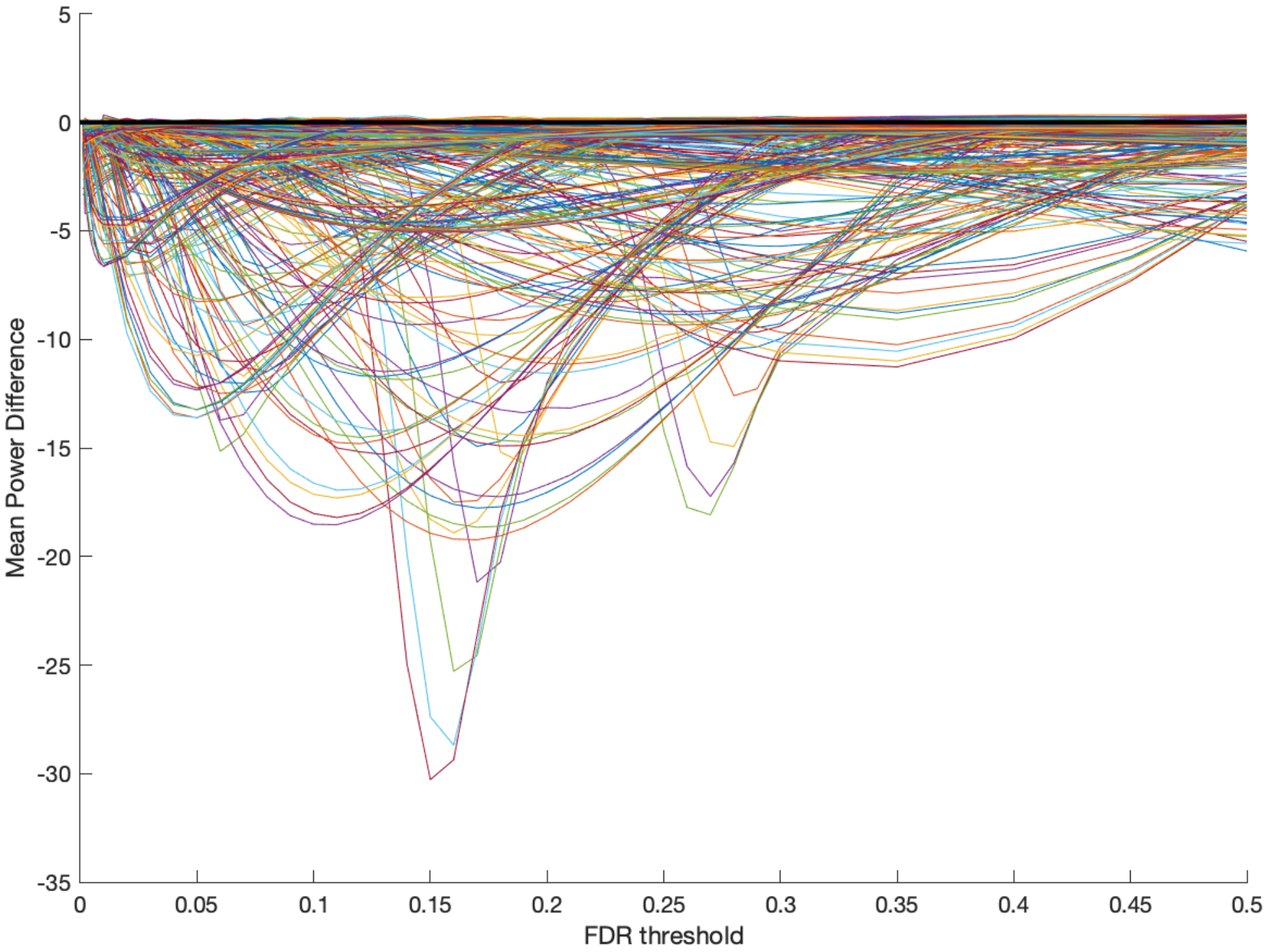}  & \includegraphics[width=3in]{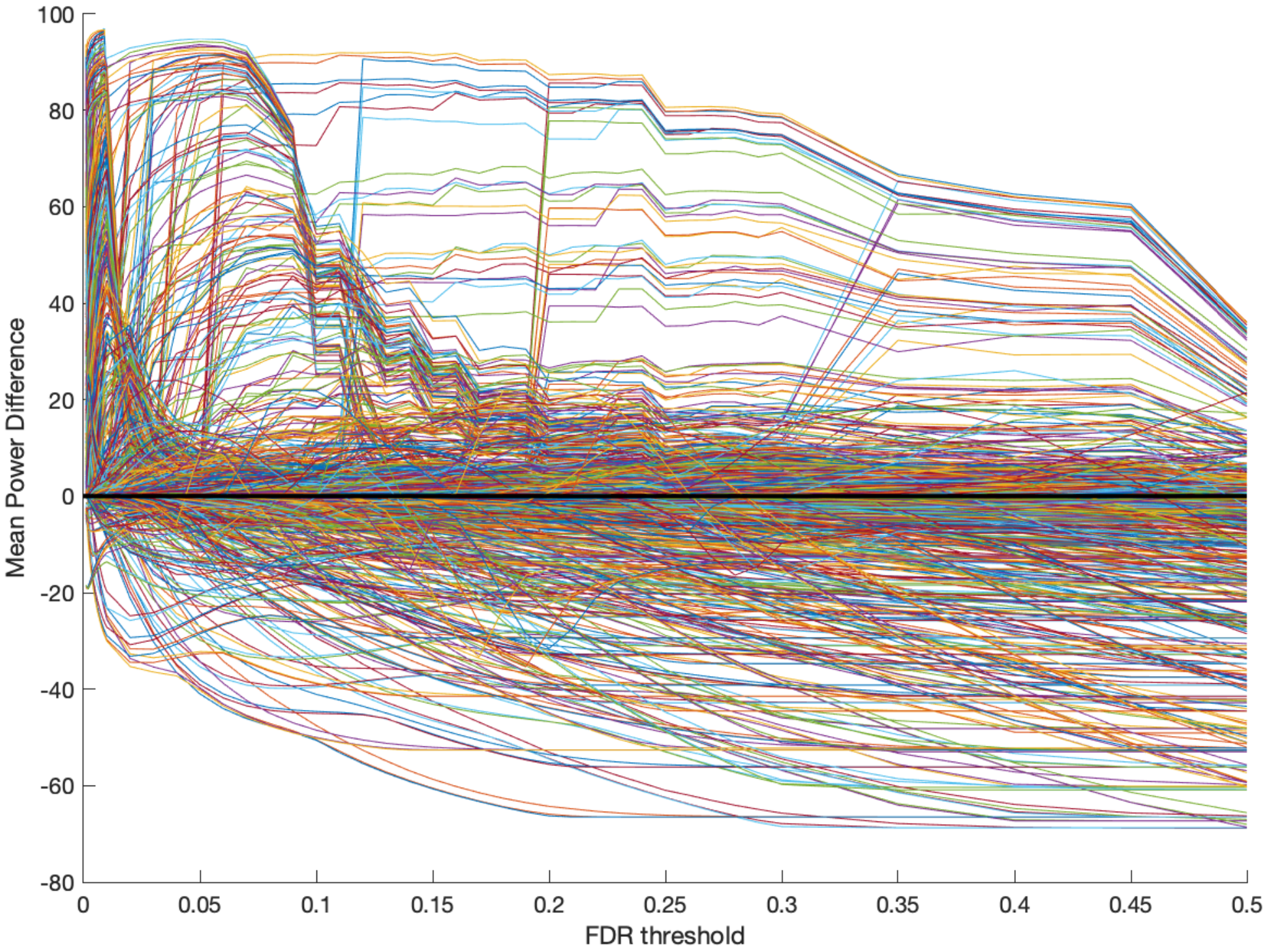} \tabularnewline
C: LF vs.~mirror & D: LF vs.~max \tabularnewline
\includegraphics[width=3in]{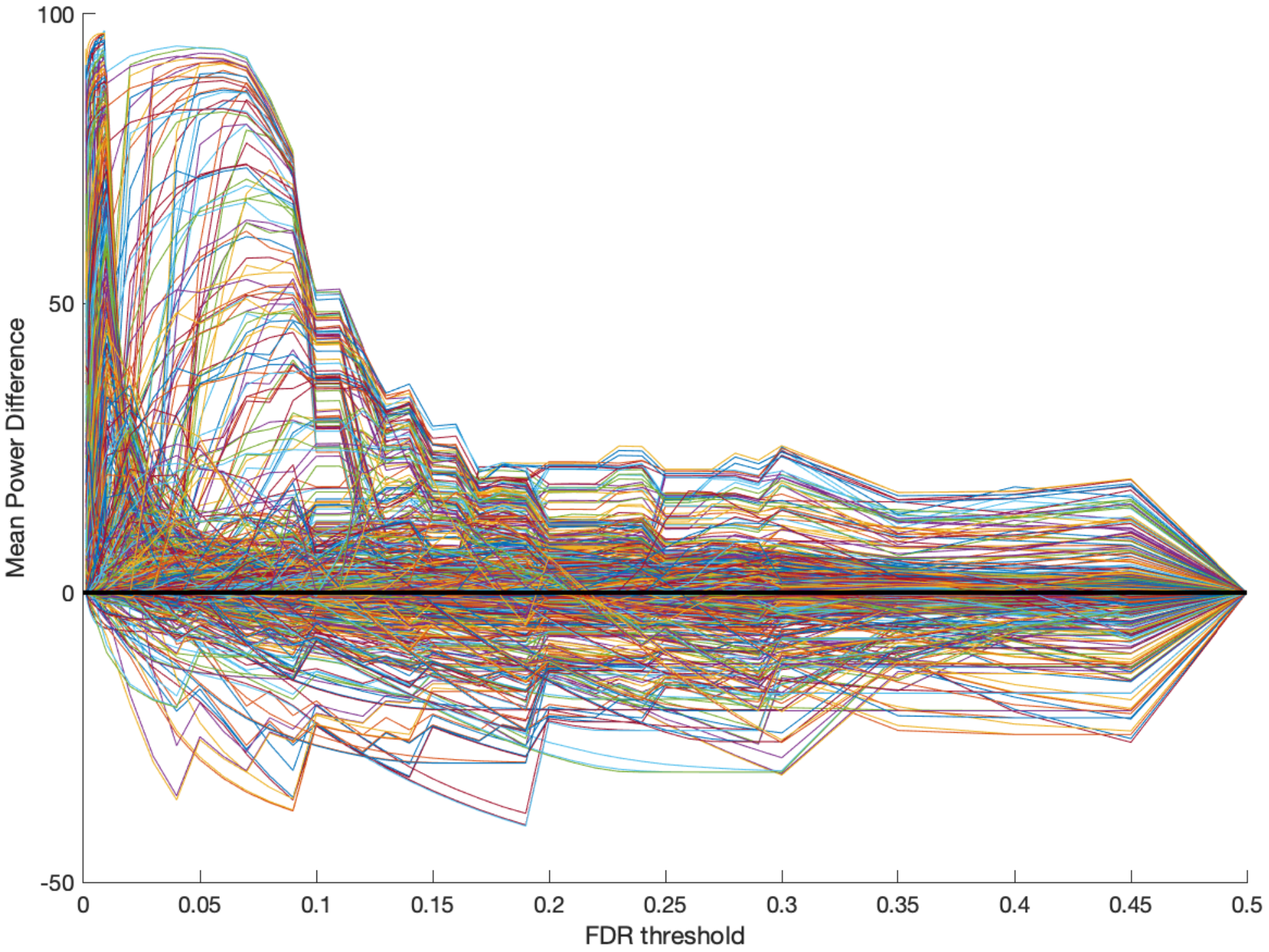}  & \includegraphics[width=3in]{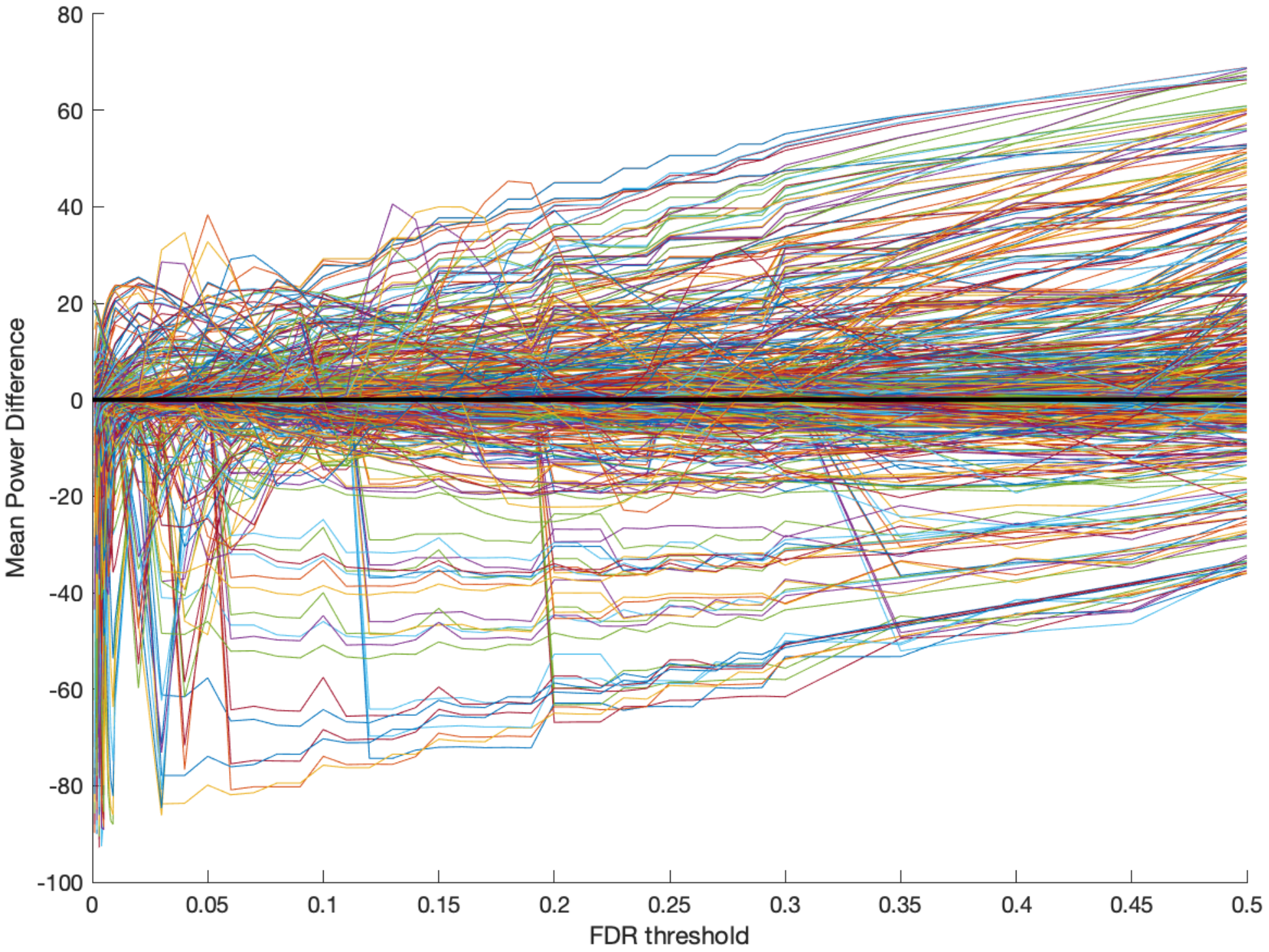} \tabularnewline
E: TDC vs.~mirror & F: TDC vs.~max \tabularnewline
\includegraphics[width=3in]{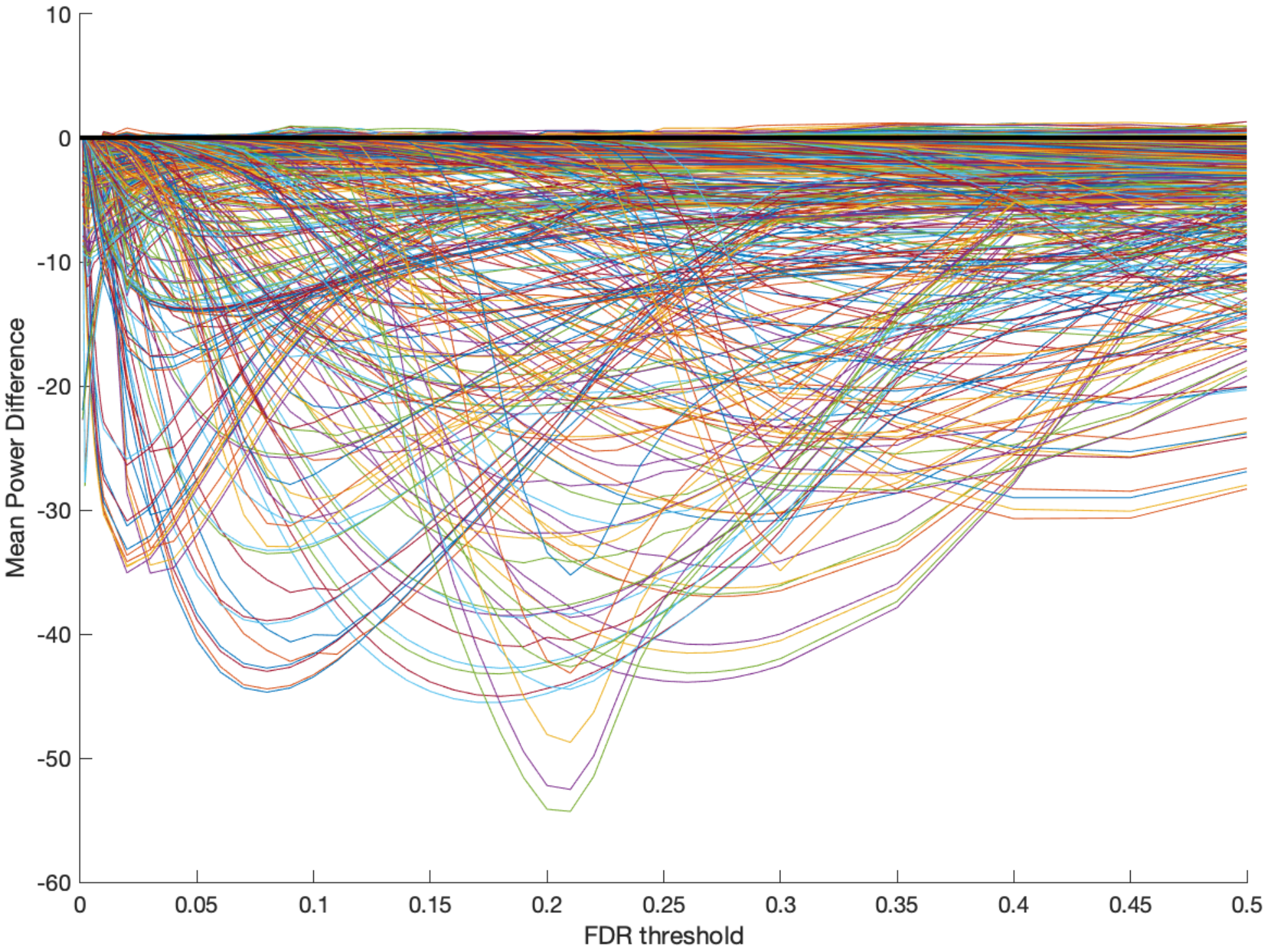} &
\includegraphics[width=3in]{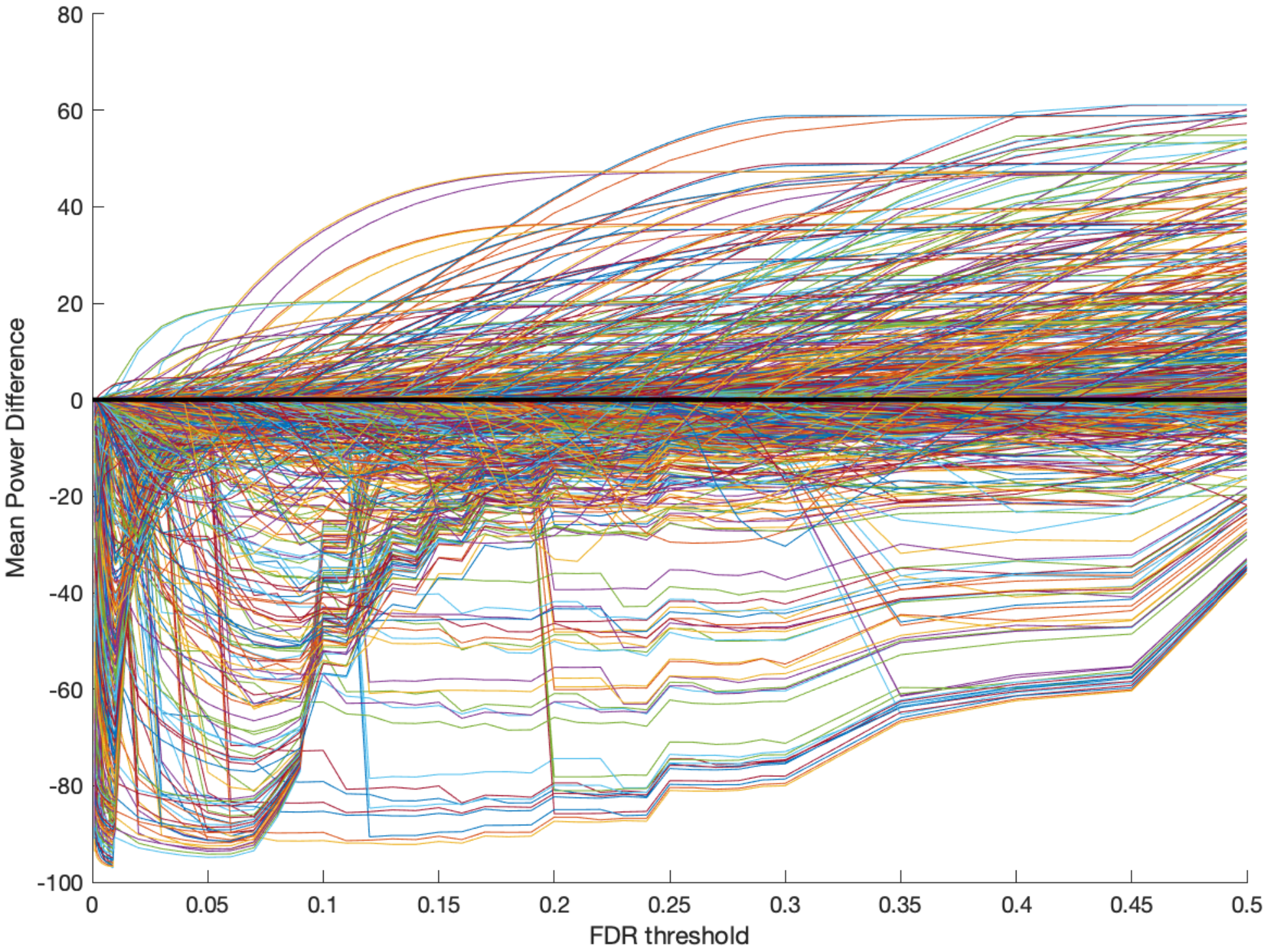} \tabularnewline
\end{tabular}\caption{\textbf{Power difference.} 
Each of the panels show the difference in the average power of the two compared methods
(positive values indicate the first method is more powerful). 
Each panel is made of 1200 curves, each of which shows the difference in power averaged
over the 10K (100K for panel A) sets. The sets were drawn simulating both calibrated and non-calibrated scores
using the experiment-specific parameter combination as described in \supsec~\ref{sec:Simulation-setup}.
The power of each method is the 10K-average (100K for panel A) percentage of false nulls
that are discovered at the given FDR threshold.
Note that figures' y-axes are on different scales.
\textbf{(A:)}  with $c=\lam=1/2$ the mirror map ($\vrp_m$) is consistently better than the randomized uniform map ($\vrp_u$); 100K draws for each of the 1200 parameter combinations. \textbf{(B-D:)}  for each of the mirror, max, and LF procedures there are numerous cases where its power
is significantly below one of the other methods. \textbf{(F:)} the mirror is consistently better than TDC. \textbf{(E:)} the max is not consistently better than TDC.
%plot_agg_all_multi_tests
\label{fig:power_initial}}
\end{figure}

\begin{figure}
\centering %
\begin{tabular}{ll}
A: FDS  & B: FDS$_1$ \tabularnewline
\includegraphics[width=3in]{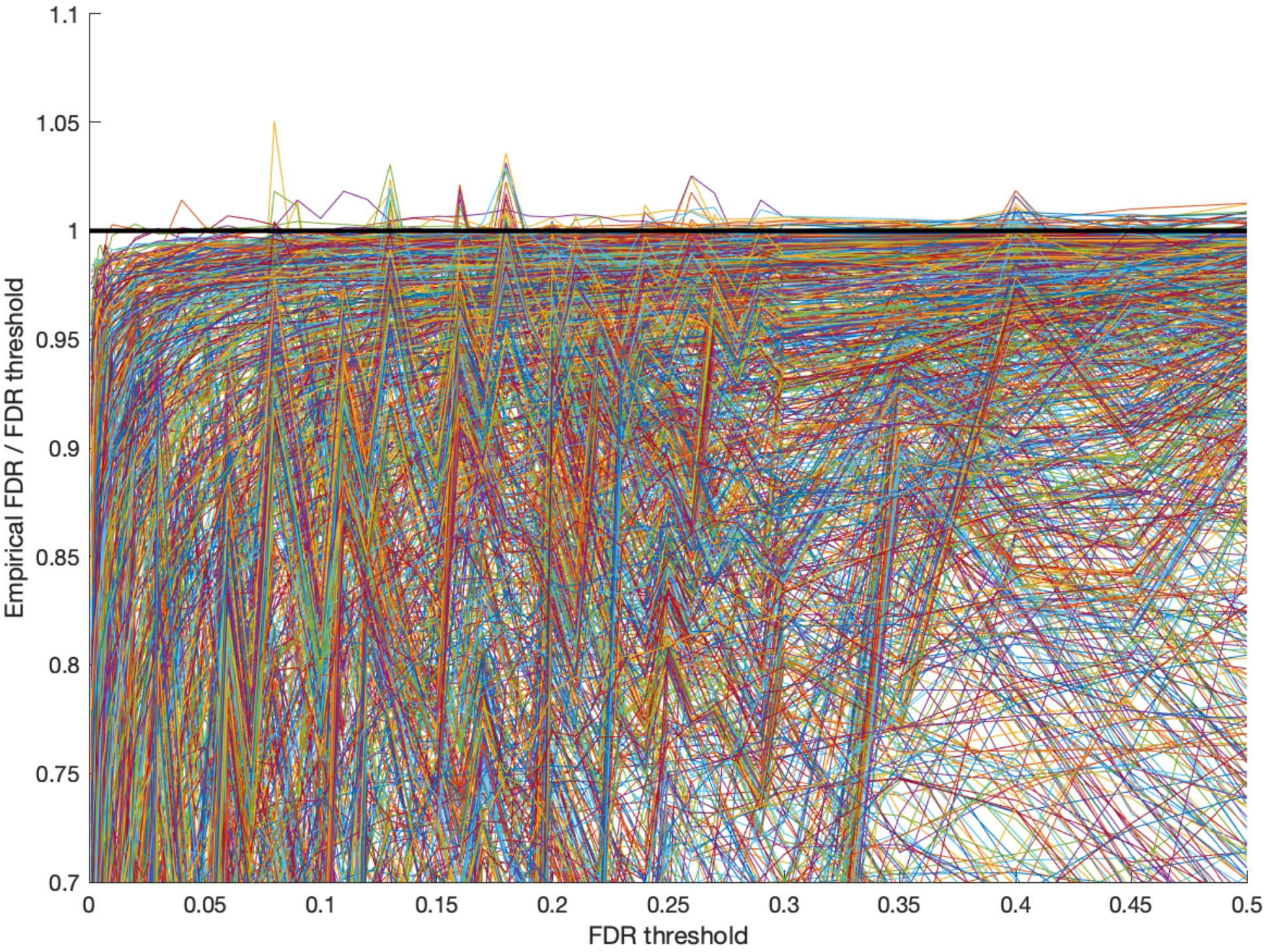}  & \includegraphics[width=3in]{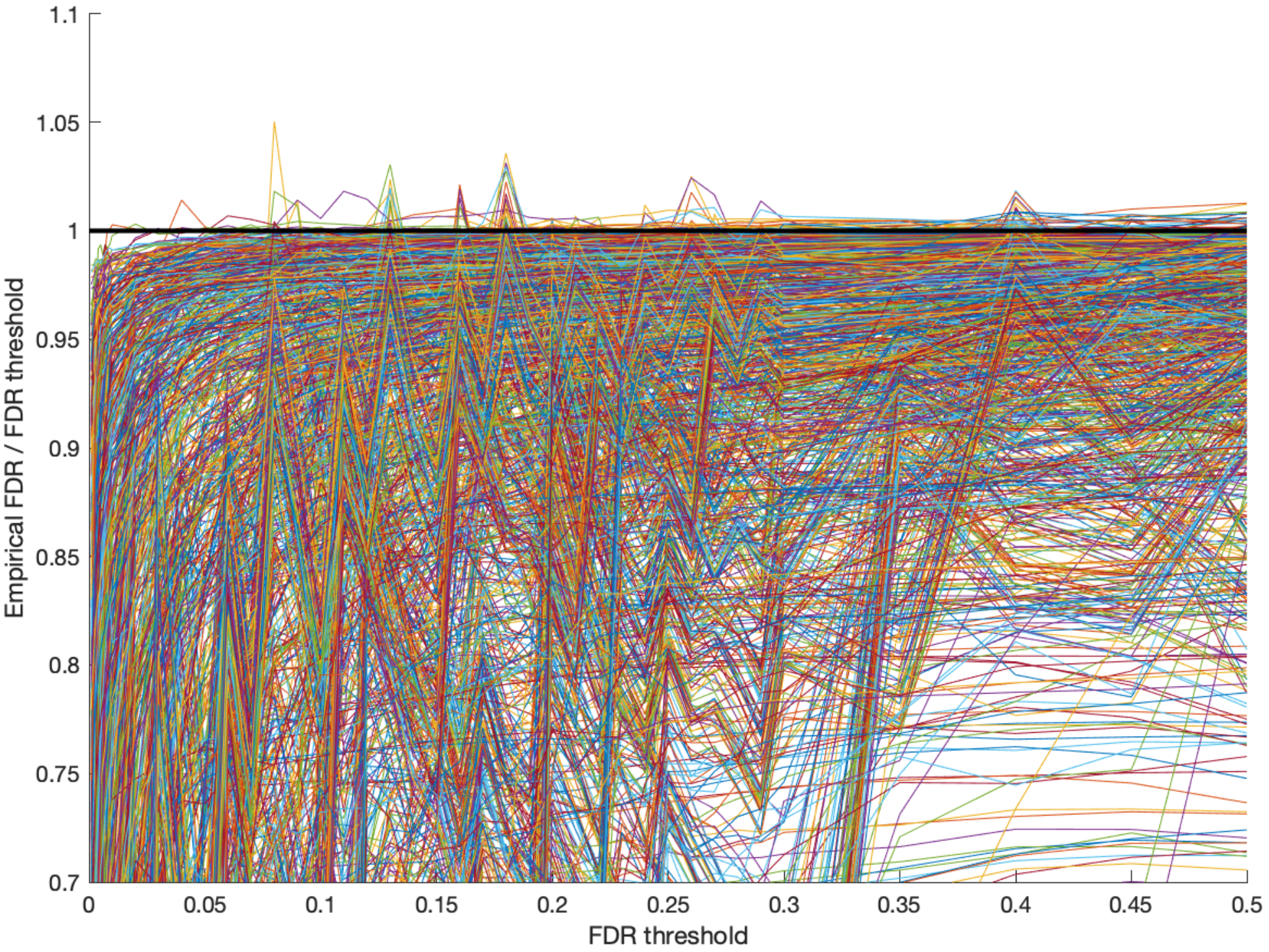} \tabularnewline
C: max & D: LBM \tabularnewline
\includegraphics[width=3in]{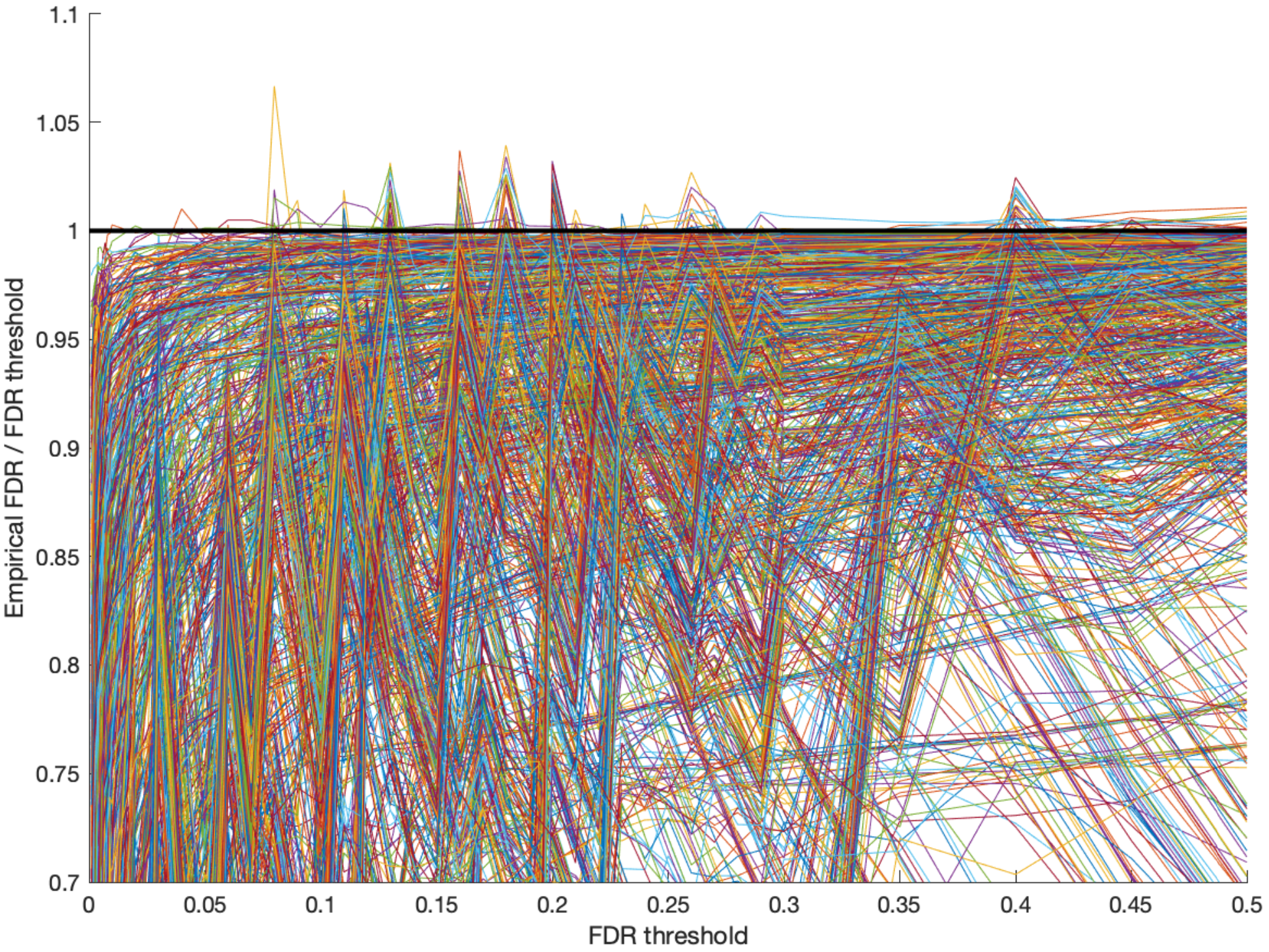}  & \includegraphics[width=3in]{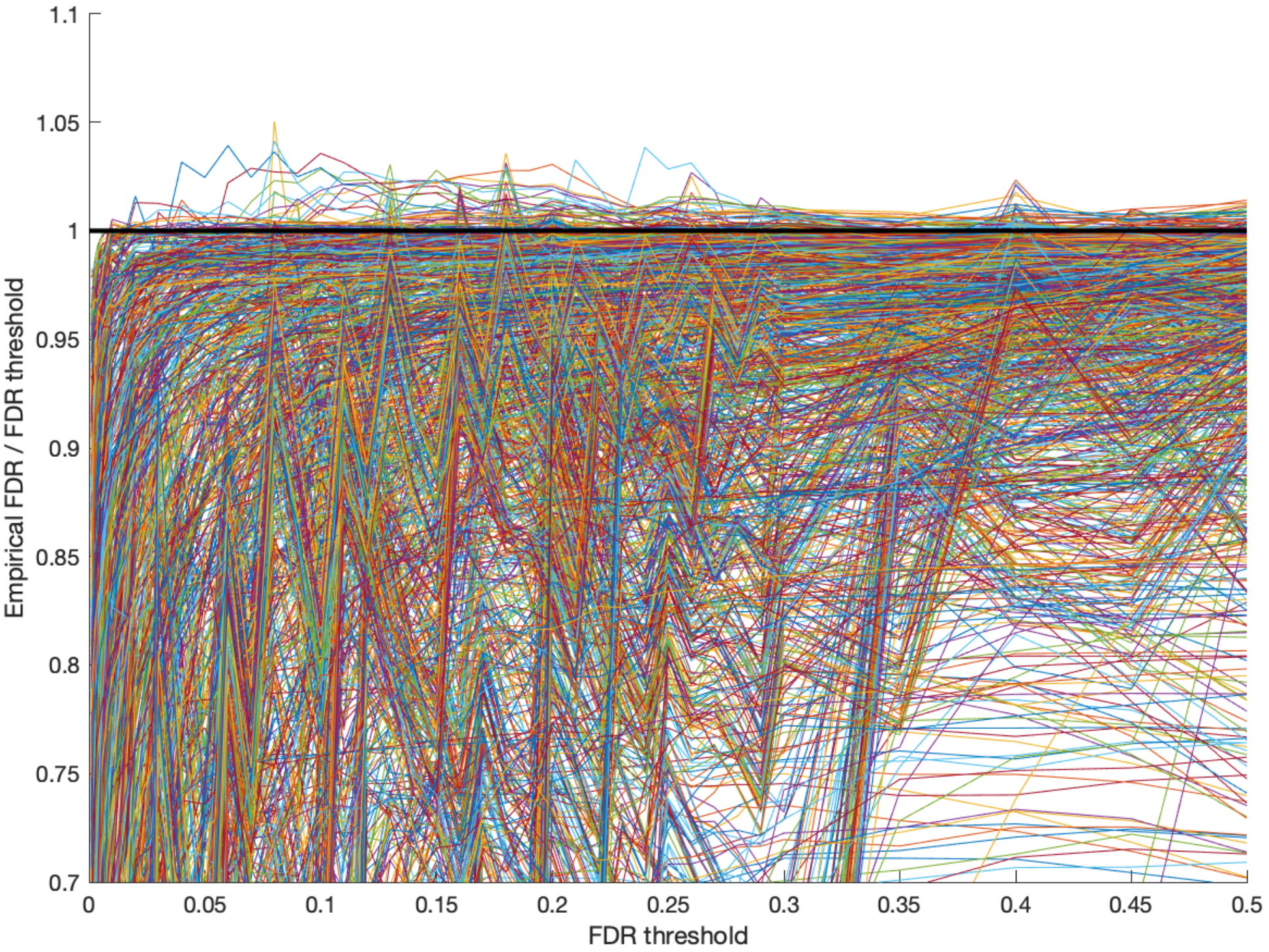} \tabularnewline
\end{tabular}\caption{\textbf{FDR control.} The panels show the ratio of the empirical FDR
to the selected FDR threshold, and each is made of 1200 curves, each
of which corresponds to one experiment involving 10K randomly drawn sets.
The empirical FDR is the 10K-sample average of the FDP of each
method's discovery list at the selected FDR threshold.
The 10K sets were drawn simulating both calibrated and non-calibrated scores
using the experiment-specific parameter combination as described in \supsec~\ref{sec:Simulation-setup}.
Notably there are relatively few cases where the empirical FDR is above the threshold (ratio $>1$),
and it is instructive to compare the ones observed in FDS, FDS$_1$ and LBM with those
we note in the max method.
Specifically, the overall maximal observed violation is 5.0\% for FDS, FDS$_1$ and LBM while it is 6.7\% for max.
Similarly, the number of curves (out of 1200) in which the maximal violation exceeds 2\% is 7 for FDS and FDS$_1$,
21 for LBM, and 24 for the max. Given that the max provably controls the FDR these simulations suggest that
FDS, FDS$_1$ and LBM essentially control the FDR as well.
% README_multidecoys.md (analysis), plot_agg_all_multi_tests
\label{fig:FDR-control}}
\end{figure}

\begin{figure}
\centering %
\begin{tabular}{ll}
A: max vs.~FDS$_1$  & B: LF vs.~FDS$_1$ \tabularnewline
\includegraphics[width=3in]{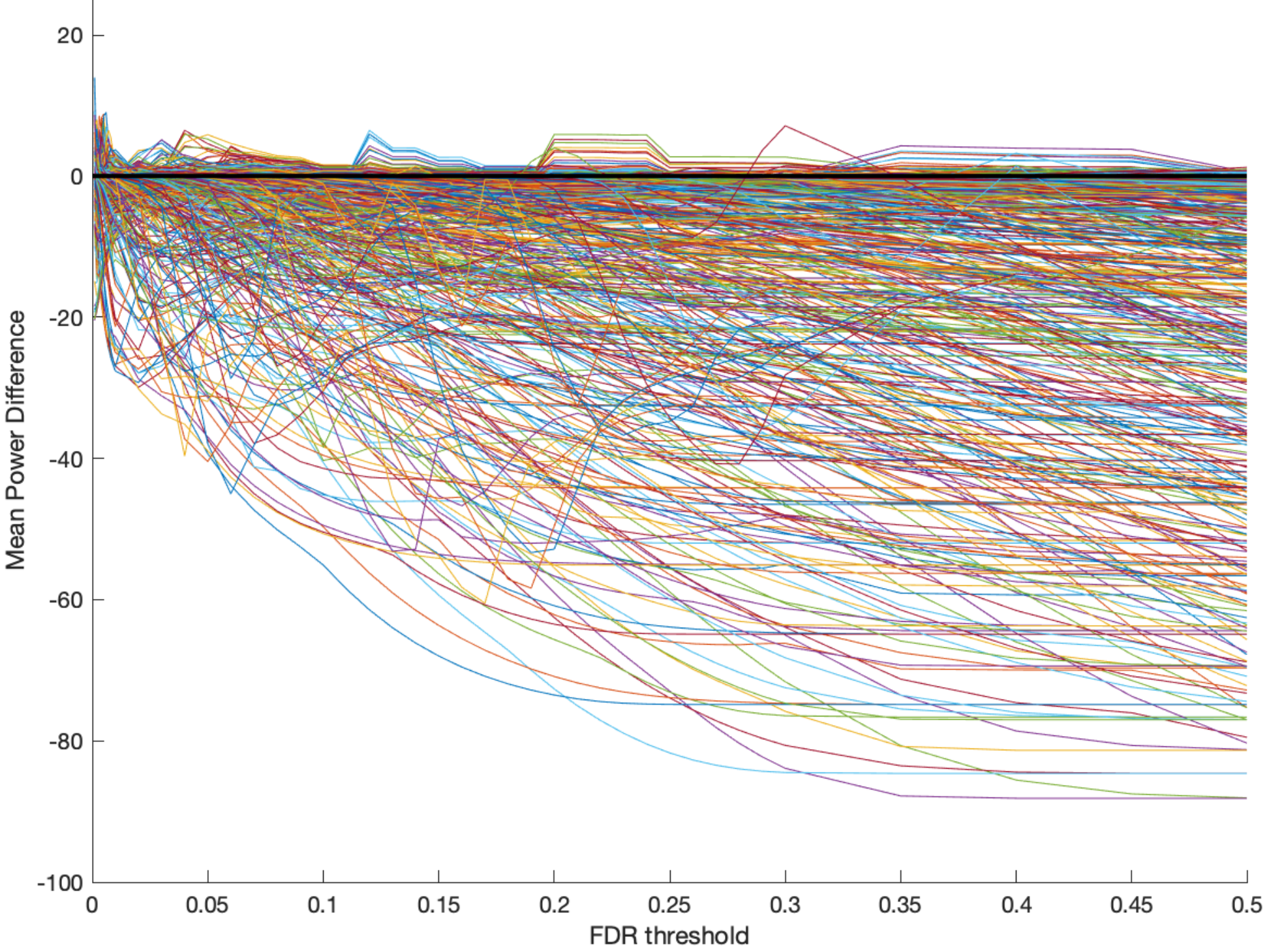}  & \includegraphics[width=3in]{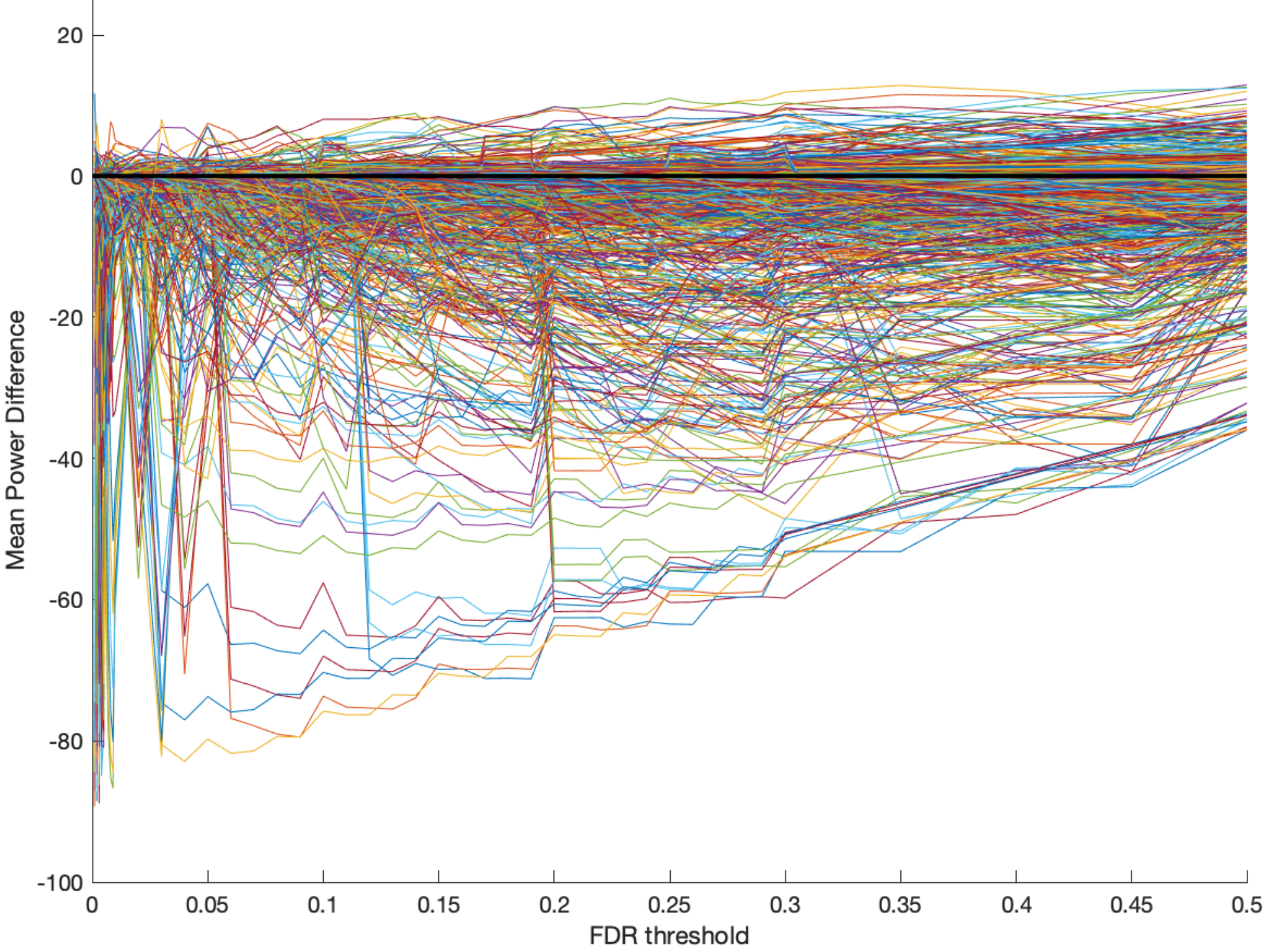} \tabularnewline
C: mirror vs.~FDS$_1$ & D: FDS vs.~FDS$_1$ \tabularnewline
\includegraphics[width=3in]{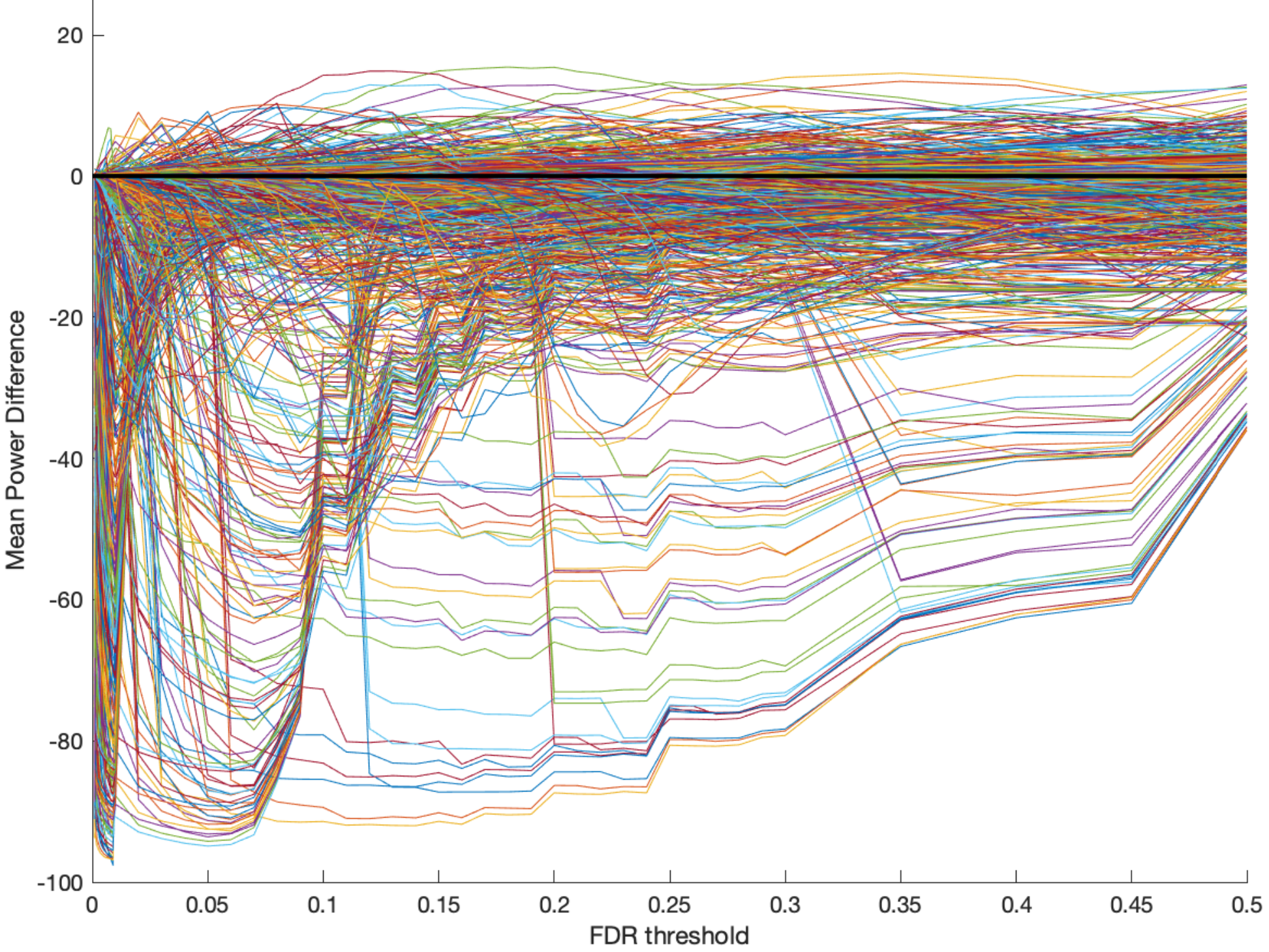}  & \includegraphics[width=3in]{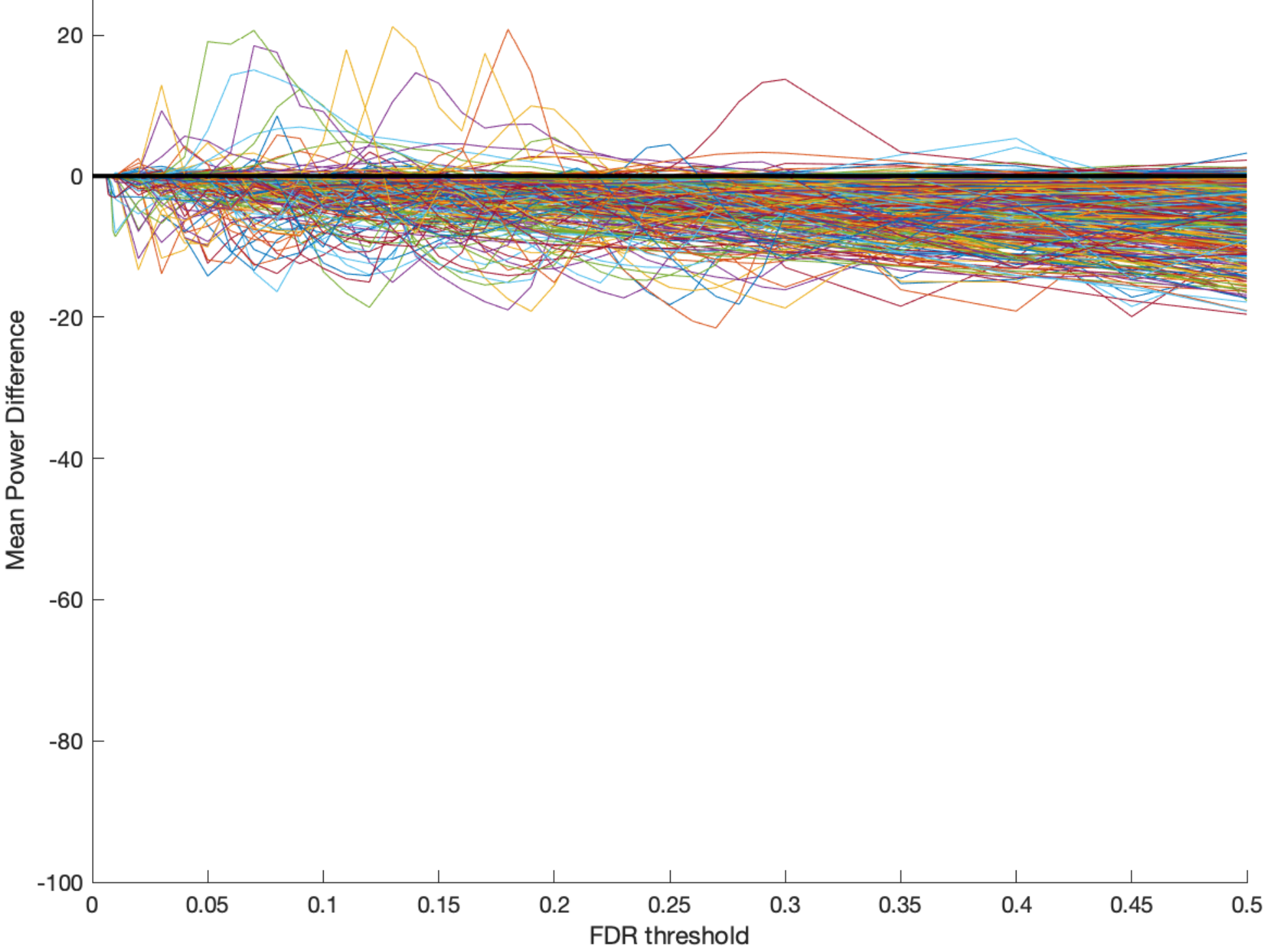} \tabularnewline
E: TDC vs.~FDS$_1$ & F: LBM vs.~FDS$_1$ \tabularnewline
\includegraphics[width=3in]{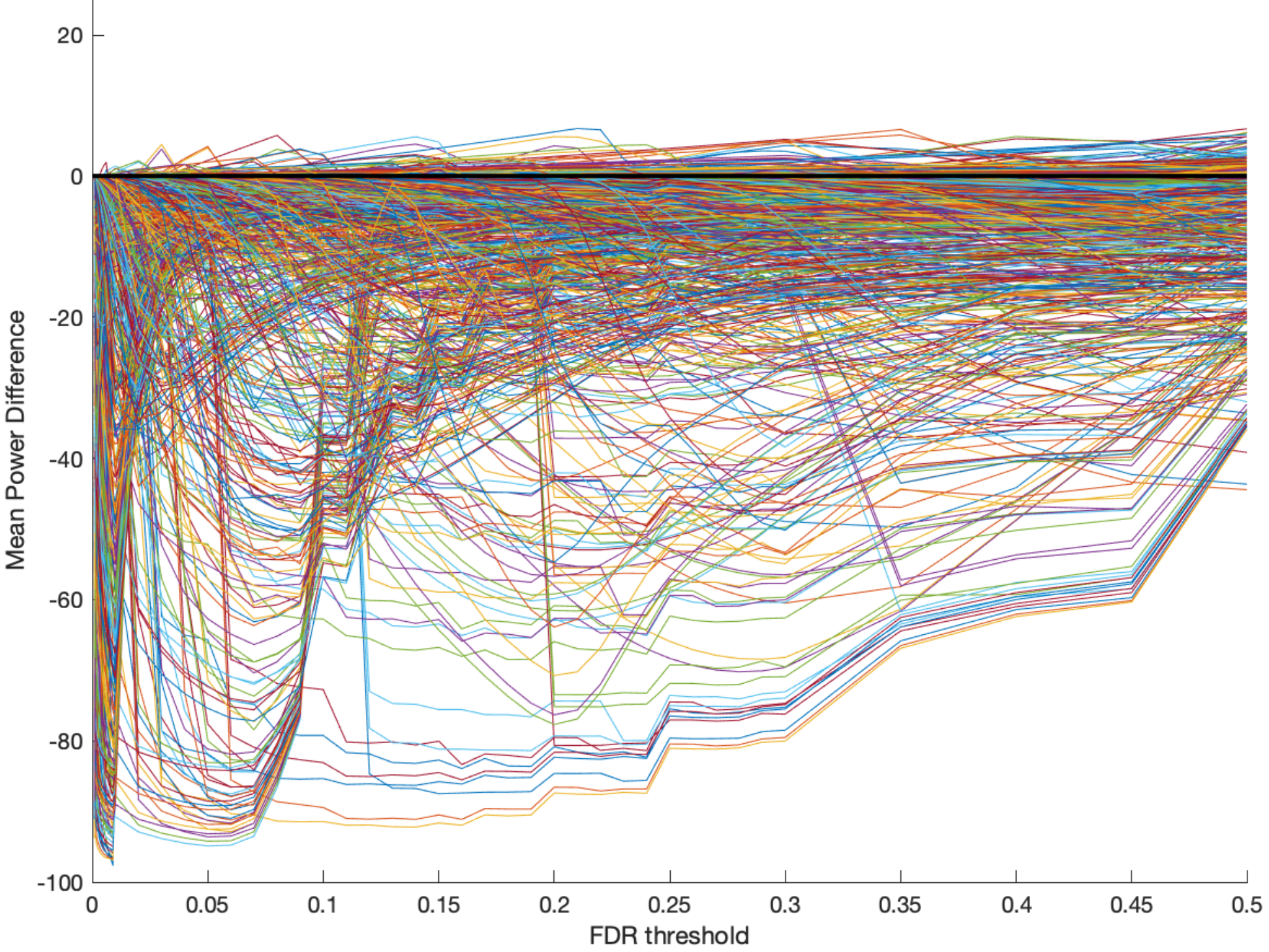} &
\includegraphics[width=3in]{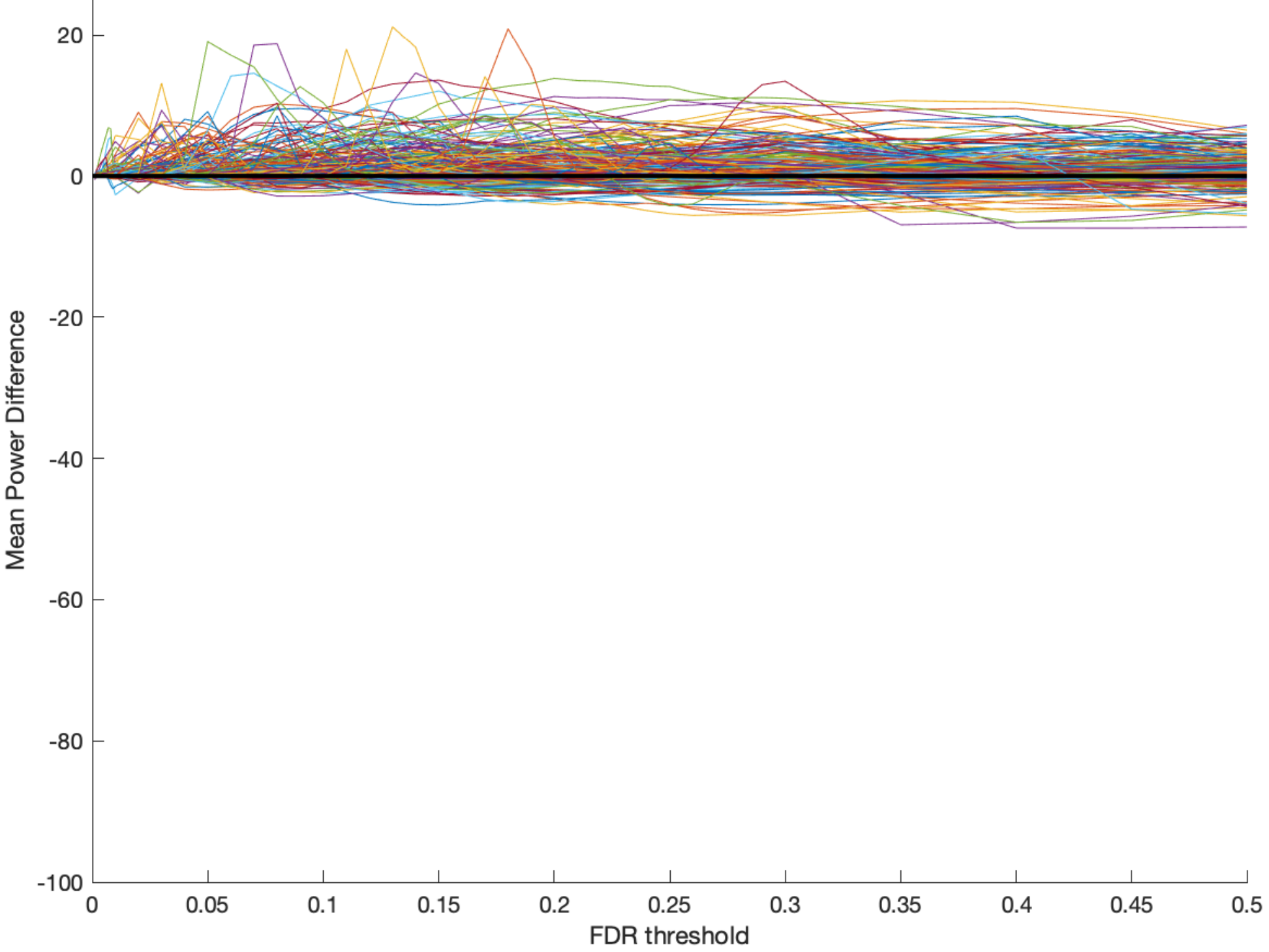} \tabularnewline
\end{tabular}\caption{\textbf{Power relative to FDS$_1$.} 
 Each of the panels show the difference between the average power of the noted method and FDS$_1$
(negative values indicate FDS$_1$ is more powerful). 
Each panel is made of 1200 curves, each of which shows the difference in power averaged
over the 10K sets. The sets were drawn simulating both calibrated and non-calibrated scores
using the experiment-specific parameter combination as described in \supsec~\ref{sec:Simulation-setup}.
The power of each method is the 10K-average percentage of false nulls
that are discovered at the given FDR threshold.
\textbf{(A-E:)} FDS$_1$ arguably offers the best compromise among the multi-decoy procedures of mirror, max, LF, and FDS, as well
as the single decoy TDC. \textbf{(F:)} LBM seems to offer an overall more powerful procedure.
%plot_agg_all_multi_tests; max_FDR_plot = 0.5; forced_ylim = [-100 25];
\label{fig:power_FDS1}}
\end{figure}

\begin{figure}
\centering %
\begin{tabular}{ll}
A: max vs.~LBM  & B: LF vs.~LBM \tabularnewline
\includegraphics[width=3in]{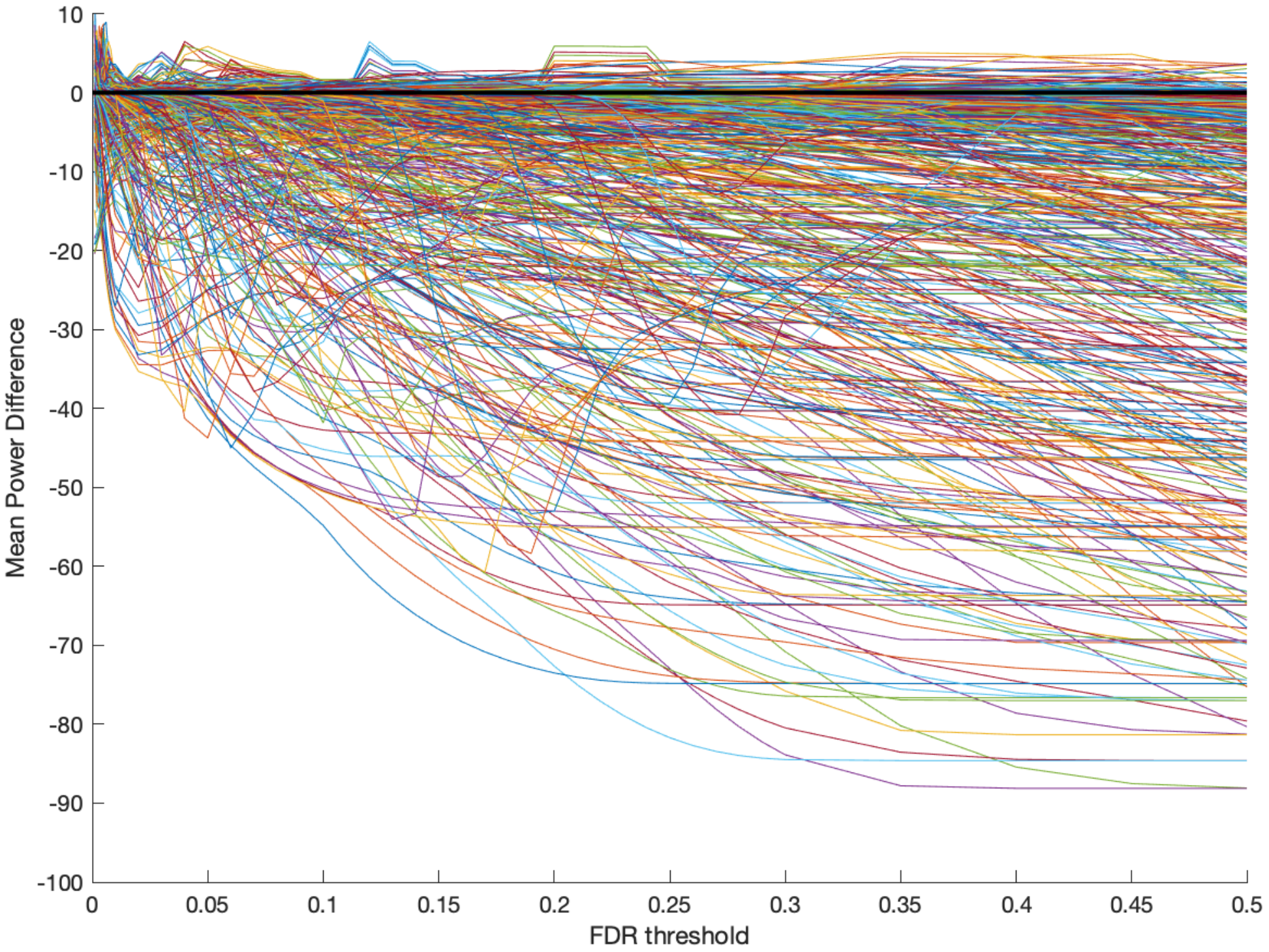}  & \includegraphics[width=3in]{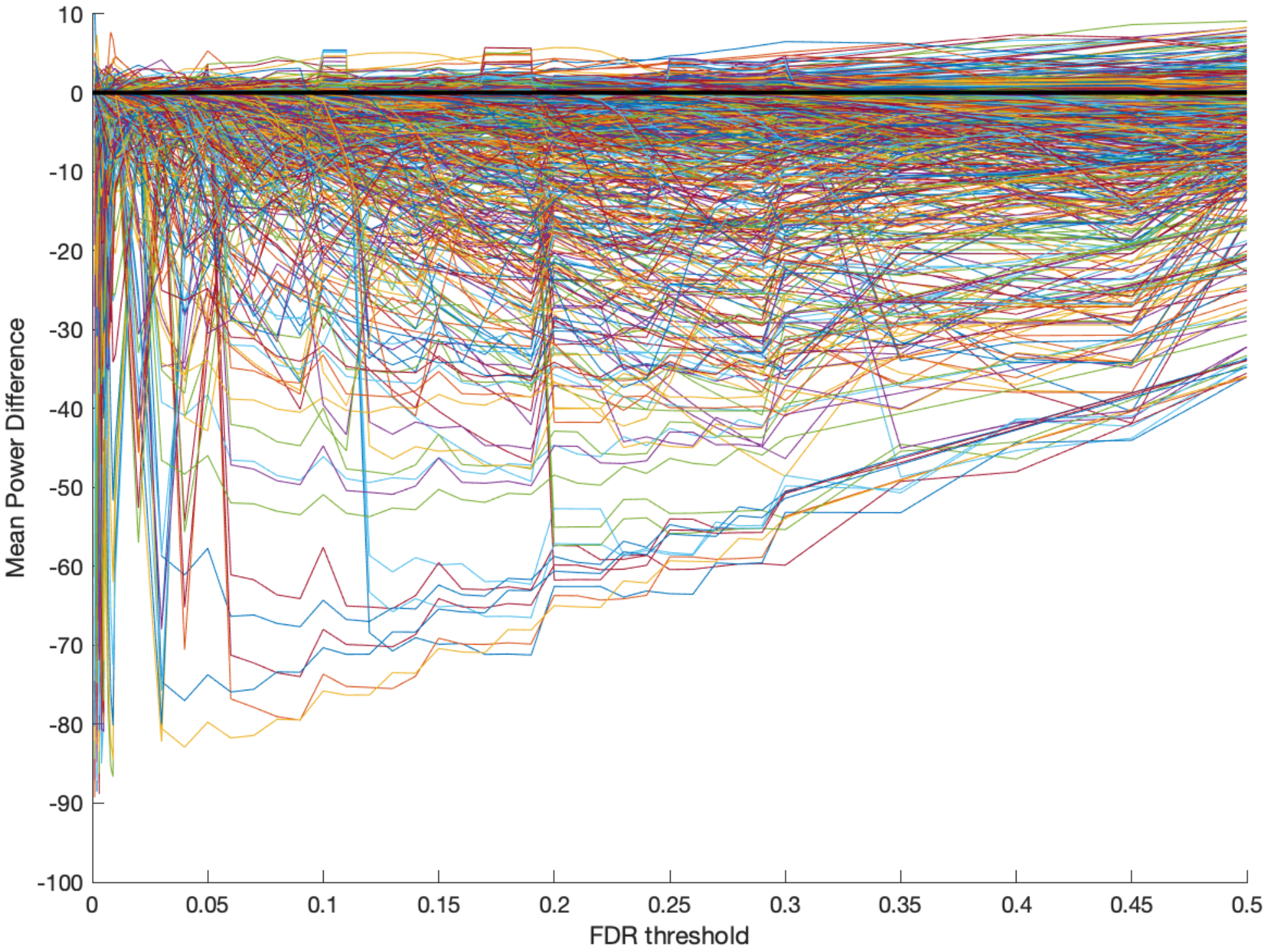} \tabularnewline
C: mirror vs.~LBM & D: FDS vs.~LBM \tabularnewline
\includegraphics[width=3in]{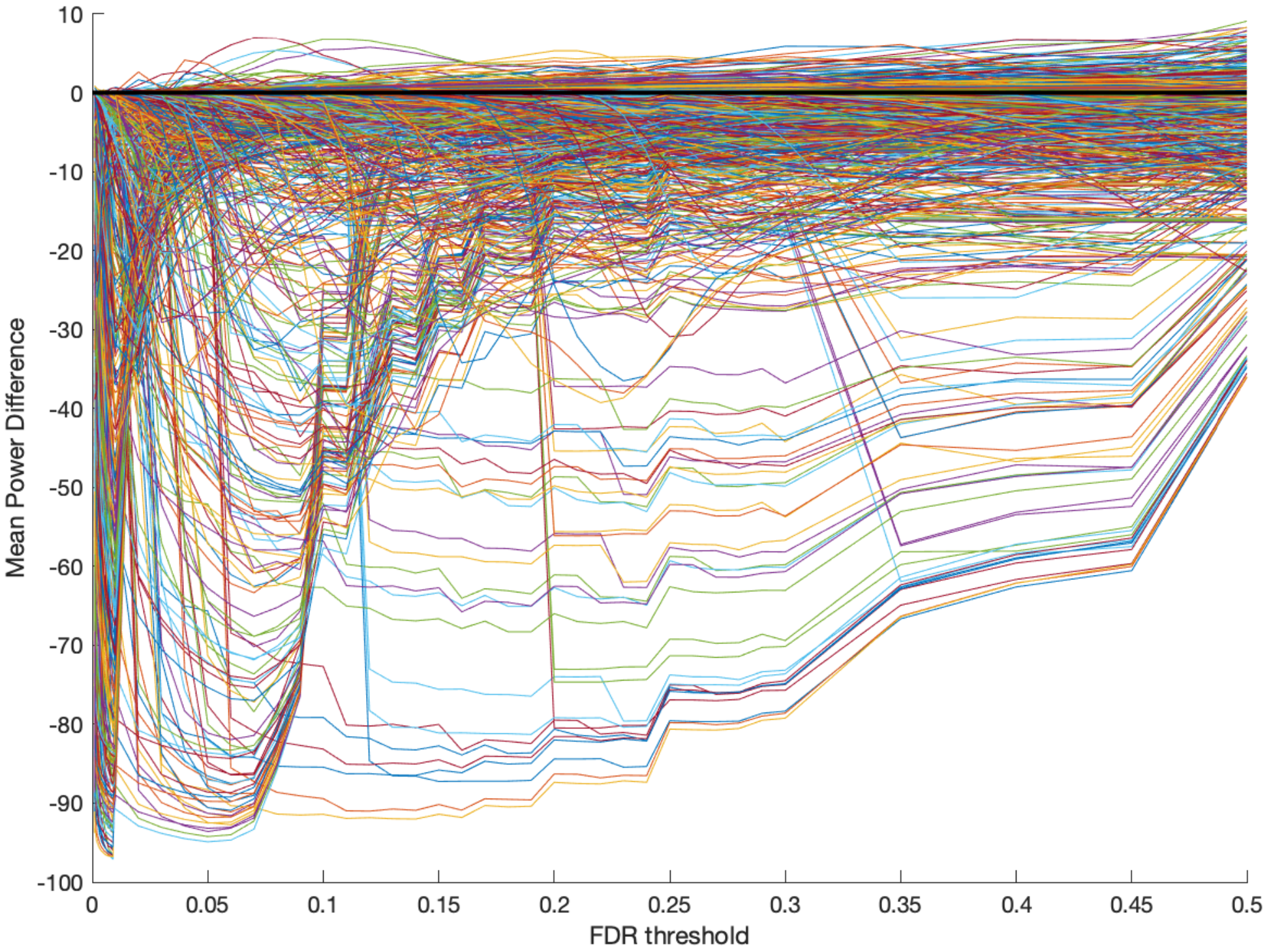}  & \includegraphics[width=3in]{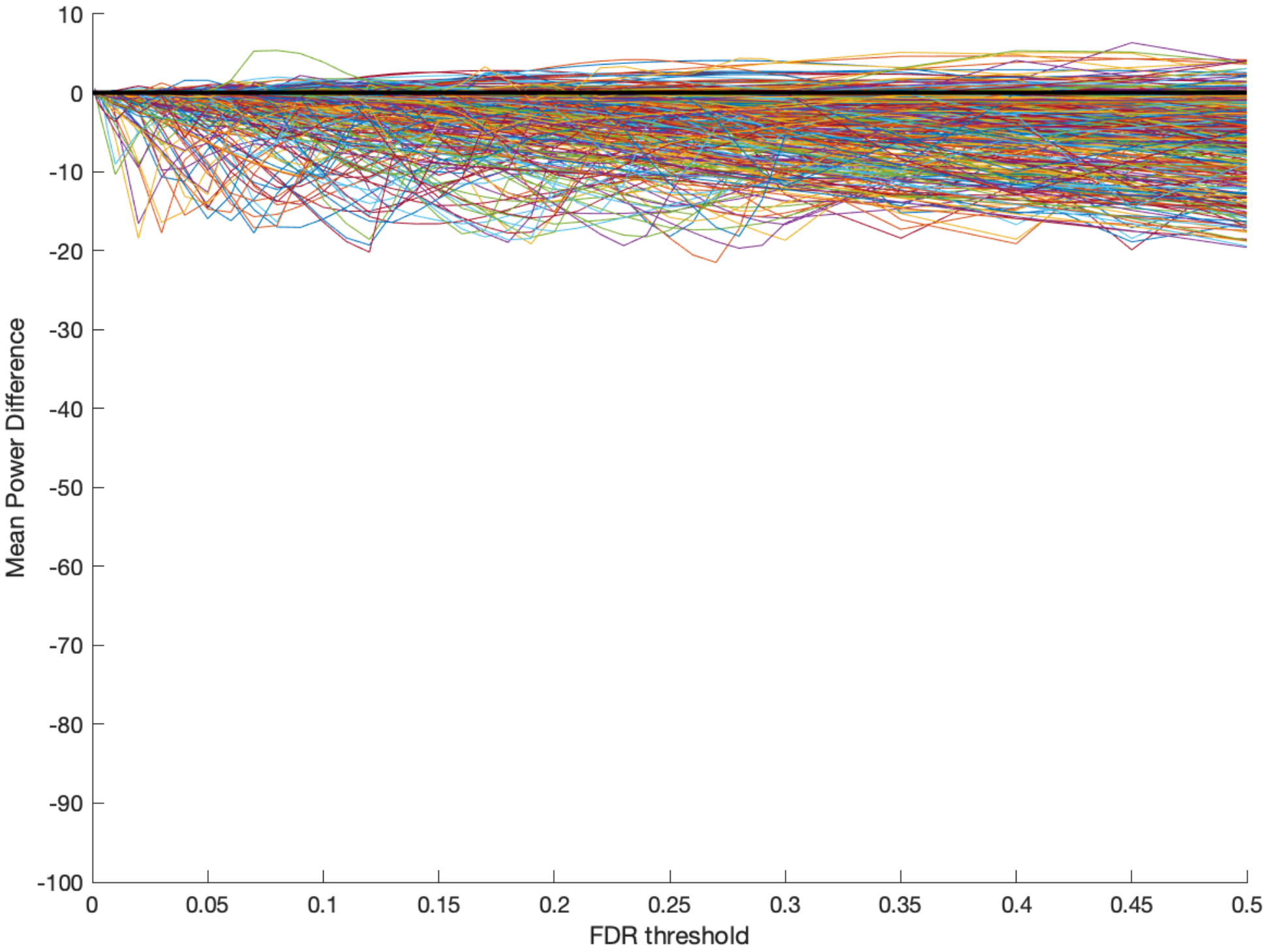} \tabularnewline
E: TDC vs.~LBM & F: aTDC vs.~LBM \tabularnewline
\includegraphics[width=3in]{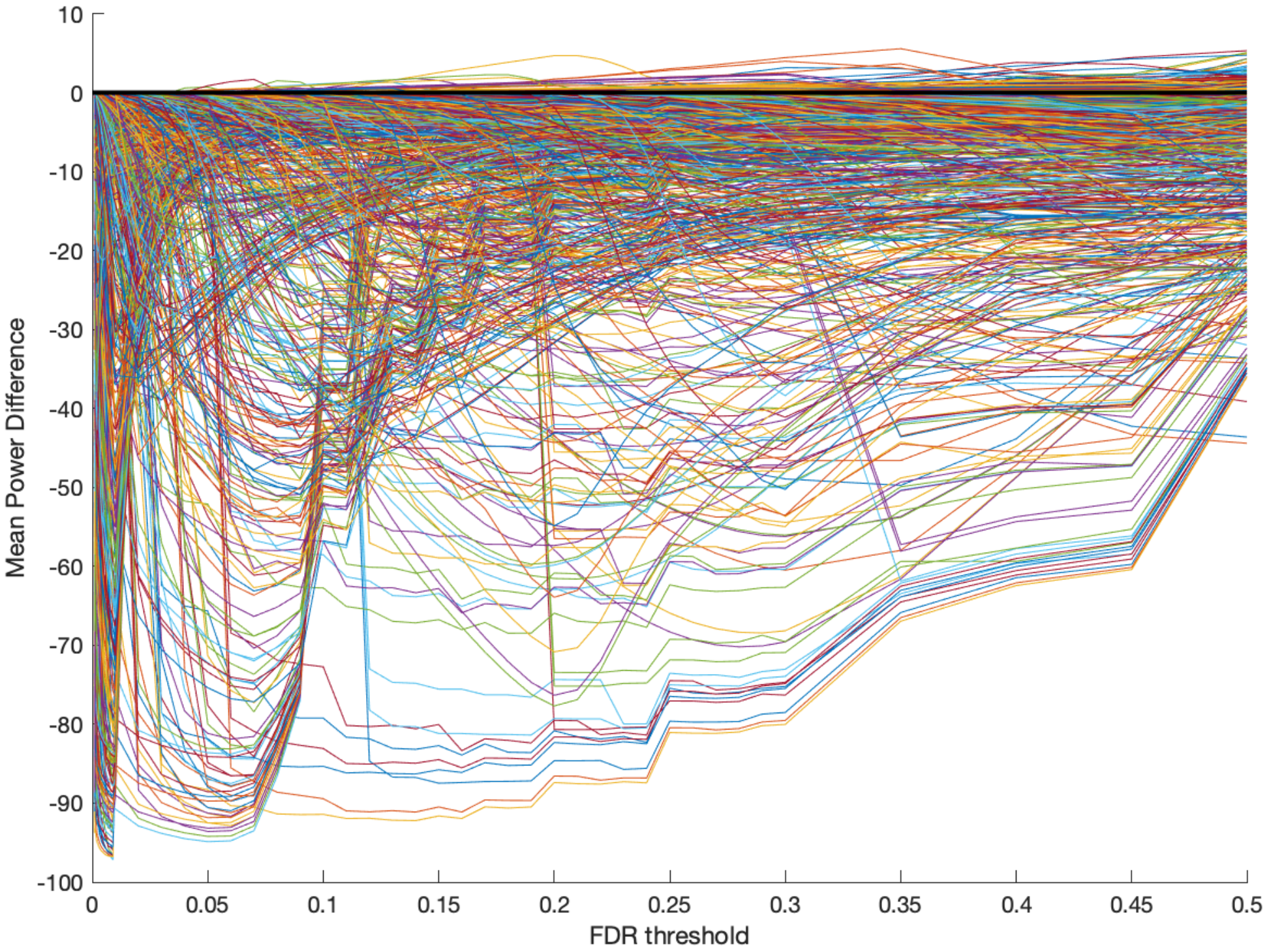} &
\includegraphics[width=3in]{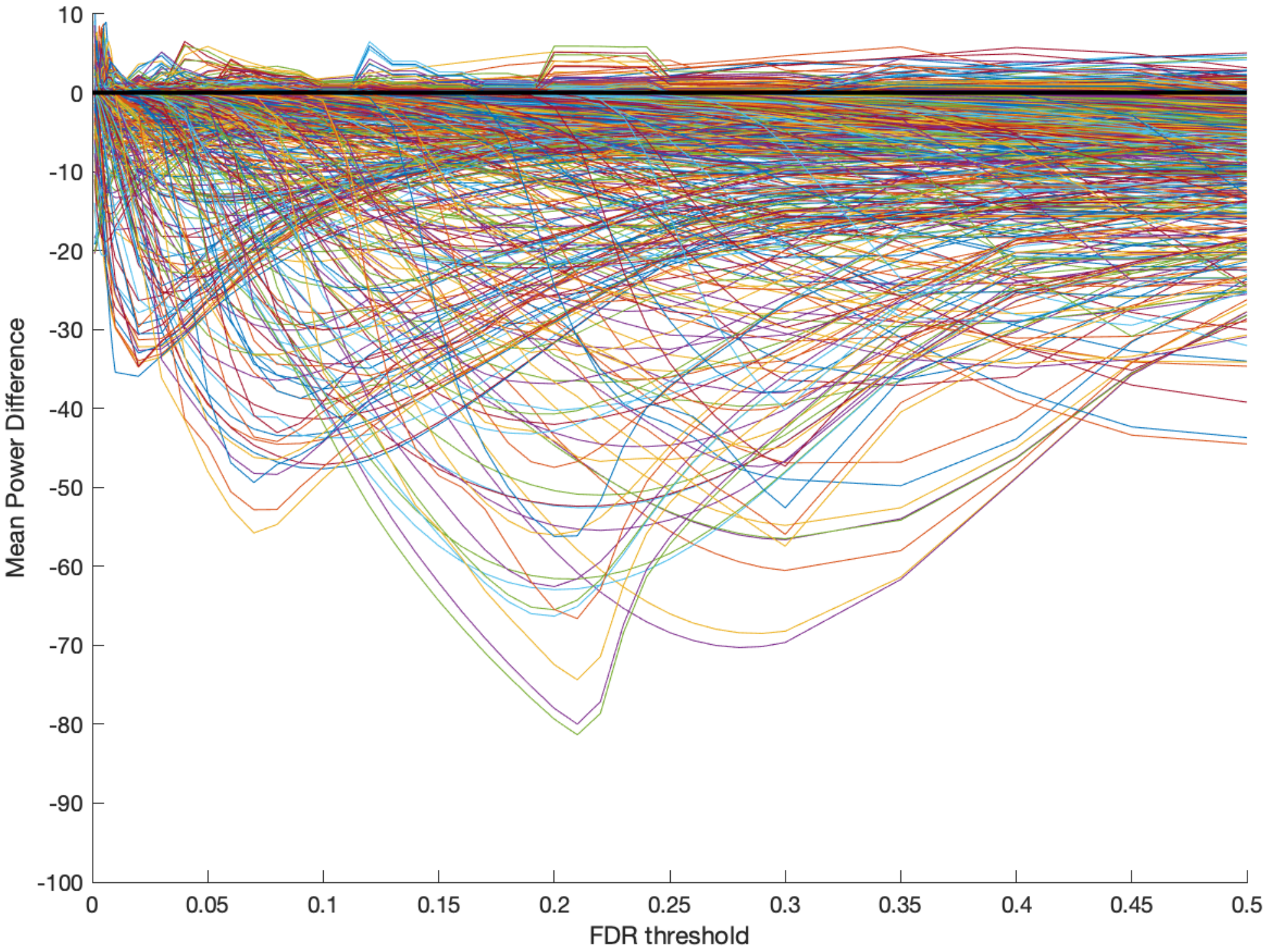} \tabularnewline
\end{tabular}\caption{\textbf{Power relative to LBM.} 
 Each of the panels shows the difference between the average power of the noted method and LBM
(negative values indicate LBM is more powerful). For LBM vs.~FDS$_1$ see panel F of \supfig~\ref{fig:power_FDS1}.
Each panel is made of 1200 curves, each of which shows the difference in power averaged
over the 10K sets. The sets were drawn simulating both calibrated and non-calibrated scores
using the experiment-specific parameter combination as described in \supsec~\ref{sec:Simulation-setup}.
The power of each method is the 10K-average percentage of false nulls
that are discovered at the given FDR threshold.
LBM seems to offer an overall optimal procedure among the competition-based procedures considered here.
%plot_agg_all_multi_tests; max_FDR_plot = 0.5; forced_ylim = [-100 10];
\label{fig:power_LBM}}
\end{figure}

\clearpage

% \section{*** Leftovers ***}
%
% For Discussion:
%
%
% Generally, in designing our procedures we need to maintain the NLCP while trying to maximize the power.
%
%
%
% For an even $d$ the mirror uses $i_c=d/2$ and $i_\lam=d/2+1$.

\end{document}